\newlength{\frase}
\newcommand{\red}{\textcolor{red}}
\newcommand{\blue}{\textcolor{blue}}
\definecolor{light-gray}{gray}{0.95}
\theoremstyle{plain}
     \newtheorem{theorem}{Theorem}[section]
     \newtheorem{lemma}[theorem]{Lemma}
     \newtheorem{corollary}[theorem]{Corollary}
     \newtheorem{proposition}[theorem]{Proposition}
\theoremstyle{definition}
     \newtheorem{definition}[theorem]{Definition}
\theoremstyle{remark}
     \newtheorem{remark}{Remark}[section]
     \newtheorem{notation}{Notation}[section]
\DeclareSymbolFont{FormalScript}{U}{rsfs}{m}{n}
\DeclareSymbolFontAlphabet\mathscript{FormalScript}
\mathchardef\semicolon="603B % the original
\mathchardef\gt="313E
\mathchardef\lt="313C
\def\emptyset{\varnothing}
\def\epsilon{\varepsilon}
\def\theta{\vartheta}
\newdimen\proofrulebreadth \proofrulebreadth=.05em
\newdimen\proofdotseparation \proofdotseparation=1.25ex
\newdimen\proofrulebaseline \proofrulebaseline=2ex
\let\then\relax
\def\hfi{\hskip0pt plus.0001fil}
\mathchardef\squigto="3A3B
\newif\ifinsideprooftree\insideprooftreefalse
\newif\ifonleftofproofrule\onleftofproofrulefalse
\newif\ifproofdots\proofdotsfalse
\newif\ifdoubleproof\doubleprooffalse
\let\wereinproofbit\relax
\newdimen\shortenproofleft
\newdimen\shortenproofright
\newdimen\proofbelowshift
\newbox\proofabove
\newbox\proofbelow
\newbox\proofrulename
\def\shiftproofbelow{\let\next\relax\afterassignment\setshiftproofbelow\dimen0 }
\def\shiftproofbelowneg{\def\next{\multiply\dimen0 by-1 }%
\afterassignment\setshiftproofbelow\dimen0 }
\def\setshiftproofbelow{\next\proofbelowshift=\dimen0 }
\def\setproofrulebreadth{\proofrulebreadth}
\def\prooftree{% NESTED ZERO (\ifonleftofproofrule)
%
% first find out whether we're at the left-hand end of a proof rule
\ifnum	\lastpenalty=1
\then	\unpenalty
\else	\onleftofproofrulefalse
\fi
%
% some space on left (except if we're on left, and no infinity for outermost)
\ifonleftofproofrule
\else	\ifinsideprooftree
	\then	\hskip.5em plus1fil
	\fi
\fi
%
% begin our proof tree environment
\bgroup% NESTED ONE (\proofbelow, \proofrulename, \proofabove,
%		\shortenproofleft, \shortenproofright, \proofrulebreadth)
\setbox\proofbelow=\hbox{}\setbox\proofrulename=\hbox{}%
\let\justifies\proofover\let\leadsto\proofoverdots\let\Justifies\proofoverdbl
\let\using\proofusing\let\[\prooftree
\ifinsideprooftree\let\]\endprooftree\fi
\proofdotsfalse\doubleprooffalse
\let\thickness\setproofrulebreadth
\let\shiftright\shiftproofbelow \let\shift\shiftproofbelow
\let\shiftleft\shiftproofbelowneg
\let\ifwasinsideprooftree\ifinsideprooftree
\insideprooftreetrue
%
% now begin to set the top of the rule (definitions local to it)
\setbox\proofabove=\hbox\bgroup$\displaystyle % NESTED TWO
\let\wereinproofbit\prooftree
%
% these local variables will be copied out:
\shortenproofleft=0pt \shortenproofright=0pt \proofbelowshift=0pt
%
% flags to enable inner proof tree to detect if on left:
\onleftofproofruletrue\penalty1
}
\def\eproofbit{% NESTED TWO
%
% various hacks applicable to hypothesis list 
\ifx	\wereinproofbit\prooftree
\then	\ifcase	\lastpenalty
	\then	\shortenproofright=0pt	% 0: some other object, no indentation
	\or	\unpenalty\hfil		% 1: empty hypotheses, just glue
	\or	\unpenalty\unskip	% 2: just had a tree, remove glue
	\else	\shortenproofright=0pt	% eh?
	\fi
\fi
%
% pass out crucial values from scope
\global\dimen0=\shortenproofleft
\global\dimen1=\shortenproofright
\global\dimen2=\proofrulebreadth
\global\dimen3=\proofbelowshift
\global\dimen4=\proofdotseparation
\global\mscount=\proofdotnumber
%
% end the box
$\egroup  % NESTED ONE
%
% restore the values
\shortenproofleft=\dimen0
\shortenproofright=\dimen1
\proofrulebreadth=\dimen2
\proofbelowshift=\dimen3
\proofdotseparation=\dimen4
\proofdotnumber=\mscount
}
\def\proofover{% NESTED TWO
\eproofbit % NESTED ONE
\setbox\proofbelow=\hbox\bgroup % NESTED TWO
\let\wereinproofbit\proofover
$\displaystyle
}%
\def\proofoverdbl{% NESTED TWO
\eproofbit % NESTED ONE
\doubleprooftrue
\setbox\proofbelow=\hbox\bgroup % NESTED TWO
\let\wereinproofbit\proofoverdbl
$\displaystyle
}%
\def\proofoverdots{% NESTED TWO
\eproofbit % NESTED ONE
\proofdotstrue
\setbox\proofbelow=\hbox\bgroup % NESTED TWO
\let\wereinproofbit\proofoverdots
$\displaystyle
}%
\def\proofusing{% NESTED TWO
\eproofbit % NESTED ONE
\setbox\proofrulename=\hbox\bgroup % NESTED TWO
\let\wereinproofbit\proofusing
\kern0.3em$
}
\def\endprooftree{% NESTED TWO
\eproofbit % NESTED ONE
% \dimen0 =	length of proof rule
% \dimen1 =	indentation of conclusion wrt rule
% \dimen2 =	new \shortenproofleft, ie indentation of conclusion
% \dimen3 =	new \shortenproofright, ie
%		 space on right of conclusion to end of tree
% \dimen4 =	space on right of conclusion below rule
  \dimen5 =0pt% spread of hypotheses
% \dimen6, \dimen7 = height & depth of rule
%
% length of rule needed by proof above
\dimen0=\wd\proofabove \advance\dimen0-\shortenproofleft
\advance\dimen0-\shortenproofright
%
% amount of spare space below
\dimen1=.5\dimen0 \advance\dimen1-.5\wd\proofbelow
\dimen4=\dimen1
\advance\dimen1\proofbelowshift \advance\dimen4-\proofbelowshift
%
% conclusion sticks out to left of immediate hypotheses
\ifdim	\dimen1<0pt
\then	\advance\shortenproofleft\dimen1
	\advance\dimen0-\dimen1
	\dimen1=0pt
%	now it sticks out to left of tree!
	\ifdim  \shortenproofleft<0pt
        \then   \setbox\proofabove=\hbox{%
			\kern-\shortenproofleft\unhbox\proofabove}%
                \shortenproofleft=0pt
        \fi
\fi
%
% and to the right
\ifdim	\dimen4<0pt
\then	\advance\shortenproofright\dimen4
	\advance\dimen0-\dimen4
	\dimen4=0pt
\fi
%
% make sure enough space for label
\ifdim	\shortenproofright<\wd\proofrulename
\then	\shortenproofright=\wd\proofrulename
\fi
%
% calculate new indentations
\dimen2=\shortenproofleft \advance\dimen2 by\dimen1
\dimen3=\shortenproofright\advance\dimen3 by\dimen4
%
% make the rule or dots, with name attached
\ifproofdots
\then
	\dimen6=\shortenproofleft \advance\dimen6 .5\dimen0
	\setbox1=\vbox to\proofdotseparation{\vss\hbox{$\cdot$}\vss}
	\setbox0=\hbox{%
		\kern\dimen6
		$\vcenter to\proofdotnumber\proofdotseparation
			{\leaders\box1\vfill}$%
		\unhbox\proofrulename}%
\else	\dimen6=\fontdimen22\the\textfont2 % height of maths axis
	\dimen7=\dimen6
	\advance\dimen6by.5\proofrulebreadth
	\advance\dimen7by-.5\proofrulebreadth
	\setbox0=\hbox{%
		\kern\shortenproofleft
		\ifdoubleproof
		\then	\hbox to\dimen0{%
			$\mathsurround0pt\mathord=\mkern-6mu%
			\cleaders\hbox{$\mkern-2mu=\mkern-2mu$}\hfill
			\mkern-6mu\mathord=$}%
		\else	\vrule height\dimen6 depth-\dimen7 width\dimen0
		\fi
		\unhbox\proofrulename}%
	\ht0=\dimen6 \dp0=-\dimen7
\fi
%
% set up to centre outermost tree only
\let\doll\relax
\ifwasinsideprooftree
\then	\let\VBOX\vbox
\else	\ifmmode\else$\let\doll=$\fi
	\let\VBOX\vcenter
\fi
% this \vbox or \vcenter is the actual output:
\VBOX	{\baselineskip\proofrulebaseline \lineskip.2ex
	\expandafter\lineskiplimit\ifproofdots0ex\else-0.6ex\fi
	\hbox	spread\dimen5	{\hfi\unhbox\proofabove\hfi}%
	\hbox{\box0}%
	\hbox	{\kern\dimen2 \box\proofbelow}}\doll%
%
% pass new indentations out of scope
\global\dimen2=\dimen2
\global\dimen3=\dimen3
\egroup % NESTED ZERO
\ifonleftofproofrule
\then	\shortenproofleft=\dimen2
\fi
\shortenproofright=\dimen3
%
% some space on right and flag we've just made a tree
\onleftofproofrulefalse
\ifinsideprooftree
\then	\hskip.5em plus 1fil \penalty2
\fi
}
\newcommand{\urule}[3]{%
   \prooftree #1 \justifies #2 \using #3 \endprooftree}
\newcommand{\brule}[4]{%
   \prooftree #1\ \ \ #2 \justifies #3 \using #4 \endprooftree}
\newcommand{\prova}[3]{%
   \prooftree #1 \justifies  #2  \using #3 \endprooftree}
\newcommand{\provasl}[3]{%
  \begin{array}[d]{c} \prooftree \mbox{$#1$} \justifies  #2  \thickness=0pt \using #3 \endprooftree\end{array}}
\newcommand{\provas}[3]{%
   \prooftree #1 \Justifies  #2  \thickness=0pt \using #3 \endprooftree}
\newcommand{\LT}[2]{%
   \prooftree #2 \leadsto #1 \endprooftree}
\def\conc{\conc}
\def\ltl{{\textsf{LTL}}}
\def\pltl{{\ltl}^{\textsf{P}}}
\def\KK{{\textsf{K}}}
\def\T{{\textsf{T}}}
\def\K4{{\textsf{K4}}}
\def\D{{\textsf{D}}}
\def\S4{{\textsf{S4}}}
\def\Q2{{\textsf{S4.2}}}
\def\red{\mathbf{\succ}}
\def\rred{\stackrel{*}{\red}}
\newcommand\pf[2]{{#1}^{#2}}
\newcommand\Ppf[2]{{#1}^{\langle #2\rangle}}
\def\forces{\models}
\newcommand\grado[1]{{\delta}[#1]}
\newcommand{\mix}[3]{\mathsf{Mix}(#1,#2)} 
\newcommand{\cana}[1]{#1\!-\!\pf{A}{\alpha}}
\newcommand{\rep}[2]{[#1 \Rsh  #2]}
\def\conc{\circ}
\def\toc{\mathcal{T}}
\def\pos{\toc^*}
\def\NN{\mathbb{N}}
\def\ZZ{\mathbb{Z}}
\newcommand{\less}[1]{\sqsubseteq_{#1}}
\newcommand{\lest}[1]{\sqsubset_{#1}}
\newcommand{\suc}[1]{\triangleleft_{#1}}
\newcommand{\iniz}[1]{\mathfrak{Init}[#1]}
\newcommand{\mdl}[1]{\mathcal{M}_{#1}}
\newcommand{\tmdl}[1]{\mathfrak{S}_{#1}}
\def\tree{\Theta}
\def\sys{\mathbb{M}}
\def\PBox{\blacksquare}
\def\PDiamond{\blacklozenge}
\def\Prev{\bullet}
\def\prop{\mathfrak{P}}
\def\mform{\mathfrak{mf}}
\def\pform{\mathfrak{pf}}
\def\R{\mathfrak{R}}
\newcommand{\logm}[1]{\mathfrak{M}[#1]}
\newcommand{\pfp}[2]{#1:#2}
\def\PFP{\mathcal{PFP}}
\newcommand{\commento}[1]{}
\title{A journey in modal proof theory:\\
From  minimal normal modal logic to discrete linear temporal logic
}
\author{S. Martini, A. Masini, M. Zorzi}
\date{}
\begin{document}
\maketitle
\pagestyle{myheadings}
%\markboth{modal proof theory}{}

\begin{quotation}
   \small \noindent {\sc Abstract:} %
Extending and generalizing the approach of 2-sequents (Ma\-si\-ni, 1992), we present sequent calculi for the classical modal logics in the \KK, \D, \T, \S4 spectrum. The systems are presented in a uniform way---different logics are obtained by tuning a single parameter, namely a constraint on the applicability of a rule. Cut-elimination is proved only once, since the proof  goes through independently from the constraints giving rise to the different systems. A sequent calculus for the discrete linear temporal logic \ltl\ is also given and proved complete. Leitmotiv of the paper is the formal analogy between modality and first-order quantification.
\end{quotation}
%%%%%%%%%%%%%%%%%%%%%%%%%%%
\thanks{\small{\it Mathematics Subject Classification (2000)}\/: 03B22, 03B45, 03F05.\\
\noindent{\it Keywords}:
  proof theory, sequent calculus, cut elimination, modal logic.}
%
%
%%%%%%%%%%%%%%%%%%%%%%%%%%%%
\commento{
\fbox{ \blue{ per aiutare nella lettura e nella digitazione ecco una piccola tavola}}

\bigskip
{\scriptsize{
\blue{
\begin{tabular}{|c|c|l|}
\hline 
simbolo & significato &  latex \\ 
\hline 
$\conc$ & concatenazione stringhe & conc \\ 
\hline 
$\suc{-}$ & relazione  tra sequenze & suc[1] \\ 
\hline 
$\less{-}$ & relazione  tra sequenze • & less[1] \\ 
\hline 
$\lest{-}$ & relazione  tra sequenze  & lest[1] \\ 
\hline 
$\prop$ & simboli proposizionali & prop \\ 
\hline 
$\toc$ & token & toc \\ 
\hline 
$\pos$ & posizioni & pos \\ 
\hline 
$\mform$ & formule modali  & mform \\ 
\hline 
$\pform$ & formule posizionali  & pform \\ 
\hline 
$\rep{-}{-}$ & sostituzione & rep[2] \\ 
\hline 
$\logm{Z}$ & logica modale con assiomi Z & logm[1] \\ 
\hline 
$\sys$ & sistema modale & sys \\ 
\hline
$\alpha,\beta$ & posizioni logiche modali & • \\ 
\hline
$x,y,z$ & token & • \\ 
\hline
$s,t,u$ & posizioni logica LTL & • \\ 
\hline
$\iniz{\Gamma}$ & iniseme segmenti iniziali in $\Gamma$ & iniz[1] \\ 
\hline 
$\mdl{} $   &  modello di kripke  &  mdl[1] \\
\hline    
 $\tree$   &  albero  &  tree \\
\hline    
  $\R$  &  relazione di accessibilit\`a  & R  \\
\hline    
 $\tmdl{}$  &  struttura  & tmdl[1]  \\
\hline    
 $\rhd $  &   contrazione  & rhd   \\
\hline    
  $\red$  &  immediate reducibility  & red  \\
\hline    
  $\rred$  &   reducibility  & rred  \\
\hline    
 $\pfp{\alpha}{\beta}$  &  PFP   & pfp[2]  \\
\hline    
$\PFP$    &  set of Past-Present Positions  &  PFP  \\
\hline    
 $\PBox$   &    &  PBox \\
\hline    
  $\PDiamond$  &    &  PDiamond \\
\hline    
  $\Prev$  &    &  Prev \\
\hline    
\end{tabular} 
}
}
}
}
\tableofcontents
% !TEX root = ./main.tex
\section{Introduction}\label{sec:introduction}
Proof theory of modal logic has always been a delicate subject---the ``intensionality'' of the modal connectives, even at the simple level of the normal logics based on the {\bf K} axiom, requires non-standard rules, both in  sequent calculi and in natural deduction systems.  In order to guarantee normalization, already in his seminal book~\cite{Prawitz:1965} Dag Prawitz is forced to formulate a natural deduction rule which has \emph{global} constraints: its applicability depends on the full structure of the proof tree rooted at the principal premise of the rule (and not only on the main connective of that premise and on the open assumptions of the tree(s), as it is the case for all the other rules, propositional or first-order). To treat modalities, several variants of the sequent format (or of natural deduction, or both) have been proposed: display calculi~\cite{Wans99}, hypersequents~\cite{Avron96,Negri:2011,CiRaWa14}, labelled systems~\cite{GabDeQ92,Simpson93, Vigano00a, Negri:2011} are just a few of them.  One of the authors of the present paper proposed in 1992 one of the earliest of these variants, called \emph{2-sequents}~\cite{Mas:TwoSeqProof:92,Mas:TwoSeqInt:93}, for the modal logic \D. The original 2-dimensional presentation (from which the system got the name) was later reformulated with a lighter syntax, using integer \emph{indexes} on formula occurrences, and extending it also to natural deduction. The simplicity of the approach made possible to tailor it  also to the intuitionistic case, and to apply it to the modalities (the ``exponentials'')  of linear logic (where indexes have a  natural interpretation in terms of ``box-nesting depth'')~\cite{MM:ComInt:95,MM:ONFineStr:95,GMM:an-exp-ext-seq}. The constraints on the applicability of modal rules are formulated by using only the indexes on the main premise and on the context (or on the open assumptions, in the case of natural deduction), thus having rules similar to the standard (propositional and first-order) ones. A distintive feature of the modal treatment in 2-sequents, is the formal analogy between necessitation and  universal quantification. Indeed, the introduction of necessity
\begin{quotation}
\noindent from $ \Gamma \vdash A$ infer $\Gamma \vdash \Box A$,
\end{quotation} 
which is sound only when all the formulas in $\Gamma$ are boxed, is the formal analog of the $\forall$-introduction rule
\begin{quotation}
\noindent from $ \Gamma \vdash A$ infer $\Gamma \vdash \forall x A$,
\end{quotation}
which is sound only when all the formulas in $\Gamma$ do not contain $x$ free. Indeed, this 
side condition may be read as: ``the formula $A$ must be independent, as far as $x$ is concerned, of the formulas in $\Gamma$''. The constraint on the $\Box$-introduction rule expresses a similar request of independence, which the 2-sequents allows to formulate also in analogous manner, as the absence of something from $\Gamma$ (see also~\cite{BarMas:apal} for a deeper discussion of the analogy.)

The present paper takes again this viewpoint and presents a general approach to modal proof-theory using 2-sequents, using the notion of \emph{position} of a formula occurrence (which generalises the concept of index that we used in our earlier work.)  While the previous papers treated only the cases of the classical \D, and the intuitionistic $\Box, \to, \wedge$-- fragments (no negation) of \D, \K4, \T, and {\S4}, we give here sequent calculi for all the normal, classical logics in the {\KK}, {\D}, {\T}, {\K4} and {\S4} spectrum.  The systems are presented in a uniform way---different logics are obtained by tuning a single parameter, namely the constraint on the applicability of the $\Box$-left rule (and $\Diamond$-right rule) in the various sequent calculi. Cut-elimination is proved only once, because the (standard!) proof techniques go through independently from the constraints of the different systems. 

Masini's 2-sequents are not the only variants on the sequent (or natural deduction) format which are based on annotations of formula occurrences. In most of them (e.g., notably Labelled Deductive Systems~\cite{GdQ92}, or Mints' Indexed Systems of Sequents---which mix sequents and tableaux,~\cite{Mints:1997}), however, annotations explicitly (and programmatically) reflect, in the formal proof calculus, the accessibility relation of the intended Kripke models. These approaches are successful in capturing a large array of different logics, and they allow, most of the time,  to prove general normalisation (or cut-elimination) results; see~\cite{Negri:2011} for a review of some of these approaches and their relations to the more standard, axiomatic presentations of modal theories. 

Our approach wants to stay at arm's length from these semantic considerations,  and  it builds instead, as we have already remarked, on the formal, inside-the-calculus notion of dependency of a formula from its premises. Of course, at the end some of the constraints of our systems will result similar to those of the other, more ``semantical'' approaches---this happens, however, as an a posteriori feature, which shows how the purely formal approach is able, in fact,  to reconstruct ``from below'' what other approaches assume in a top-down manner from semantical considerations. 

The paper develops in Section~\ref{sect:sequences} the proof theory for the classical logics in the {\KK}, {\D}, {\T}, {\K4} and {\S4} spectrum, proving cut-elimination (with a notion of subformula and, thus, consistency). It is also shown that the proposed systems prove all the theorems of the standard, axiomatic presentation of these logics. In order to prove the converse, Section~\ref{subs:seman} introduces a Kripke semantics for our 2-systems. The remaining sections of the paper are an exercise on the flexibility of our notion of positions. Section~\ref{sec:discreteLTL} gives a 2-sequent system for discrete linear temporal logic (\ltl), by generalising the notion of position and once again exploiting the analogy between quantifier rules and modal rules (where positions play the role of eigenvariables). The system is proved equivalent (by semantic means) to the usual axiomatic presentation of \ltl{}. Section~\ref{sect:past} shows a further generalisation, to deal also with  (unlimited) past.

% !TEX root = ./main.tex
\section{Preliminary Notions}\label{sect:sequences}

As mentioned in the Introduction, formula occurrences will be labeled with \emph{positions}---sequences of uninterpreted \emph{tokens}. We introduce here the notation and operations that will be needed for such notions. 
Given a set $X$,  $X^*$ is the set of ordered finite sequences on $X$. With $<x_1,...,x_n>$ we denote the finite non empty sequence s.t. $x_1,\ldots,x_n \in X$; $<\ >$  is the empty sequence.

The  (associative) concatenation  of sequences $\conc:X^*\times X^*\to X^*$ is defined as 
\begin{itemize}
\item $<x_1,...,x_n>\conc <z_1,...,z_m> = <x_1,...,x_n,z_1,...,z_m>$,
\item $s\conc<\ > =<\ >\conc s= s$.
\end{itemize}

For $s\in X^*$ and $x\in X$, we sometimes write $s\conc x$ for $s\conc <x>$; and $x\in s$ as a shorthand for $\exists t,u\in X^*.\; s=t\conc<x>\conc u$.
The set $X^*$ is equipped with the following successor  relation

\[ s \suc{X} t \Leftrightarrow \exists x\in X.\; t=s\conc <x>\]

In the following
\begin{itemize}
\item $\suc{X}^0$  denotes the reflexive closure of $\suc{X}$;
\item $\lest{X}$  denotes the transitive closure of $\suc{X}$;
\item $\less{X}$  denotes the reflexive and transitive closure of $\suc{X}$;
\end{itemize}
\bigskip

%
%is partially ordered by means of the following relation $\less{X}$:
%\[ s \less{X} t \Leftrightarrow \exists u. t=s\conc u\]
% We will omit the subscript when the basis set of sequence is clear from the context.

Given three sequences $s,u,v\in X^*$ the \emph{prefix replacement} $s\rep{u}{v}$ is so defined 

$
s\rep{u}{v}=
\begin{cases}
v\conc t \quad \mbox{if\ } s=u\conc t
\\
s \quad\mbox{otherwise}
\end{cases}
$

When $u$ and $v$ have the same  length, the replacement is called \textit{renaming} of $u$ with $v$.

%%%%%%%%%%%%%%%%%%%%%%%%%%%%%%%%%%%%%%%%%%%%%%%%%%%%%%%%%%
%%%%%%%%%%%%%%%%%%%%%%%%%%%%%%%%%%%%%%%%%%%%%%%%%%%%%%%%%%
\section{2-sequent calculi}\label{sec:2seq}

The propositional modal language
${\cal L}$ contains the following symbols:
  \begin{enumerate}
  \item[--]  
    countably infinite \textit{proposition symbols}, $p_0,p_1,\ldots$;  %from a countably infinite set $\prop$
  \item[--] the \textit{propositional connectives} $\lor, \land, \to,
    \lnot;$
  \item[--]the \textit{modal operators} $\Box, \Diamond;$
  \item[--]the \textit{auxiliary symbols} $($\ and $).$
  \end{enumerate}

\begin{definition} The set $\mform$ of propositional \textit{modal formulas} 
  of ${\cal L}$ is the least set that contains the propositional
  symbols and is closed under application of the propositional
  connectives and the modal operators.
\end{definition}

In the following $\toc$ denotes a denumerable set of \emph{tokens}, ranged by meta-variables $x,y,z$, possibly indexed.
Let $\pos$ be the set the sequences on $\toc$ called \textit{positions}; meta-variables $\alpha,\beta,\gamma$  range on $\pos$, possibly indexed.

\begin{definition}\label{def:levfor} 
\begin{enumerate}
\item A  \textit{position-formula} (briefly \emph{p-formula}) is an expression of the form $\pf{A}{\alpha}$, where
  $A$ is a modal formula and $\alpha\in \pos$. We denote with $\pform$ the set of position formulas.
\item A \textit{2-sequent} is an
  expression of the form $\Gamma\vdash\Delta$, where $\Gamma$ and
  $\Delta$ are finite sequences of p--formulas.
\end{enumerate}
\end{definition}

%Given a sequence of formulas $\Gamma$ and a position $\alpha$, with $\alpha\iniz{\Gamma}$ we mean that $\alpha\niniz \{\beta : \exists \pf{A}{\beta}\in\Gamma\}$

Given a sequence $\Gamma$ of p-formulas, with $\iniz{\Gamma}$  we mean the set 
$\{ \beta : \exists  \pf{A}{\alpha} \in \Gamma.\; \beta \less{} \alpha \}$.

\medskip

\noindent \textit{Warning}: from now on we will  use the word ``sequent'' for  ``2-sequent'', when no ambiguity arises.

\bigskip

%%%%%%%%%%%%%%%%%%%%%%%%%%%%%%%%%%%%%%%%%%%%%%%%%%%%%%%%%
\subsection{A  class of normal modal systems}
We briefly recall  the axiomatic (``Hilbert-style'') presentation of normal modal systems.
Let $Z$ be a set of formulas. 
The normal modal logic $\logm{Z}$ is defined as smallest set $X$ of formulas verifying the following properties: 

\begin{description}
\item[(i)] $Z\subseteq X$
\item[(ii)] $X$ contains all instances of the following schemas:
\begin{description}
\item[1.]
$A\to(B \to A)$
\item[2.] $(A\to (B\to C))\to ((A\to B)\to (A\to C))$
\item[3.] $((\neg B\to \neg A)\to ((\neg  B\to A)\to B)) $
\item[K.] $\Box(A\to B)\to(\Box A \to \Box B)$
\end{description}
\item[MP]  if $A,A\to B \in X$ then $B\in X$;
\item[NEC] if $A\in X$ then $\Box A\in X$. 
\end{description}

We write $\vdash_{\logm{Z}}  A$  for  $A\in \logm{Z}$.
%$\logm{\emptyset}$ is the \emph{minimal normal modal logic} and is denoted simply by \textbf{K}.
If $N_1,..,N_k$ are names of schemas, the sequence $N_1\ldots N_k$ denotes the set  
$[N_1]\cup...\cup [N_1]$, where  
$[N_i] =\{A: A \mbox{\ is an instance of the schema\ } N_i\}$.
Figure \ref{fig:modlogics} lists the standard axioms for the well-known modal systems ${\KK}$,  ${\D}$,  ${\T}$,  ${\K4}$,  ${\S4}$; we use $\sys$ as generic name for one of these systems.

\begin{figure}
\center
\begin{tabular}{c|c}

Axiom schema & Logic \\ 
\hline 
\begin{minipage}{30ex}
\begin{description}
\item[D] $\Box A \to \Diamond A$
\item[T] $\Box A \to A$
\item[4] $\Box A \to \Box\Box A$
\end{description}
\end{minipage}
 & 
\begin{minipage}{30ex}
	\begin{tabular}{l c l}
		\KK & $=$ & $\logm{\emptyset}$\\
		\D & $=$ & $\logm{\textbf{D}}$\\
		\T & $=$ & $\logm{\textbf{T}}$\\
		\K4 & $=$ & $\logm{\textbf{4}}$\\
		\S4 & $=$ & $\logm{\textbf{T, 4}}$
	\end{tabular}
%\begin{description}
%\item[]
%\item[{\KK}]$ = \logm{\emptyset}$;
%\item[\D] $ = \logm{D}$;
%\item[\T] $ = \logm{T}$;
%\item[\K4] $ = \logm{4}$;
%\item[\S4] $ = \logm{T4}$;
%\item[]
%\end{description}
\end{minipage} 
\end{tabular} 
\caption{Axioms for systems ${\KK}$,  ${\D}$,  ${\T}$,  ${\K4}$,  ${\S4}$}\label{fig:modlogics}
\end{figure}

%%%%%%%%%%%%%%%%%%%%%%%%%%%%%%%%%%%%%%%%%%%%%%%%%%%%%%%%%%
\subsection{The sequent calculi $2_{{\KK}}, 2_{{\D}}, 2_{{\T}}, 2_{{\K4}}, 2_{{\S4}}$}

Figure \ref{fig:2s4rules} presents the 2-sequent calculus $2_{{\S4}}$, for the logic {\S4}. %Rules, are in Table \ref{}.
%As it is always the case in presentation of sequent calculi
Observe that, as usual in sequent calculi presentations,  sequences of formulas ($\Gamma$, $\Delta$), or positions ($\alpha$, $\beta$) \emph{may be empty}, except when explicitly forbidden. 
The constraint on necessitation (rule $\vdash\Box$, and its dual $\Diamond\vdash$) is formulated as a constraint on position occurrences in the context, analogously to the usual constraint on variable occurrences for $\forall$-introduction.

Systems for other logics are obtained by restricting the application of some rules, using the positions present in the 2-sequents.   In particular, rules $\Box\vdash$ and $\vdash\Diamond$ are constrained for all the systems but $2_{{\S4}}$; moreover, for $2_{{\K4}}$ and $2_{{\KK}}$ also the \emph{cut}-rule is restricted. Figure~\ref{Fig-constraints} lists such constraints.

\begin{figure}
\center
\begin{tabular}{c|c}
\textbf{Calculus} & \textbf{Constraints on  the rules $\Box\vdash$ and $\vdash\Diamond$}\\ 
\hline 
$2_{{\S4}}$ & no constraints \\ 
\hline 
$2_{{\T}}$  & $\beta=<\ >$, or $\beta$ is a singleton sequence $<z>$ \\ 
\hline 
$2_{{\D}}$  & $\beta$ is a singleton sequence $<z>$  \\ 
 
\hline 
$2_{{\K4}}$  & $\beta$ is a non empty sequence;\\ &
there is at least a formula $\pf{B}{\alpha\conc\beta\conc\eta}$ in either $\Gamma$ or $\Delta$  \\ 

\hline
$2_{{\KK}}$  & $\beta$ is a singleton sequence $<z>$;\\ & 
there is at least a formula $\pf{B}{\alpha\conc\beta\conc\eta}$ in either $\Gamma$ or $\Delta$   \\ 
\multicolumn{2}{l}{\relax}\\
\multicolumn{2}{l}{\relax}\\

        & \textbf{Constraints on  the cut rule  }\\
\hline
$2_{{\D}}$, $2_{{\T}}$, $2_{{\S4}}$ & \mbox{\ no contraints\ }
\\
\hline
$2_{{\KK}}$, $2_{{\K4}}$ & $\alpha \in \iniz{\Gamma_1,\Delta_1}$ or  $\alpha \in \iniz{\Gamma_2,\Delta_2}$
	 \\ 	 
\end{tabular} 
\caption{Contraints}\label{Fig-constraints}
\end{figure}

Note that both $2_{{\K4}}$ and $2_{{\KK}}$, in addition to the constraint on the main position $\beta$, have also constraints on the context: in the modal rules $\Box\vdash$ and $\vdash\Diamond$ \textit{there must be another formula occurrence $\pf{B}{\alpha\conc\beta\conc\eta}$ in either $\Gamma$ or $\Delta$} (of course, $\alpha$ and/or $\eta$ may be empty). This prevents the derivation of $\pf{\Box A \to \Diamond A}{\gamma}$ (the p-formulas representing axiom {\textbf D}). 

\begin{remark}[On the cut rule for $2_{{\KK}}$, $2_{{\K4}}$]\mbox{}\\
%\textit{
The constraint is  necessary for $2_{{\K4}}$ and $2_{{\KK}}$, since it prevents the derivation of the unsound schema $\Ppf{\Diamond (A \to A)}{}$ (remember that {\KK} and {\K4} do not validate $\Diamond(\textbf{true})$). Indeed,  without the constraint we would have:
%}

\def\aaa
{
\prova{\Ppf{A}{x}\vdash\Ppf{A}{x}}
{\vdash \Ppf{A\to A}{x}}
	{}
}
\def\aab
{
	\prova{
		\Ppf{A\to A}{x}\vdash\Ppf{A\to A}{x}
	}
	{
    \Ppf{A\to A}{x}\vdash\Ppf{\Diamond (A\to A)}{}
	}
	{}
}

	\[
	\prova{\aaa \aab}
	{
		\vdash\Ppf{\Diamond (A\to A)}{}
	}
	{}
	\]
%\textit{
It is easy to see that \emph{modus ponens} (from derivations of $\vdash \pf{A\to B}{\alpha}$ and $\vdash \pf{A}{\alpha}$, obtain a derivation of $\vdash \pf{B}{\alpha}$), which is necessary in order to prove the completeness of 2-systems, is derivable also in presence of this constraint.
%}
\end{remark}

\begin{figure}[htb!]
\subsubsection*{Identity rules}
\begin{center}
  $\pf{A}{\alpha}\vdash\pf{A}{\alpha}\quad  Ax$ %
  \qquad\qquad %
  $\brule %
  {\Gamma_1\vdash  \pf{A}{\alpha},\Delta_1} %
  { \Gamma_2,\pf{A}{\alpha}, \vdash \Delta_2} %
  {\Gamma_1, \Gamma_2\vdash \Delta_1,\Delta_2 } %
  {\quad Cut } $ % 
\end{center}

\subsubsection*{Structural rules}
\begin{center}
  $\urule %
  {\Gamma\vdash\Delta } %
  {\Gamma,\pf{A}{\alpha}\vdash\Delta } %
  {\quad W\vdash } $ %
\qquad\qquad
  $\urule %
  {\Gamma\vdash\Delta } %
  {\Gamma\vdash\pf{A}{\alpha},\Delta  } %
  {\quad\vdash W } $ %
\\ \bigskip
  $\urule %
  {\Gamma,\pf{A}{\alpha},\pf{A}{\alpha}\vdash\Delta  } %
  {\Gamma,\pf{A}{\alpha}\vdash\Delta  } %
  {\quad C\vdash } $ %
\qquad\qquad
  $\urule %
  {\Gamma\vdash\pf{A}{\alpha}, \pf{A}{\alpha},\Delta  } %
  {\Gamma\vdash\pf{A}{\alpha},\Delta  } %
  {\quad\vdash C } $ %
\\\bigskip
  $\urule %
  {\Gamma_1,\pf{A}{\alpha},\pf{B}{\beta},\Gamma_2\vdash \Delta } %
  {\Gamma_1,\pf{B}{\beta},\pf{A}{\alpha},\Gamma_2\vdash \Delta } %
  {\quad Exc\vdash } $ %
\qquad\qquad
  $\urule %
  {\Gamma \vdash \Delta_1,\pf{A}{\alpha},\pf{B}{\beta},\Delta_2 } %
  { \Gamma \vdash \Delta_1,\pf{B}{\beta},\pf{A}{\alpha},\Delta_2 } %
  {\quad\vdash Exc } $ %
\end{center}
\subsubsection*{Propositional rules}
\begin{center}
  $\urule %
  {\Gamma\vdash\pf{A}{\alpha},\Delta } %
  {\Gamma,\pf{\neg A}{\alpha}\vdash\Delta } %
  {\quad\neg\vdash} $ %
  \qquad\qquad
  $\urule %
  {\Gamma,\pf{A}{\alpha}\vdash\Delta } %
  {\Gamma\vdash\pf{\neg A}{\alpha},\Delta } %
  {\quad\vdash\neg } $ %

  \bigskip
  $\urule %
  {\Gamma,\pf{A}{\alpha}\vdash\Delta } %
  {\Gamma,\pf{A\land B}{\alpha}\vdash\Delta } %
  {\quad\land_1\vdash } $ %
\qquad\qquad
 $\urule %
  {\Gamma,\pf{B}{\alpha}\vdash\Delta } %
  {\Gamma,\pf{A\land B}{\alpha}\vdash\Delta } %
  {\quad\land_2\vdash } $ %

\bigskip

$\brule %
  {\Gamma_1\vdash\pf{A}{\alpha},\Delta_1 } %
  {\Gamma_2\vdash\pf{B}{\alpha},\Delta_2 } %
  {\Gamma_1,\Gamma_2\vdash\pf{A\land B}{\alpha},\Delta_1,\Delta_2 } %
  {\quad\vdash\land } $ %
\ 
$\brule %
  {\Gamma_1,\pf{A}{\alpha}\vdash\Delta_1 } %
  {\Gamma_2,\pf{B}{\alpha}\vdash\Delta_2 } %
  {\Gamma_1,\Gamma_2,\pf{A \lor B}{\alpha} \vdash\Delta_1,\Delta_2} %
  {\quad\lor \vdash } $ %

\bigskip

$\urule %
  {\Gamma\vdash\pf{A}{\alpha},\Delta  } %
  {\Gamma\vdash\pf{A\lor B}{\alpha},\Delta } %
  {\quad\vdash\lor_1 } $ %
\qquad\qquad
$\urule %
  {\Gamma\vdash\pf{B}{\alpha},\Delta  } %
  {\Gamma\vdash\pf{A\lor B}{\alpha},\Delta } %
  {\quad\vdash\lor_2 } $ %
\\ \bigskip
$\brule %
  {\Gamma_1,\pf{B}{\alpha}\vdash\Delta_1 } %
  {\Gamma_2\vdash\pf{A}{\alpha},\Delta_2 } %
  { \Gamma_1,\Gamma_2,\pf{A \to B}{\alpha}\vdash\Delta_1,\Delta_2} %
  {\quad\to \vdash } $ %
\qquad\qquad
$\urule %
  {\Gamma,\pf{A}{\alpha}\vdash\pf{B}{\alpha},\Delta  } %
  {\Gamma\vdash\pf{A\to B}{\alpha},\Delta } %
  {\quad\vdash\to } $ %
\end{center}
\subsubsection*{Modal rules}
\begin{center}
% $\urule %
%  {\Gamma,\pf{A}{i+1}\vdash\Delta } %
%  {\Gamma,\pf{\conc A}{p}\vdash\Delta } %
%  {\quad\conc\vdash } $ %
%\qquad\qquad
%$\urule %
%  {\Gamma\vdash\pf{A}{i+1},\Delta } %
%  {\Gamma\vdash \pf{\conc A}{p},\Delta} %
%  {\quad\vdash\conc } $ %
%\\ \bigskip
$\urule %
  {\Gamma,\pf{A}{\alpha\conc \beta}\vdash\Delta } %
  {\Gamma,\pf{\Box A}{\alpha}\vdash\Delta } %
  {\quad\Box\vdash } $ %
\qquad\qquad
$\urule %
  {\Gamma\vdash\pf{A}{\alpha\conc x},\Delta} %
  {\Gamma\vdash\pf{\Box A}{\alpha},\Delta } %
  {\quad\vdash\Box } $ %
\\ \bigskip
$\urule %
  {\Gamma,\pf{A}{\alpha\conc x}\vdash\Delta} %
  { \pf{\Gamma,\Diamond A}{\alpha}\vdash\Delta} %
  {\quad\Diamond\vdash } $ %
\qquad\qquad
$\urule %
  {\Gamma\vdash\pf{A}{\alpha\conc \beta},\Delta } %
  {\Gamma\vdash\pf{\Diamond A}{\alpha},\Delta} %
  {\quad\vdash \Diamond } $ %
\end{center}
\textbf{Constraints:} 

%\item In rule \emph{Cut}, $\alpha \in \iniz{\Gamma_1,\Delta_1-A^\alpha}$ or  $\alpha \in \iniz{\Gamma_2-A^\alpha,\Delta_2}$, where $\cana{\Gamma}$ denotes  the
%sequence obtained by removing from $\Gamma$ \emph{all occurrences}  of $\pf{A}{\alpha}$.
 In  rules $\vdash\Box$ and $\Diamond\vdash$, no position in $\Gamma,\Delta$ may start with $\alpha\conc x$; that is,  $\alpha\conc x\not\in \iniz{\Gamma,\Delta$}.

\caption{Rules for the System $2_{{\S4}}$}\label{fig:2s4rules}
\end{figure}

The position  $\alpha \conc x$ in the  rules $\vdash\Box$ and $\Diamond\vdash$ is the \emph{eigenposition} of that rule. 
It is well known that in standard first order sequent calculus  eigenvariables should be considered as bound variables. In particular, any eigenvariable in a derivation may always be substituted with a \emph{fresh} one (that is, a variable which does not occur in any other place in that derivation), without affecting the provable end sequent (up to renaming of its bound variables). Indeed, one may guarantee that each eigenvariable in a derivation is the eigenvariable of exactly one right $\forall$ or left $\exists$ rule (and, moreover, that variable occurs in the derivation only above the rule of which it is eigenvariable, and it never  occurs as a bound variable.) We will show analogous properties for the eigenpositions of 2-sequents, in order to define in a sound way a notion of \emph{prefix replacement for proofs} (that we defined at the end of Section~\ref{sect:sequences} for positions).  We denote with $\Gamma\rep{\alpha}{\beta}$ the obvious extension of prefix replacement to a sequence $\Gamma$ of p-formulas. The following lemmas allow the definition of a similar notion for proofs; the lemmas are valid for all the systems (that is, in presence of the constraints) of the table above.

\begin{lemma}
Let $\Pi$ be a 2-sequent proof with conclusion $\Gamma\vdash\Delta$, let $\delta\conc z$ be a position, and let $b$ be a fresh token (that is, not occurring in either $\Pi$ or $\delta\conc z$). Then we may define the \emph{prefix replacement} $\Pi\rep{\delta\conc z}{\delta\conc b}$,  a proof with conclusion 
$\Gamma\rep{\delta\conc z}{\delta\conc b}\vdash\Delta\rep{\delta\conc z}{\delta\conc b}$.
\end{lemma}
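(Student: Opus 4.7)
The plan is to proceed by induction on the height of $\Pi$, obtaining $\Pi\rep{\delta\conc z}{\delta\conc b}$ by applying the prefix replacement uniformly to every p-formula of the derivation and verifying that the result is still a valid proof in the same calculus. The freshness of $b$ (not occurring in $\Pi$) is used repeatedly to prevent unintended identifications; in particular, no position of $\Pi$ can start with $\delta\conc b$, so the replacement acts \emph{injectively} on positions appearing in $\Pi$ and preserves the prefix relation $\less{}$.

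For the base case, an axiom $\pf{A}{\alpha}\vdash\pf{A}{\alpha}$ becomes $\pf{A}{\alpha\rep{\delta\conc z}{\delta\conc b}}\vdash\pf{A}{\alpha\rep{\delta\conc z}{\delta\conc b}}$, which is again an axiom. Structural and propositional rules carry their active positions unchanged from premises to conclusion, so the replacement commutes trivially with the rule. For cut in $2_{\KK}$ and $2_{\K4}$, the additional constraint requires the cut formula's position $\alpha$ to lie in $\iniz{\Gamma_1,\Delta_1}\cup\iniz{\Gamma_2,\Delta_2}$; since prefix replacement preserves $\less{}$ by the freshness observation, the image of $\alpha$ lies in the corresponding image of $\iniz{\cdot,\cdot}$, and the constraint transfers.

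The rules $\Box\vdash$ and $\vdash\Diamond$, restricted in $2_{\T}$, $2_{\D}$, $2_{\KK}$ and $2_{\K4}$, impose conditions on the shape and length of $\beta$, and, for $2_{\KK}$ and $2_{\K4}$, on the presence of a companion p-formula whose position has the form $\alpha\conc\beta\conc\eta$ in $\Gamma$ or $\Delta$. The replacement acts on prefixes and does not alter the length of any tail, so the condition on $\beta$ is untouched; the companion p-formula is sent to one whose position has shape $\alpha'\conc\beta'\conc\eta'$ obtained by the same uniform replacement, so the contextual condition is preserved too.

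The delicate case, and the main obstacle, is the eigenposition rules $\vdash\Box$ and $\Diamond\vdash$, whose side condition $\alpha\conc x\not\in\iniz{\Gamma,\Delta}$ is the analogue of the first-order freshness of an eigenvariable. Before starting the induction I would assume, without loss of generality and by an $\alpha$-renaming of bound tokens (exactly the manoeuvre sketched in the paragraph preceding the lemma), that every eigenvariable of $\Pi$ is distinct from $z$, from $b$, and from every other eigenvariable. Under this assumption, the last token $x$ of any eigenposition $\alpha\conc x$ is fixed by the replacement, so the image $(\alpha\conc x)\rep{\delta\conc z}{\delta\conc b}$ still ends in $x$; since $\less{}$ is preserved by the replacement, $x$ remains absent from the initial segments of the renamed context, and the rule instance is still well-formed. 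This completes the inductive step and yields a derivation of $\Gamma\rep{\delta\conc z}{\delta\conc b}\vdash\Delta\rep{\delta\conc z}{\delta\conc b}$.
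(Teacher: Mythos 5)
Your overall strategy---induction on the proof, uniform application of the prefix replacement, and the observation that freshness of $b$ makes the replacement an injective, prefix-respecting renaming---is exactly the paper's, and your treatment of the structural and propositional rules, of the cut constraint, and of the $\Box\vdash$/$\vdash\Diamond$ constraints is fine. The gap is in how you dispose of the eigenposition rules. You assume, ``without loss of generality, by an $\alpha$-renaming of bound tokens,'' that every eigenposition token of $\Pi$ differs from $z$. But the legitimacy of that renaming is precisely what this lemma is meant to establish: Proposition~\ref{teo:eigenpos1} (eigenpositions can be renamed apart) is derived \emph{from} this lemma, by applying it with $\delta\conc z$ equal to an eigenposition. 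Invoking it here is circular, and it lets you skip exactly the one case in which the last token of an eigenposition is \emph{not} fixed by the replacement, namely $\delta\conc z=\alpha\conc x$. The paper handles that case directly: since the side condition of $\vdash\Box$ gives $\alpha\conc x\not\in\iniz{\Gamma,\Delta}$, the context is untouched by the replacement, the inductive hypothesis yields a proof of $\Gamma\vdash\pf{A}{\alpha\conc b},\Delta$, and $\alpha\conc b$ is a legitimate eigenposition because $b$ is fresh. Adding this case restores your induction with no appeal to prior renaming.

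A second, minor point: to re-establish the side condition $\alpha\conc x\not\in\iniz{\Gamma,\Delta}$ after replacement you need the replacement to \emph{reflect} the prefix relation (if $\alpha\rep{\delta\conc z}{\delta\conc b}\conc x\less{}\gamma\rep{\delta\conc z}{\delta\conc b}$ then $\alpha\conc x\less{}\gamma$), not merely to preserve it; preservation in the forward direction does not by itself exclude new initial-segment coincidences. Reflection does hold---because $b$ is fresh, such a coincidence could only arise from $\alpha\conc x$ already being a prefix of $\gamma$, which the original side condition forbids---and this is the computation the paper writes out; you should make the direction of the implication explicit.
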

\begin{proof}
If $\Pi$ is an axiom $\pf{A}{\alpha}\vdash\pf{A}{\alpha}$, than $\Pi\rep{\delta\conc z}{\delta\conc b}$ is $\pf{A}{\alpha\rep{\delta\conc z}{\delta\conc b}}\vdash\pf{A}{\alpha\rep{\delta\conc z}{\delta\conc b}}$.\\
All inductive cases are trivial, except the modal rules.

If the last rule of $\Pi$ is 
$$
  \urule %
  {\Gamma\vdash\pf{A}{\alpha\conc x},\Delta} %
  {\Gamma\vdash\pf{\Box A}{\alpha},\Delta } %
  {\quad\vdash\Box }
$$
let $\Pi'$ be the subproof rooted at this rule. We have two cases, depending on whether the position $\delta\conc z$ is the eigenposition of the rule.
(i) If $\alpha\conc x = \delta\conc z$, obtain by induction the proof $\Pi'\rep{\alpha\conc x}{\alpha\conc b}$ with conclusion $\Gamma\vdash\pf{A}{\alpha\conc b},\Delta$ (remember that $\alpha\conc x\not\in\iniz{\Gamma,\Delta}$). Then $\Pi\rep{\delta\conc z}{\delta\conc b}$ is obtained from $\Pi'\rep{\alpha\conc x}{\alpha\conc b}$ by an application of $\vdash\Box$. 
(ii) If $\alpha\conc x \neq \delta\conc z$, obtain by induction the proof $\Pi'\rep{\delta\conc z}{\delta\conc b}$ with conclusion $\Gamma\rep{\delta\conc z}{\delta\conc b }\vdash\pf{A}{\alpha\rep{\delta\conc z}{\delta\conc b}\conc x},\Delta\rep{\delta\conc z}{\delta\conc b}$. Observe now that $\alpha\rep{\delta\conc z}{\delta\conc b}\conc x$ cannot be an initial segment of a formula in $\Gamma\rep{\delta\conc z}{\delta\conc b }, \Delta\rep{\delta\conc z}{\delta\conc b}$. 
Indeed, if for some $B^\gamma$ in $\Gamma,\Delta$ we had 
$\alpha\rep{\delta\conc z}{\delta\conc b}\conc x \less{}\gamma\rep{\delta\conc z}{\delta\conc b}$, since $b$ is fresh, this could only result from $\alpha\conc x$ being a prefix of $\gamma$, which is impossible.
% IN DETTAGLIO:
%Because: either 
%$\alpha\rep{\delta\conc z}{\delta\conc b}\conc x = \alpha \conc x$, or indeed the substitution inserts the token $b$ at some place into $\alpha$. In the first case, if for some $B^\gamma$ in $\Gamma,\Delta$ we had 
%$\alpha\conc x \less{}\gamma\rep{\delta\conc z}{\delta\conc b}$, since $b$ is fresh, this would mean that 
%$\alpha\conc x\less{}\gamma$: impossible. If, on the other hand, $\delta\conc z$ is a prefix of $\alpha$, we have % $\alpha\conc x = \delta\conc z \conc\theta\conc x$ , and 
%$\alpha\rep{\delta\conc z}{\delta\conc b}\conc x = \delta\conc b \conc\theta\conc x$, for some $\theta$. 
%Suppose now that 
%$\delta\conc b \conc\theta\conc x \less{} \gamma\rep{\delta\conc z}{\delta\conc b}$, for some $B^\gamma$ in $\Gamma,\Delta$; since $b$ is fresh, this could only result from $\delta\conc b$ substituted for $\delta\conc z$, which would mean that $\alpha\conc x$ was a prefix of $\gamma$: impossible. 
Therefore, we may conclude with an application  of $\vdash\Box$, since its side-condition is satisfied.

If the last rule of $\Pi$ is 
$$
\urule %
  {\Gamma\vdash\pf{A}{\alpha\conc \beta},\Delta } %
  {\Gamma\vdash\pf{\Diamond A}{\alpha},\Delta} %
  {\quad\vdash \Diamond } 
$$ 
let, as before, $\Pi'$ be the subproof rooted at this rule and construct by induction the proof 
$\Pi'\rep{\delta\conc z}{\delta\conc b}$ with conclusion $\Gamma\rep{\delta\conc z}{\delta\conc b }\vdash\pf{A}{\alpha\conc \beta\rep{\delta\conc z}{\delta\conc b}},\Delta\rep{\delta\conc z}{\delta\conc b}$. It is easy to verify that any side condition of the $\vdash \Diamond$ rule (which depends on the specific system, according to the table above), is still verified after the prefix replacement. We may then conclude with a $\vdash \Diamond$ rule.

The left modal rules are analogous.
\end{proof}

By using the previous lemma, we obtain the following.
\begin{proposition}[eigenposition]\label{teo:eigenpos1}
Given a proof $\Pi$ of a sequent $\Gamma\vdash\Delta$, we may always find a proof $\Pi'$ ending with $\Gamma\vdash\Delta$ where all eigenpositions are distinct from one another.
\end{proposition}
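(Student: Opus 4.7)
The plan is to proceed by straightforward induction on the height of the proof $\Pi$, using the prefix replacement lemma proved above as the main tool for relabeling. The point of the lemma is precisely to let us swap any eigenposition $\alpha\conc x$ for $\alpha\conc b$ with $b$ fresh, without changing the end-sequent; hence whenever a conflict arises we may resolve it by one such renaming.

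First I would fix a (possibly infinite) enumeration of the tokens we regard as ``available'' (i.e., tokens not yet used in the subproofs constructed so far), and I would process the rules from the leaves downwards. For an axiom there are no eigenpositions, so nothing has to be done. For a unary non-modal rule we apply the induction hypothesis to the unique subproof and keep its choice of eigenpositions. For a binary rule (cut, $\vdash\land$, $\lor\vdash$, $\to\vdash$) we apply the induction hypothesis separately to the two subproofs, obtaining proofs $\Pi_1$ and $\Pi_2$ each with pairwise distinct eigenpositions; if a token $x$ is used as an eigenposition on both sides, pick a fresh $b$ and replace $\Pi_2$ by $\Pi_2\rep{\delta\conc x}{\delta\conc b}$, where $\delta\conc x$ is the offending eigenposition in $\Pi_2$. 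The previous lemma guarantees that this operation yields a proof of the same end-sequent (note that the end-sequent of $\Pi_2$ cannot contain a p-formula whose position starts with $\delta\conc x$, since $\delta\conc x$ is an eigenposition in $\Pi_2$). Iterating over all the clashing eigenpositions gives the required disjointness across the two subproofs.

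For the modal rules $\vdash\Box$ and $\Diamond\vdash$, the rule introduces a new eigenposition $\alpha\conc x$. Applying the induction hypothesis to the subproof we obtain a proof $\Pi'$ with pairwise distinct eigenpositions; if $\alpha\conc x$ happens to coincide with one of them, pick a fresh token $b$ and pass to $\Pi'\rep{\alpha\conc x}{\alpha\conc b}$, then apply the modal rule with eigenposition $\alpha\conc b$ instead of $\alpha\conc x$. Since $\alpha\conc x\notin\iniz{\Gamma,\Delta}$ (side condition of the rule in $\Pi$), the end-sequent of the subproof is unchanged, and the side conditions for the new application of $\vdash\Box$ or $\Diamond\vdash$ are satisfied by freshness of $b$.

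The main obstacle, and the reason the previous lemma is set up the way it is, is making sure that the prefix replacement used to resolve a clash does not inadvertently create a \emph{new} clash or violate a side condition elsewhere in the subproof (in particular in the other modal rules, or in the constrained cuts and modal rules of $2_{{\K4}}$ and $2_{{\KK}}$). This is handled by always choosing $b$ genuinely fresh with respect to the whole subproof and the token $z$ being replaced, so that no position in the modified subproof contains $b$ except as the image of the renamed $z$, and by observing, exactly as in the lemma, that the constraints ``position $\eta$ occurs in the context'' or ``$\alpha\in\iniz{\Gamma_i,\Delta_i}$'' are stable under uniform prefix replacement. A finite iteration of such renamings therefore produces the desired proof $\Pi'$ with all eigenpositions distinct.
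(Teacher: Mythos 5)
Your proposal follows the paper's intended argument: the paper offers no proof beyond the remark that the proposition is obtained ``by using the previous lemma,'' and your induction on the proof, resolving clashes by prefix replacement with fresh tokens, is exactly that elaboration. There is, however, one step whose justification does not hold as stated. At a binary rule you replace $\Pi_2$ by $\Pi_2\rep{\delta\conc x}{\delta\conc b}$ and argue that this preserves the end-sequent because ``the end-sequent of $\Pi_2$ cannot contain a p-formula whose position starts with $\delta\conc x$, since $\delta\conc x$ is an eigenposition in $\Pi_2$.'' That is not automatic: the side condition $\delta\conc x\not\in\iniz{\Gamma,\Delta}$ is imposed only at the modal rule of which $\delta\conc x$ is the eigenposition, and a weakening (or a $\Box\vdash$, $\vdash\Diamond$, or another rule) applied \emph{below} that rule may perfectly well introduce a p-formula $\pf{C}{\delta\conc x\conc\eta}$ into the end-sequent of $\Pi_2$; in that case the global replacement changes the conclusion.

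The repair is the device you already use when the \emph{last} rule is modal: rename locally rather than globally. For each offending $\vdash\Box$ or $\Diamond\vdash$ occurrence inside $\Pi_2$ with eigenposition $\delta\conc x$, apply the replacement lemma only to the immediate subproof ending at its premise; there the side condition does guarantee that the context is untouched, so after re-applying the rule with eigenposition $\delta\conc b$ the conclusion of that rule---and hence everything below it, including the end-sequent of $\Pi_2$---is unchanged. Equivalently, strengthen the induction hypothesis to: for every finite set $F$ of tokens there is a proof of the same end-sequent whose eigenpositions are pairwise distinct and avoid $F$; the binary case then needs no a posteriori surgery at all. With this adjustment your argument is complete and coincides with what the paper takes for granted.
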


Proof $\Pi'$ differs from $\Pi$ only for the names of positions.
In practice we will freely use  such a renaming all the times it is necessary (or, in other words, proofs are de facto equivalence classes modulo renaming of eigenpositions).
In a similar way to the previous lemmas we may obtain the following, which allows the prefix replacement of arbitrary positions (once eigenpositions are considered as bound variables, and renamed so that any confusion is avoided). When we use prefix replacement for proofs we will always assume that the premises of the following lemma are satisfied, implicitly calling for eigenposition renaming if this is not the case.

\begin{lemma}\label{lemma:prefix-replacement}
Let $\beta$ be an arbitrary position. Let $\Gamma\vdash\Delta$ be a provable sequent,  let $\delta\conc z$ be a position, and let $\Pi$ be a 2-sequent proof of $\Gamma\vdash\Delta$, where all eigenpositions are distinct from one another, and are different from $\delta\conc z$ and from $\beta$. Then we may define the \emph{prefix replacement} $\Pi\rep{\delta\conc z}{\delta\conc \beta}$,  a proof with conclusion 
$\Gamma\rep{\delta\conc z}{\delta\conc \beta}\vdash\Delta\rep{\delta\conc z}{\delta\conc \beta}$.
\end{lemma}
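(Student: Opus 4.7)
The plan is to proceed by induction on the structure of $\Pi$, following closely the proof of the previous lemma. The hypothesis that all eigenpositions of $\Pi$ are distinct from one another and from both $\delta\conc z$ and $\beta$ (interpreted so that the tokens playing the role of eigenpositions do not occur in $\beta$ or in $\delta\conc z$, which can always be arranged via Proposition~\ref{teo:eigenpos1} by renaming) is exactly what is needed to avoid capture when $\beta$ is no longer a single fresh token but an arbitrary position. The base case of an axiom $\pf{A}{\alpha}\vdash\pf{A}{\alpha}$ is immediate: take $\pf{A}{\alpha\rep{\delta\conc z}{\delta\conc\beta}}\vdash\pf{A}{\alpha\rep{\delta\conc z}{\delta\conc\beta}}$. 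The identity, structural and propositional cases follow by a direct application of the induction hypothesis, since none of them imposes any positional side condition that interacts with the replacement.

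The non-trivial cases are the modal rules. For $\Box\vdash$ and $\vdash\Diamond$ (whose system-specific constraints are listed in Figure~\ref{Fig-constraints}), I would first apply the induction hypothesis to the immediate subproof and then check that the applicable side conditions are preserved. The shape conditions (that the secondary position is empty, a singleton, non-empty, etc.) concern only the explicit position displayed in the rule, hence are untouched by a prefix replacement; the ``companion formula'' condition for $2_\KK$ and $2_{\K4}$ likewise persists, because the existence of $\pf{B}{\alpha'\conc\beta'\conc\eta}$ somewhere in $\Gamma,\Delta$ is inherited by its replaced copy in $\Gamma\rep{\delta\conc z}{\delta\conc\beta},\Delta\rep{\delta\conc z}{\delta\conc\beta}$. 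The cut-rule constraint for the same two systems ($\alpha\in\iniz{\Gamma_i,\Delta_i}$) is handled identically.

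For the eigenposition rules $\vdash\Box$ and $\Diamond\vdash$, with eigenposition $\alpha'\conc x$, the hypothesis guarantees $\alpha'\conc x\neq\delta\conc z$ and $\alpha'\conc x\neq\beta$. The induction hypothesis applied to the subproof yields a proof with conclusion containing the replaced eigenposition $\alpha'\rep{\delta\conc z}{\delta\conc\beta}\conc x$. It then remains to verify the freshness condition $\alpha'\rep{\delta\conc z}{\delta\conc\beta}\conc x\notin\iniz{\Gamma\rep{\delta\conc z}{\delta\conc\beta},\Delta\rep{\delta\conc z}{\delta\conc\beta}}$. Suppose, for contradiction, that there is $\pf{B}{\gamma}\in\Gamma,\Delta$ with $\alpha'\rep{\delta\conc z}{\delta\conc\beta}\conc x\less{}\gamma\rep{\delta\conc z}{\delta\conc\beta}$. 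Since $x$ is an eigenposition token, by the renaming hypothesis it does not occur in $\delta$, $z$ or $\beta$, hence any occurrence of $x$ in the replaced position must come from an occurrence in the original. A case analysis on whether $\alpha'$ and $\gamma$ do or do not begin with $\delta\conc z$ then traces back to the original relation $\alpha'\conc x\less{}\gamma$, contradicting the side condition of the original $\vdash\Box$ (resp.~$\Diamond\vdash$) application. One may thus conclude by re-applying the rule.

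The main obstacle is precisely this last verification: ensuring that general-position replacement, unlike the fresh-token case of the previous lemma, does not accidentally create a new initial-segment relation involving an eigenposition. Everything turns on the correct reading of ``eigenpositions different from $\delta\conc z$ and from $\beta$'': understood as the tokens appearing in eigenpositions being disjoint from those appearing in $\delta\conc z$ and in $\beta$ (as can always be ensured by renaming, cf.\ Proposition~\ref{teo:eigenpos1}), the argument goes through uniformly for all five systems $2_\KK, 2_\D, 2_\T, 2_{\K4}, 2_{\S4}$, since the reasoning depends only on the shape of the $\vdash\Box$/$\Diamond\vdash$ side condition, which is common to all of them.
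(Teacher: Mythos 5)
Your overall strategy---induction on $\Pi$ exactly mirroring the fresh-token lemma, with the real work concentrated in showing that the eigenposition condition $\alpha'\conc x\not\in\iniz{\Gamma,\Delta}$ survives the replacement---is the route the paper intends (the paper in fact omits the proof entirely, saying only that the lemma is obtained ``in a similar way to the previous lemmas''). Your capture-avoidance argument for $\vdash\Box$ and $\Diamond\vdash$, resting on the reading that the eigenposition tokens are disjoint from the tokens of $\delta\conc z$ and $\beta$, is correct and is the right elaboration of the hint given in the fresh-token case.

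There is, however, one step that does not hold as written: the claim that the shape constraints on $\Box\vdash$ and $\vdash\Diamond$ ``concern only the explicit position displayed in the rule, hence are untouched by a prefix replacement.'' The displacement is not an intrinsic attribute of the premise p-formula; it is the difference between the premise position $\alpha'\conc\beta''$ and the conclusion position $\alpha'$, and the replacement acts on whole positions. If $\delta\conc z$ is an initial segment of $\alpha'\conc\beta''$ but not of $\alpha'$ --- for instance a $\Box\vdash$ inference from $\pf{A}{\delta\conc z}$ to $\pf{\Box A}{\delta}$, which is legal in every system and is not excluded by the hypothesis that $\delta\conc z$ is no eigenposition --- then the conclusion position is left fixed while the premise position becomes $\delta\conc\beta\conc\cdots$, so the displacement changes from a sequence containing $z$ to one containing $\beta$. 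For a truly arbitrary $\beta$ this can break the constraints of $2_{\T}$, $2_{\D}$, $2_{\KK}$ (the new displacement need not be empty or a singleton) and of $2_{\K4}$ (it may become empty). The statement is unproblematic for $2_{\S4}$; for the other systems one needs an additional compatibility condition relating $\beta$ to the system's constraint --- which does hold in the lemma's actual use inside the Mix Lemma, where $\beta$ is itself the displacement of a $\Box\vdash$ or $\vdash\Diamond$ of the same system. The paper's own statement is silent on this point too, but your proof asserts the preservation explicitly, and that assertion is precisely where the argument would fail.
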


%\green{More precisely it forbids to predicate the existence of an accessible world in an empty domain. 
%The soundness theorem will show the technical details.}
%\fbox{\bf \rosso{ ** QUESTO E' DEL TUTTO OUT OF CONTEXT (-sm)}
%}

%\begin{itemize}
%\item K: $q$ is a singleton list $[x]$ and moreover there is another formula $\pf{A}{p\conc q}$ in the sequent premise of the rule;
%\item KD: $q$ is a singleton list
%\item KT: $q$ is the empty list
%\item K4 $q$ is a non empty list
%\end{itemize}

The notion of proof, provable sequent and   height $h(\Pi)$ of a proof $\Pi$ are  standard.

\begin{notation}
In order to simplify the graphical representation of $   $proofs, we will
use a double deduction line to indicate application of a rule preceded or followed
by a sequence of structural rules. So  we will write
$$\provas{\Gamma\vdash\Delta}{\Sigma\vdash \Theta}{r}$$ when  the sequent $\Sigma\vdash \Theta$ has
been obtained from $\Gamma\vdash\Delta$ by means of an  application of  rule $r$
and of a
finite number of structural rules.
\end{notation}

%%%%%%%%%%%%%%%%%%%%%%%%%%%%%%%%%%%%%%%%%%%%%%%%%%
\subsection{2-sequents are complete}
We show in this section that the systems introduced in the previous section prove the same theorems of the Hilbert-style presentation of the corresponding logics:  if $\sys$ proves $A$, then $2_\sys$  proves $\vdash A^{<>}$. We start with the modal axioms; observe that the proof of each axiom satisfies the constraints on $\Box\vdash$ and $\vdash\Diamond$ of the corresponding 2-system. 

\subsubsection*{Axiom K}
\bigskip

\def\ka{
\prova{ % premessa
  \pf{B}{<x>} \vdash \pf{B}{<x>}
  \quad
  \pf{A}{<x>} \vdash \pf{A}{<x>}
  }
  { % conclusione
  \pf{A}{<x>},  \pf{A\to B}{<x>}\vdash \pf{B}{<x>}
  }
  { \to\vdash } %nome 
}

\def\kb{
\prova{ % premessa
  \ka
  }
  { % conclusione
  \pf{A}{<x>},  \pf{\Box(A\to B)}{<\ >}\vdash \pf{B}{<x>}
  }
  { \Box\vdash } %nome 
}

\def\kc{
\provas{ % premessa
\kb	  }
  { % conclusione
    \pf{\Box A}{<\ >},  \pf{\Box(A\to B)}{<\ >}\vdash \pf{B}{<x>}
  }
  { \Box\vdash } %nome 
}

\def\kd{
\prova{ % premessa
  \kc			
  }
  { % conclusione
    \pf{\Box A}{<\ >},  \pf{\Box(A\to B)}{<\ >}\vdash \pf{\Box B}{<\ >}
  }
  {  \vdash \Box } %nome 
}

\def\ke{
\provas{ % premessa
  \kd
  }
  { % conclusione
  \pf{\Box(A\to B)}{<\ >} \vdash \pf{\Box A \to \Box B}{<\ >}
  }
  { \vdash\to } %nome 
}

\def\kf{
\prova{ % premessa
\ke 
  }
  { % conclusione
   \vdash \pf{\Box(A\to B)\to(\Box A \to \Box B)}{<\ >}
  }
  { \vdash\to } %nome 
}

\[
\kf
\]

\subsubsection*{Axiom D}

\def\da{
\prova{ % premessa
  \pf{A}{<x>}\vdash \pf{A}{<x>}
  }
  { % conclusione
    \pf{\Box A}{<\ >}\vdash \pf{A}{<x>}
  }
  { \Box\vdash } %nome 
}

\def\db{
\prova{ % premessa
  \da
  }
  { % conclusione
  \pf{\Box A}{<\ >}\vdash \pf{\Diamond A}{<\ >}
  }
  { \vdash\diamond } %nome 
 } 

\def\dc{
\prova{ % premessa
  \db			
  }
  { % conclusione
     \vdash \pf{\Box A \to \Diamond A}{<\ >}
  }
  { \vdash\to } %nome 
}

$$
\dc
$$

\subsubsection*{Axiom T}
\def\ta{
\prova{ % premessa
  \pf{A}{<\ >}\vdash \pf{A}{<\ >}
  }
  { % conclusione
    \pf{\Box A}{<\ >}\vdash \pf{A}{<\ >}
  }
  { \Box\vdash } %nome 
}
\def\tb{
\prova{ % premessa
  \ta
  }
  { % conclusione
  \pf{\Box A\to A}{<\ >}
  }
  { \vdash\to } %nome 
}

$$
\tb
$$

\subsubsection*{Axiom 4}

\def\qa{
\prova{ % premessa
  \pf{A}{<y,x>}\vdash \pf{A}{<y,x>}
  }
  { % conclusione
    \pf{\Box A}{<\ >}\vdash \pf{A}{<y,x>}
  }
  { \Box\vdash } %nome 
}

\def\db{
\prova{ % premessa
  \qa
  }
  { % conclusione
      \pf{\Box A}{<\ >}\vdash \pf{\Box A}{<y>}
  }
  { \vdash\Box } %nome 
 } 

\def\dc{
\prova{ % premessa
  \db			
  }
  { % conclusione
  \pf{\Box A}{<\ >}\vdash \pf{\Box \Box A}{<\ >}
  }
  { \vdash\Box } %nome 
}

\def\de{
\prova{ % premessa
  \dc		
  }
  { % conclusione
\vdash\pf{\Box A\to \Box \Box A}{<\ >}
  }
  { \vdash\to } %nome 
}

$$
\de
$$

\bigskip
Closure under \textbf{GEN}  is obtained by showing that all positions in a provable sequent may be ``lifted'' by any prefix. Observe first that, for $\Gamma=A_1^{\gamma_1},\ldots, A_n^{\gamma_n}$,  we have $\Gamma\rep{<>}{\beta} = A_1^{\beta\conc\gamma_1},\ldots, A_n^{\beta\conc\gamma_n}$.
Finally, closure under \textbf{MP} is trivially obtained by means of the cut rule.
\begin{proposition}[lift]\label{lift:basics}
Let $\sys$ be one of the modal systems {\KK}, {\D}, {\T}, {\K4}, {\S4}, and let $\beta$ be a  position.
If 
$\Gamma\vdash \Delta$
is provable in $2_\sys$, so is the sequent 
$\Gamma\rep{<>}{\beta}\vdash \Delta\rep{<>}{\beta}$.
\end{proposition}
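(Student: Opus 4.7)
The plan is to prove the statement by induction on the height of a proof $\Pi$ of $\Gamma\vdash\Delta$ in $2_\sys$. Note that since every position has $<>$ as a prefix, the operation $\rep{<>}{\beta}$ simply prepends $\beta$ to every position appearing in the sequent; the key technical fact we will invoke repeatedly is that $\iniz{\Gamma\rep{<>}{\beta}}$ consists exactly of all prefixes of $\beta$ together with all sequences of the form $\beta\conc\gamma$ with $\gamma\in\iniz{\Gamma}$. The base case is immediate: the axiom $\pf{A}{\alpha}\vdash\pf{A}{\alpha}$ lifts to the axiom $\pf{A}{\beta\conc\alpha}\vdash\pf{A}{\beta\conc\alpha}$. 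All structural and propositional rules are handled uniformly by applying the induction hypothesis to their premises: they act identically on positions on both sides of the conclusion, so the property is preserved by prepending a common prefix.

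The interesting cases are the modal rules and (for $2_{\KK}$, $2_{\K4}$) the cut rule. Consider the right necessitation rule: if the last step is $\vdash\Box$ with eigenposition $\alpha\conc x$, I first invoke Proposition~\ref{teo:eigenpos1} to rename $x$ to a token fresh both for the proof and for $\beta$. Applying the induction hypothesis to the premise yields a proof of $\Gamma\rep{<>}{\beta}\vdash\pf{A}{\beta\conc\alpha\conc x},\Delta\rep{<>}{\beta}$. The side condition then survives: if $\beta\conc\alpha\conc x\less{} \beta\conc\gamma$ for some $\pf{B}{\gamma}$ in $\Gamma,\Delta$, then (using freshness of $x$ with respect to $\beta$) one can cancel $\beta$ on both sides to obtain $\alpha\conc x\less{}\gamma$, contradicting the side condition of the original rule. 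The $\Diamond\vdash$ case is dual.

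For $\Box\vdash$ and $\vdash\Diamond$, the constraints are of two kinds. The shape constraints on the subsidiary position (``$\beta'$ is empty or singleton'', ``$\beta'$ is a singleton'', ``$\beta'$ is nonempty'') depend only on $\beta'$ itself, which is not touched by the lifting. The contextual constraints (requiring a witness formula $\pf{B}{\alpha\conc\beta'\conc\eta}$ in $\Gamma$ or $\Delta$) lift cleanly: the witness becomes $\pf{B}{\beta\conc\alpha\conc\beta'\conc\eta}$ in $\Gamma\rep{<>}{\beta}$ or $\Delta\rep{<>}{\beta}$, which is exactly what the rule asks for when applied with principal position $\beta\conc\alpha$. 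The cut rule constraint of $2_{\KK}$ and $2_{\K4}$ is likewise preserved: $\alpha\in\iniz{\Gamma_i,\Delta_i}$ lifts to $\beta\conc\alpha\in\iniz{\Gamma_i\rep{<>}{\beta},\Delta_i\rep{<>}{\beta}}$ by the description of $\iniz{-}$ above.

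The only real obstacle is the eigenposition constraint in $\vdash\Box$ and $\Diamond\vdash$; everything else is a mechanical verification that constraints phrased in terms of positions remain true when all positions are uniformly prepended with $\beta$. That obstacle is overcome exactly as in the proof of Lemma~\ref{lemma:prefix-replacement}, by a preliminary renaming of eigenpositions to be fresh for $\beta$, which is legitimate by Proposition~\ref{teo:eigenpos1}.
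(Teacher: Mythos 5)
Your proof is correct and follows essentially the same route as the paper, which simply states the result as a standard induction on the derivation with renaming of eigenpositions and a routine check that the constraints survive; you supply the details the paper leaves implicit (including the cut constraint for $2_{{\KK}}$ and $2_{{\K4}}$, which is worth checking). The only remark is that the freshness of $x$ with respect to $\beta$ is not actually needed to cancel the common prefix $\beta$ in the eigenposition check --- left cancellation of concatenation suffices --- so the preliminary renaming is harmless but dispensable.
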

\begin{proof}
Like Lemma~\ref{lemma:prefix-replacement}: Standard induction on derivation (with suitable renaming of eigenpositions). It is easily verified that the constraints on the modal rules remain satisfied.
\end{proof}

\begin{corollary}
Let $\sys$ be one of the modal systems {\KK}, {\D}, {\T}, {\K4}, {\S4}.
\\
If $\vdash \pf{A}{<\ >}$ is provable in 
$2_\sys$ so is the sequent 
$\vdash \pf{\Box A}{<\ >}$.
\end{corollary}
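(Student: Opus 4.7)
The plan is to derive the corollary as an immediate consequence of the lift proposition (Proposition~\ref{lift:basics}) combined with a single application of the $\vdash\Box$ rule. Given a proof $\Pi$ of $\vdash \pf{A}{<\ >}$ in $2_\sys$, I would first pick a token $x$ that does not occur anywhere in $\Pi$ (renaming eigenpositions beforehand if necessary, as guaranteed by Proposition~\ref{teo:eigenpos1}).

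Next, I would apply Proposition~\ref{lift:basics} with $\beta = <x>$, obtaining a proof of
\[
\vdash \pf{A}{<\ >}\rep{<\ >}{<x>},
\]
which by the definition of prefix replacement is just $\vdash \pf{A}{<x>}$. Since this works uniformly across all five systems (the lift proposition is already stated for $\sys \in \{\KK, \D, \T, \K4, \S4\}$), there is no case split to perform here.

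Finally, I would apply the $\vdash\Box$ rule with $\alpha = <\ >$ and eigenposition $<\ >\conc x = <x>$ to obtain $\vdash \pf{\Box A}{<\ >}$. The side condition requires that $<x> \notin \iniz{\Gamma,\Delta}$, and since both $\Gamma$ and $\Delta$ are empty in our context, this condition is vacuously satisfied, and freshness of $x$ ensures no conflicts with eigenpositions of $\Pi$.

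I do not anticipate any real obstacle: the essential work has already been done in Proposition~\ref{lift:basics}, which handles the induction on derivations and the preservation of all the side conditions on the modal rules. The only small care needed is the fresh-token choice, which is why the preliminary renaming of eigenpositions (Proposition~\ref{teo:eigenpos1}) must be invoked first.
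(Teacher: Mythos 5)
Your proposal is correct and follows exactly the route the paper intends: lift the empty prefix to a fresh singleton position $<x>$ via Proposition~\ref{lift:basics}, then close with one application of $\vdash\Box$, whose side condition is vacuous because the context is empty. Nothing further is needed.
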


\begin{theorem}[weak completeness]
Let $\sys$ be one of the modal systems {\KK}, {\D}, {\T}, {\K4}, {\S4}. If  $\vdash_\sys A$,  the sequent $\vdash \pf{A}{<\ >}$ is provable in $2_\sys$.
\end{theorem}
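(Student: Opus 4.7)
The plan is a straightforward induction on the length of the Hilbert-style derivation of $A$ in $\sys$. For each rule of that derivation, I need to exhibit a corresponding derivation in $2_\sys$ of the sequent $\vdash \pf{A}{<\ >}$.

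First, I would dispatch the axiom cases. The three propositional schemas (1--3) admit uniform 2-sequent proofs that only use identity, propositional, and structural rules while keeping every p-formula at a single arbitrary position; no modal constraint is triggered, so these derivations exist in every $2_\sys$. The modal axiom K has been shown derivable in $2_{\S4}$ by the explicit derivation given above, and inspection shows that the $\Box\vdash$ and $\vdash\Box$ applications respect the constraints of every weaker system as well (the eigenposition $x$ is always fresh in the context, and the auxiliary formula $\pf{\Box A}{<\ >}$ or $\pf{\Box(A\to B)}{<\ >}$ witnesses the extra ``context non-emptiness'' constraint needed by $2_{\K4}$ and $2_{\KK}$). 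The system-specific axioms D, T, 4 have similarly been derived above, and again a direct inspection verifies the relevant constraints for the system in which that axiom is available.

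Next, I handle the two inference rules. Closure under \textbf{MP}: if $\vdash \pf{A\to B}{<\ >}$ and $\vdash \pf{A}{<\ >}$ are derivable in $2_\sys$, then by $\to\vdash$ applied to the identity $\pf{B}{<\ >}\vdash\pf{B}{<\ >}$ together with the second derivation, we get $\pf{A\to B}{<\ >}\vdash \pf{B}{<\ >}$, and cutting against the first derivation yields $\vdash \pf{B}{<\ >}$. For $2_{\KK}$ and $2_{\K4}$ the cut must satisfy the constraint on $\alpha$, but here $\alpha = <\ >$ is trivially in $\iniz{\Gamma,\Delta}$ (for every non-empty context), so the constraint is met. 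Closure under \textbf{NEC} is precisely the content of the Corollary preceding the theorem, which in turn is an immediate consequence of Proposition~\ref{lift:basics}.

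The main (very mild) obstacle is really a bookkeeping one: in every case one must verify that the modal-rule constraints of the specific system $2_\sys$ are respected by the derivations exhibited. This is why the axiom and closure derivations have been presented individually above—each of them must be checked against the table in Figure~\ref{Fig-constraints}. Once this verification is done for the axioms and the two closure rules, the induction is immediate: the Hilbert derivation of $A$ is traversed top-down and each step is replaced by the corresponding 2-sequent fragment, producing the desired derivation of $\vdash \pf{A}{<\ >}$ in $2_\sys$.
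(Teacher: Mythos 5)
Your proposal is correct and follows essentially the same route as the paper: derive each axiom schema explicitly (checking the per-system constraints of Figure~\ref{Fig-constraints}), obtain closure under \textbf{MP} via cut (noting the cut constraint for $2_{{\KK}}$, $2_{{\K4}}$ is met because the right premise's succedent is non-empty), and closure under \textbf{NEC} via the lift proposition; the theorem then follows by induction on the Hilbert derivation. One small correction: in the K-axiom derivation the witness required by the context constraint of $2_{{\KK}}$ and $2_{{\K4}}$ must sit at a position extending $\alpha\conc\beta$, so it is the formula $\pf{A}{<x>}$ or $\pf{B}{<x>}$ that does the job, not the boxed formulas at position $<\ >$.
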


The converse of this theorem could be proved syntactically by a long and tedious work inside the axiomatic systems; instead, we will obtain it as Corollary~\ref{cor:interpr}, by a semantic argument.

\subsection{Cut elimination}\label{Sect:cut-elim}

We prove in this section the cut-elimination theorem for the 2-sequent systems we have introduced, adapting ideas
and techniques from~\cite{Girard:ptlc}.
We start with the standard notions of \emph{subformula} and \emph{degree}.

\begin{definition}[subformula]
%Let $\mathcal{S}$ range over the set of systems $2_{K}, 2_{D}, 2_{T}, 2_{4}, 2_{S4}$.
  The set $Sub(\pf{A}{\alpha})$ of \textit{subformulas}  of a formula $\pf{A}{\alpha}$  is recursively  defined as follows:
  \begin{enumerate}
  \item[] $Sub(\pf{p}{\alpha})=\{\pf{p}{\alpha} \}$ if $p$ is a proposition symbol;
  \item[] $Sub(\pf{\lnot A}{\alpha}) = \{\pf{\lnot A}{\alpha} \} \cup Sub(\pf{A}{\alpha})$;
  \item[] $Sub(\pf{ A \# B}{\alpha}) = \{\pf{A \# B}{\alpha} \} \cup
    Sub(\pf{A}{\alpha})\cup Sub(\pf{B}{\alpha}), $ when $\#\in\{\to,\lor,\land\};$
  \item[] $Sub(\pf{\# A}{\alpha}) = \{\pf{\# A}{\alpha}  \} \cup
    \{Sub(\pf{A}{\alpha\conc \beta}): \beta\in P, \}$ 
    when $\#\in\{\Box,\Diamond\}.$
  \end{enumerate}
\end{definition}

\begin{definition}[degree] The \emph{degree} of modal formulas, p-formulas, and 2-sequent proofs are defined as follows.
\begin{enumerate}
\item
  The \emph{degree  of a modal formula} $A$,  $\deg(A)$,  is
  recursively  defined as:

  \begin{enumerate}
  \item $\deg(p) = 0$ if $p$ is a proposition symbol;
  \item $\deg(\lnot A) =  \deg(\Box A) = \deg(\Diamond A) = \deg(A)+1$;    % tolto \deg(\conc A) (-sm)
  \item $\deg(A \land B) = \deg(A\lor B) = \deg(A\to B) = \max\{\deg(A),\deg(B)\}+1$.
  \end{enumerate}
  
\item The \emph{degree of a p-formula} $\pf{A}{\alpha}$, $\deg(\pf{A}{\alpha})$, is just $\deg(A).$

\item The \emph{degree of a proof} $\Pi$, $\grado{\Pi}$, is the
  natural number   defined as follows: $$\grado{\Pi}=\left\{\begin{array}{cl} 0 & \mbox{if $\Pi$ is cut-free;}\\ \sup\{\deg(\pf{A}{\alpha})+1: \pf{A}{\alpha}\ \mbox{is a cut formula
      in}\ \Pi\}& \mbox{otherwise.}\end{array}\right.$$

\end{enumerate}
\end{definition}

%%%%%%%%%%%%%%%%%%%%%%%%%%%%%
%%%%%cut elimination
%%%%%%%%%%%%%%%%%%%%%%%%%%%%%

Let $\Gamma$ be a sequence of formulas. We denote by $\cana{\Gamma}$ the
sequence obtained by removing all  occurrences  of $\pf{A}{\alpha}$
in $\Gamma.$ When writing  $\Gamma,\cana{\Gamma'}$ we actually mean
$\Gamma,(\cana{\Gamma'}).$
In the sequel, ordered pairs of natural numbers are intended to be lexicographically ordered. Hence one
can make proofs  by induction on pairs of numbers. The height $h(\Pi)$ of a proof $\Pi$ is defined in the usual way.

We will prove two different "mix lemmata", to take into account that the cut-rule for the systems $2_{{\KK}}$ and $2_{{\K4}}$ have special constraints, which are mirrored into the hypothesis of the lemma.

\begin{lemma}[Mix Lemma for $2_{{\D}}$, $2_{{\T}}$, $2_{{\S4}}$]\label{mainlemma} 
Let $\mathcal{S}$ be one of the systems $2_{{\D}}$, $2_{{\T}}$, $2_{{\S4}}$.
Let $n\in{\NN}$ and
  let $\pf{A}{\alpha}$ be a formula of degree $n$. Let now $\Pi$, $\Pi'$ be proofs of the sequents 
  $\Gamma\vdash \Delta$ and $\Gamma'\vdash \Delta'$, respectively,  satisfying the property $\grado{\Pi},\grado{\Pi'}\le n$. Then   one can obtain in an effective way from $\Pi$ and $\Pi'$ a  proof $\mix{\Pi}{\Pi'}{\pf{A}{\alpha}}$
  of the sequent $\Gamma,\cana{\Gamma'}\vdash\cana{\Delta},\Delta'$ satisfying the property $\grado{\mix{\Pi}{\Pi'}{\pf{A}{\alpha}}}\le n.$
\end{lemma}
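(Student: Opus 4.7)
The plan is to adapt the classical Gentzen--Girard strategy of eliminating the mix by pushing it upwards through the two proofs, and to argue by lexicographic double induction on the pair $\langle n, h(\Pi)+h(\Pi')\rangle$. The outer induction on $n$ handles the principal cases, where a mix on a composite formula of degree $n$ is reduced to mixes on its proper subformulas (of strictly smaller degree); the inner induction on the sum of heights handles the commutative cases, where the mix is pushed past a rule whose principal formula is not $\pf{A}{\alpha}$. The proof then splits into cases on the last rule of $\Pi$ and of $\Pi'$; most of them are exactly as in ordinary propositional sequent calculus.

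I would first handle the trivial and commutative cases, which account for the bulk of the work but are routine. If $\pf{A}{\alpha}$ does not occur in $\Delta$ or does not occur in $\Gamma'$, the required conclusion follows from $\Pi$ or $\Pi'$ by weakenings and exchanges, with no new cut. If either proof is the axiom $\pf{A}{\alpha}\vdash\pf{A}{\alpha}$, the other proof (up to structural rules) already provides the mix. Whenever $\pf{A}{\alpha}$ is not the principal formula of the last rule of $\Pi$ or of $\Pi'$, the mix commutes with that rule: apply the inner inductive hypothesis to each premise paired with the other proof, and re-apply the rule below. For the modal commutations, the implicit renaming of eigenpositions provided by Proposition~\ref{teo:eigenpos1} is used to keep the eigenposition of $\vdash\Box$ or $\Diamond\vdash$ disjoint from the positions occurring in the other proof, so that the side condition $\alpha\conc x\notin\iniz{\Gamma,\Delta}$ remains satisfied.

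When $\pf{A}{\alpha}$ is principal in both last rules and $A$ is a propositional combination ($\neg B$, $B\wedge C$, $B\vee C$ or $B\to C$), the reduction is exactly the classical Gentzen one: the mix on $\pf{A}{\alpha}$ is replaced by one or two mixes on $\pf{B}{\alpha}$ and $\pf{C}{\alpha}$, which have strictly lower degree and are therefore handled by the outer inductive hypothesis. The position label $\alpha$ plays no active role and is simply inherited by the subformulas.

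The only essentially new case, and the one I expect to be the main obstacle, is the modal principal case. Take $A=\Box B$: $\Pi$ ends with a $\vdash\Box$ whose premise $\Pi_0$ derives $\Gamma \vdash \pf{B}{\alpha\conc x},\Delta_0$ with eigenposition $\alpha\conc x\notin\iniz{\Gamma,\Delta_0}$, and $\Pi'$ ends with a $\Box\vdash$ whose premise $\Pi'_0$ derives $\Gamma'_0,\pf{B}{\alpha\conc\beta}\vdash\Delta'$ for some $\beta$ satisfying the side condition of the ambient system. After renaming the eigenposition $x$ to a token fresh for $\beta$ and for $\Pi'$ via Proposition~\ref{teo:eigenpos1}, I invoke Lemma~\ref{lemma:prefix-replacement} to transform $\Pi_0$ into a proof $\Pi_0^{\ast}$ of $\Gamma \vdash \pf{B}{\alpha\conc\beta},\Delta_0$ by substituting $\alpha\conc x$ with $\alpha\conc\beta$; the side condition $\alpha\conc x\notin\iniz{\Gamma,\Delta_0}$ ensures that $\Gamma$ and $\Delta_0$ are themselves unaffected. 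Mixing $\Pi_0^{\ast}$ with $\Pi'_0$ on the lower-degree formula $\pf{B}{\alpha\conc\beta}$ and appealing to the outer inductive hypothesis yields the required proof, up to a final mopping up of residual occurrences of $\pf{A}{\alpha}$ in the contexts (handled by a further mix at the same degree $n$ but with strictly smaller total height, which is dispatched by the inner inductive hypothesis). The case $A=\Diamond B$ is dual. This eigenposition substitution is the 2-sequent analogue of the eigenvariable substitution that is used in first-order cut elimination, and is exactly what Lemma~\ref{lemma:prefix-replacement} was proved for. It is essential that $\alpha\conc\beta$ is imported from $\Pi'$, since it then automatically meets whichever side condition the ambient system imposes on $\Box\vdash$; this is what makes the argument uniform across $2_{{\D}}$, $2_{{\T}}$ and $2_{{\S4}}$, with no case analysis on the system required.
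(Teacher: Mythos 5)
Your setup (induction organisation aside), your treatment of the axiom/commutative cases, and your use of eigenposition renaming plus Lemma~\ref{lemma:prefix-replacement} to turn the premise $\Gamma\vdash\pf{B}{\alpha\conc x},\Delta_1$ into $\Gamma\vdash\pf{B}{\alpha\conc\beta},\Delta_1$ all match the paper. The gap is in the principal modal case, where you replace the paper's single explicit lower-degree cut by a recursive call to the mix on $\pf{B}{\alpha\conc\beta}$, justified by an ``outer induction on $n$''. That appeal is not licensed by the statement you are proving: the lemma, instantiated at degree $m=\deg(B)<n$, requires the two input proofs to have degree $\le m$, whereas $\Pi_0^{\ast}$ and $\Pi'_0$ are only guaranteed to have degree $\le n$ (they may contain cuts on formulas of degree up to $n$). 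You cannot first lower their degree without invoking cut elimination, which itself rests on this lemma. Two further problems compound this: a mix on $\pf{B}{\alpha\conc\beta}$ deletes \emph{all} occurrences of that p-formula, including side occurrences in $\Gamma'_1$ or $\Delta_1$ that must survive in the target sequent $\Gamma,\cana{\Gamma'_1}\vdash\cana{\Delta_1},\Delta'$; and your ``mopping up'' of the residual $\pf{A}{\alpha}$ occurrences by a further mix is not covered by the height induction, since that further mix would pair $\Pi$ or $\Pi'$ with a freshly constructed proof whose height is not bounded by the heights of the original subproofs.

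The paper's proof avoids all three difficulties and needs no induction on degree inside the lemma: it inducts only on $\langle h(\Pi),h(\Pi')\rangle$. In the $\Box$ case it applies the height IH to the pairs $\langle\Pi_1\rep{\alpha\conc x}{\alpha\conc\beta},\Pi'\rangle$ and $\langle\Pi,\Pi'_1\rangle$---i.e.\ each premise is mixed, still on $\pf{A}{\alpha}$, against the \emph{whole} other proof, which removes the residual occurrences of $\pf{A}{\alpha}$ for free---and then joins the two results by an explicit $Cut$ on $\pf{B}{\alpha\conc\beta}$ followed by contractions. Since $\deg(\pf{B}{\alpha\conc\beta})+1=n$, this new cut keeps $\grado{\mix{\Pi}{\Pi'}{\pf{A}{\alpha}}}\le n$; the cut is not eliminated here but is handed off to the Cut Elimination Theorem, where the induction on $\grado{\cdot}$ actually lives. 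You should restructure your principal case along these lines; the rest of your argument then goes through.
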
 
\begin{proof} The proof proceeds in a standard way, by induction on the pair $<h(\Pi),h(\Pi')>$. We highlight only the main points. 
 Let $\Pi$ and $\Pi'$ be
  $$\urule %
  {\left\{ %
      \provasl{ %
        \Pi_i } %
      { %
        \Gamma_i\vdash\Delta_i %
        } %
      {} %
    \right\}_{i\in I }} %
  {\Gamma\vdash\Delta } %
  {\  r }  %
  \qquad\mbox{and}\qquad
  \urule %
  {\left\{ %
      \provasl{ %
        \Pi'_j } %
      { %
        \Gamma'_j\vdash\Delta'_j %
        } %
      {} %
    \right\}_{j\in I' }} %
  {\Gamma'\vdash\Delta' } %
  {\ r' } $$ %
respectively, where $I$ and $I'$ are  $\emptyset$ (in case of an axiom), $
\{1\}$ or $\{1,2\}$.
We proceed by cases.
\begin{enumerate}
\item\label{ax} $r$ is $Ax.$  

If $\Gamma\vdash\Delta$ is $
  \pf{A}{\alpha}\vdash\pf{A}{\alpha},$ then one gets 
  $\mix{\Pi}{\Pi'}{\pf{A}{\alpha}}$ from $\Pi'$ by means of a suitable sequence of
  structural rules.

If $\Gamma\vdash\Delta$ is $
  \pf{B}{\beta}\vdash\pf{B}{\beta},$ for $B\ne A$ or  $\beta\ne \alpha,$ then one gets 
  $\mix{\Pi}{\Pi'}{\pf{A}{\alpha}}$  from $\Pi$ by a suitable sequence of structural rules. 

\item\label{axbis} $r'$ is $Ax.$\\ %
This case is symmetric to case~\ref{ax}.

\item\label{struct}$r$ is a structural rule.\\ %
Apply  induction hypothesis to the pair $<\Pi_1, \Pi'>$, then apply a suitable
  sequence of structural rules to get the conclusion.
  
\item\label{structbis} $r'$ is a structural rule\\ %
This case is symmetric to \ref{struct}.

\item\label{logrulenonA} $r$ is a cut or a logical rule not
  introducing   $\pf{A}{\alpha}$ to the right. \\ %
  Apply the induction hypothesis to each  pair $<\Pi_i,\Pi'>,$ 
   so obtaining the   proof
  $\mix{\Pi_i}{\Pi'}{\pf{A}{\alpha}},$ for $i\in I.$ The   proof
  $\mix{\Pi}{\Pi'}{\pf{A}{\alpha}}$ is then
$$\provas %
  {\left\{ %
      \provasl{ %
       \mix{\Pi_i}{\Pi'}{\pf{A}{\alpha}}
        } %
      { %
        \Gamma_i, \cana{\Gamma'} \vdash 
        \cana{\Delta_i} ,\Delta' %
        } %
      {} %
    \right\}_{i\in I }\vspace{0.1cm}} %
  {\Gamma, \cana{\Gamma'}\vdash\cana{\Delta},\Delta'  } %
  {r }  %
  $$

\item\label{logrulenonAbis} $r'$ is a cut or a logical rule
  not introducing
  $\pf{A}{\alpha}$ to  the left. \\ %
 This case is symmetric to~\ref{logrulenonA}.

\item\label{logruleA} $r$ is a logical rule introducing $\pf{A}{\alpha}$ to
  the right and $r'$ is a logical rule introducing $\pf{A}{\alpha}$ to the
  left.
  \begin{enumerate}
  \item $r$ is a propositional rule.\\ %
    This subcase is treated  as in the first order case  (see, for
    instance, \cite{Girard:ptlc}  or \cite{takeuti:pt}).

\item \label{box}$A$ is  $\Box B.$\\
  Let $\Pi$ and $\Pi'$ be
  $$
  \urule %
  {\provasl{\Pi_1}{\Gamma\vdash\pf{B}{\alpha\conc x},\Delta_1}{} } %
  {\Gamma\vdash \pf{A}{\alpha},\Delta_1} %
  { }
  \qquad\mbox{and}\qquad
  \urule %
  {\provasl{\Pi'_1}{\Gamma'_1,\pf{B}{\alpha\conc\beta}\vdash\Delta'}{} } %
  {\Gamma'_1,\pf{A}{\alpha}\vdash\Delta' } %
  { }
  $$respectively.
  Apply the induction hypothesis to the pairs of proofs 
  $<\Pi_1\rep{\alpha\conc x}{\alpha\conc\beta},\Pi'>$ and $<\Pi,\Pi'_1>,$ obtaining $\mix{\Pi_1\rep{\alpha\conc x}{\alpha\conc\beta}}{\Pi'}{\pf{A}{\alpha}}$ and
  $\mix{\Pi}{\Pi'_1}{\pf{A}{\alpha}}$, respectively. 
  The   proof $\mix{\Pi}{\Pi'}{\pf{A}{\alpha}}$ is then
  $${
    \provas %
    {  %
      \prova %
      {  %
        \provasl % II 
        {  %
          \mix{\Pi_1\rep{\alpha\conc x}{\alpha\conc\beta}}{\Pi'}{\pf{A}{\alpha}}
        } %
        {  %
          \Gamma,\cana{\Gamma'_1}\vdash\pf{B}{\alpha\conc\beta},\cana{\Delta_1},\Delta'
        } %
        {   } %
%%%%
        \provasl %
        {  %
          \mix{\Pi}{\Pi'_1}{\pf{A}{k}}
        } %
        {  %
          \Gamma,\cana{\Gamma'_1},\pf{B}{\alpha\conc\beta}\vdash\cana{\Delta_1},\Delta'
        } %
        {   } %
      } %
      {  %
        \Gamma,\cana{\Gamma'_1},\Gamma,\cana{\Gamma'_1}
        \vdash
        \cana{\Delta_1},\Delta',\cana{\Delta_1},\Delta'
      } %
      { Cut  } %                       
    } %
    {  %
      \Gamma,\cana{\Gamma'_1} \vdash\cana{\Delta_1} ,\Delta'
    } %
    {   }
  }$$
  
%\item\label{box}  $A$ is $ \Box B.$\\
%  Let $\Pi$ and $\Pi'$ be 
%  $$
%  \prova %
%  {
%  \left\{
%    \provasl{\Pi_j}{\Gamma\vdash\pf{B}{k+j},\Delta_1}{}
%  \right\}_{j\in \omega }
%} %
%{\Gamma\vdash\pf{A}{k},\Delta_1 } %
%{ }
%\qquad\mbox{and}\qquad
%\prova %
%{\provasl{\Pi'_1}{\Gamma'_1,\pf{B}{k+n}\vdash\Delta'}{} } %
%{\Gamma'_1,\pf{A}{k}\vdash\Delta' } %
%{}
%$$respectively.
%Apply the induction hypothesis to the pairs of   proofs
%$<\Pi_n,\Pi'>$  and $<\Pi,\Pi'_1>,$ obtaining $\mix{\Pi_n}{\Pi'}{\pf{A}{k}}$ and
%$\mix{\Pi}{\Pi'_1}{\pf{A}{k}}$ respectively. 
%The   proof $\mix{\Pi}{\Pi'}{\pf{A}{k}}$ is then
%$${
%\provas %
%{  %
%  \prova %
%  {  %
%    \provasl % II 
%    {  %
%      \mix{\Pi_n}{\Pi'}{\pf{A}{k}}
%      } %
%    {  %
%      \Gamma,\cana{\Gamma'_1}\vdash\pf{B}{k+n},\cana{\Delta_1},\Delta'
%      } %
%    {   } %
%%%%%
%    \provasl %
%    {  %
%      \mix{\Pi}{\Pi'_1}{\pf{A}{k}}
%      } %
%    {  %
%      \Gamma,\cana{\Gamma'_1},\pf{B}{k+n}\vdash\cana{\Delta_1},\Delta'
%      } %
%    {   } %
%    } %
%  {  %
%    \Gamma,\cana{\Gamma'_1},\Gamma,\cana{\Gamma'_1}
%    \vdash
%    \cana{\Delta_1},\Delta',\cana{\Delta_1},\Delta'
%    } %
%  { Cut  } %                       
%  } %
%{  %
%\Gamma,\cana{\Gamma'_1} \vdash\cana{\Delta_1} ,\Delta'
%  } %
%{   }
%}$$
\item  $A$ is $ \Diamond B.$
  This subcase  is symmetric to~\ref{box}.
\end{enumerate}
\end{enumerate}

In all cases, since the additional cuts are performed on
subformulas of $\pf{A}{\alpha},$ from the assumptions $\deg(\pf{A}{\alpha})=n$
and $\grado{\Pi},\grado{\Pi'}\le n$ we immediately get
$\grado{\mix{\Pi}{\Pi'}{\pf{A}{\alpha}}}\le n.$

\end{proof}

The above proof does not go through for the systems $ 2_{\KK}$ and $2_{\K4}$, because of the constraint on the context for the rules $\Box\vdash$ and $\vdash\Diamond$. Indeed, the case (5) of the proof would fail, as shown by the following two proof fragments.
Let $\alpha=\beta\conc x$ be the position of the statement of the lemma,
 $$
\urule %
{\provasl{\Pi_1}{
\vdash\pf{B}{\beta\conc x},    
\pf{A}{\beta\conc x}	
}{} } %
{
\vdash\pf{\Diamond B}{\beta},    
\pf{A}{\beta\conc x}
} %
{ }
\qquad\mbox{and}\qquad
\provasl{
\Pi'
}
{
\pf{A}{\beta\conc x}\vdash \pf{C}{\beta}
}{}
$$
If we apply the induction hypothesis to the pair $\langle \Pi_1, \Pi'\rangle$
we obtain 

$$
\provasl{\mix{\Pi_}{\Pi'}{}}
{ 
	\vdash \pf{B}{\beta\conc x}, \pf{C}{\beta}
}
{}	
$$
and at this time it is impossible to conclude with the $\vdash \Diamond$ rule, because via the induction hypothesis we deleted the only formula essential to validate the $\vdash \Diamond$ rule.

To fix the problem, we need a stronger statement of the lemma, which mirrors the constraint of the cut rule of $2_{{\KK}}$ and $2_{{\K4}}$.
%$$
%\provasl{ %
%	\mix{\Pi_}{\Pi'}
%}
%{
%	\pf{A}{\alpha}
%}
%{}
%$$

\begin{lemma}[Mix Lemma for $2_{{\KK}}$, $2_{{\K4}}$]\label{mainlemma-bis} 
	Let $\mathcal{S}$ be one of the systems  $2_{{\KK}}$ or $2_{{\K4}}$.
	Let $n\in \NN$ and
	let $\pf{A}{\alpha}$ be a formula of degree $n$. Let now $\Pi,\Pi'$ be proofs of the sequents 
	$\Gamma\vdash \Delta$ and $\Gamma'\vdash \Delta'$, respectively,  satisfying the properties:
	\begin{itemize}
		\item $\grado{\Pi},\grado{\Pi'}\le n$;
		\item $\alpha \in \iniz{\Gamma, \cana{\Delta}}$, or $\alpha \in \iniz{\cana{\Gamma', \Delta'}}$
	\end{itemize}
	Then   one can obtain in an effective way from $\Pi$ and $\Pi'$ a  proof $\mix{\Pi}{\Pi'}{\pf{A}{\alpha}}$
	of the sequent $\Gamma,\cana{\Gamma'}\vdash\cana{\Delta},\Delta'$ satisfying the property $\grado{\mix{\Pi}{\Pi'}{\pf{A}{\alpha}}}\le n.$
\end{lemma}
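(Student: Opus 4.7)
The plan is to mirror the induction of Lemma~\ref{mainlemma}: I would proceed by induction on $\langle h(\Pi), h(\Pi')\rangle$ with the same case split, constructing the mix proof exactly as there. The additional work is twofold: (a) verify that the extra hypothesis on $\alpha$ propagates to every recursive call, and (b) verify that every Cut and every $\Box\vdash / \vdash\Diamond$ step used in the constructed mix still meets the constraints of $2_{\KK}$ and $2_{\K4}$.

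For propagation, I would observe that $\iniz{\cdot}$ can only grow in passing from a rule's conclusion to its premises (e.g.\ replacing $\pf{\Box B'}{\alpha'}$ with $\pf{B'}{\alpha' \conc \beta'}$ strictly extends the set of initial segments), and it can only shrink through cancellation of $\pf{A}{\alpha}$. When the relevant disjunct of the hypothesis disappears this way, the other disjunct takes over. In the delicate case where $r = \vdash\Box$ and $r' = \Box\vdash$ both introduce $\pf{\Box B}{\alpha}$, both inductive calls $\mix{\Pi_1 \rep{\alpha \conc x}{\alpha \conc \beta}}{\Pi'}{\pf{A}{\alpha}}$ and $\mix{\Pi}{\Pi'_1}{\pf{A}{\alpha}}$ inherit the hypothesis: the p-formula $\pf{B}{\alpha \conc \beta}$ (which is not $\pf{A}{\alpha}$, since $\beta$ is non-empty by the $\Box\vdash$ constraint of $\KK$ and $\K4$) witnesses $\alpha$ in the appropriate $\iniz{\cdot}$ of each call.

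For maintenance of the side conditions, the crucial observation is that the extra hypothesis provides a backup witness whenever the original one would be cancelled. In the $\vdash\Box / \Box\vdash$ case above, the construction inserts a Cut on $\pf{B}{\alpha \conc \beta}$; this Cut needs $\alpha \conc \beta$ to be an initial segment of some position in one of its premise contexts. The $\Box\vdash$ constraint furnishes a witness $\pf{C}{\alpha \conc \beta \conc \eta}$ in $\Gamma'_1$ or $\Delta'$; since $\beta$ is non-empty, its position strictly extends $\alpha$, so it differs from $\pf{A}{\alpha}$ and survives cancellation, landing in $\cana{\Gamma'_1}$ or in $\Delta'$, both of which appear in each Cut premise. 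The diamond case is perfectly symmetric.

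The remaining situation is when $r$ is a Cut or a $\Box\vdash / \vdash\Diamond$ in $\Pi$ that does not introduce $\pf{A}{\alpha}$ on the right (and symmetrically for $r'$): I would re-apply $r$ after forming the inductive mixes of its premises with $\Pi'$. The original witness for the side condition of $r$ survives in the mix unless it is $\pf{A}{\alpha}$ itself sitting on the $\Delta$-side of $\Pi$. In that single degenerate sub-case, the required initial segment (either $\alpha' \conc \beta'$ for a $\Box\vdash$, or the cut formula's position $\delta$ for a Cut) is itself an initial segment of $\alpha$, and the lemma's auxiliary hypothesis then supplies a surviving formula in $\Gamma, \cana{\Delta}$ or in $\cana{\Gamma', \Delta'}$ whose position extends $\alpha$, hence extends the required initial segment, and lands in the correct sub-context of the re-applied rule. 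The main obstacle will be exactly this bookkeeping: systematically enumerating the sub-cases of (5) and (6), especially the various Cut sub-cases, and checking that the alternative witness always ends up exactly where the re-applied rule demands it. Once done, the degree bound $\grado{\mix{\Pi}{\Pi'}{\pf{A}{\alpha}}} \le n$ follows as in Lemma~\ref{mainlemma}.
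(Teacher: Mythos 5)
Your overall strategy is the one the paper intends: rerun the induction of Lemma~\ref{mainlemma} and check that the strengthened hypothesis both propagates to the recursive calls and supplies a replacement witness where the original one is cancelled. Your handling of the principal case ($r=\vdash\Box$, $r'=\Box\vdash$) and of the case where the cancelled $\pf{A}{\alpha}$ was itself the context witness of a $\Box\vdash$ or $\vdash\Diamond$ is correct, and it is exactly what the paper's one-sentence justification (``the hypothesis allows the conclusion also in case (5)'') alludes to.

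There is, however, a genuine gap in your propagation step. The claim that $\iniz{\cdot}$ ``can only grow in passing from a rule's conclusion to its premises'' is false for weakening (the weakened formula is absent from the premise) and for the context-splitting two-premise rules and Cut (each premise sees only its share of the context). In those cases the disjunct of the hypothesis that held at the conclusion can simply vanish in a recursive call, with no other disjunct ``taking over''. This is not cosmetic. Take $\Pi$ to be the derivation of $\vdash\pf{A\to A}{<x>}$ followed by a right weakening introducing $\pf{C}{<x>}$, and $\Pi'$ the derivation of $\pf{A\to A}{<x>}\vdash \pf{\Diamond(A\to A)}{<\ >}$ from the paper's own counterexample discussion; the cut formula is $\pf{A\to A}{<x>}$, so $\alpha=<x>$. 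The lemma's hypothesis holds, witnessed by $\pf{C}{<x>}$ in $\cana{\Delta}$, yet the recursive call prescribed by case (3) of Lemma~\ref{mainlemma} is the mix of $\vdash\pf{A\to A}{<x>}$ with $\Pi'$, whose target is $\vdash\pf{\Diamond(A\to A)}{<\ >}$ --- underivable in $2_{{\KK}}$, and indeed the hypothesis fails for that call. The construction must instead descend into $\Pi'$ (here reaching the axiom case) and re-apply $\vdash\Diamond$ with $\pf{C}{<x>}$ as witness. So the induction has to be restructured: one must argue that whenever a prescribed recursive call loses the witness (weakening of the sole witness, or the witness landing in the other premise of a split-context rule), either the mix is obtainable trivially by weakening, or one may recurse on the other proof or the other premise, where the hypothesis does survive. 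That selection argument is the actual content of the lemma and is missing from your proposal; the degree bound and the remaining cases are fine.
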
 

The proof is analogous to the proof of the previous lemma---it is readily seen that the hypotesis $\alpha \in \iniz{\Gamma, \cana{\Delta}}$, or $\alpha \in \iniz{\cana{\Gamma', \Delta'}}$ allows the conclusion also in  case (5). 

\def\xxa{\provasl{\Pi_1^*}{
		\Gamma_1 \vdash \pf{A}{\alpha}, \Delta'_1,  \pf{A}{\alpha}, \Delta''_1,
	}{}}

\def\xxb{\provas{\xxa}{\Gamma_1 \vdash \pf{A}{\alpha}, \Delta'_1,  \Delta''_1}{exch+contr}}

\def\xxc{\provas{\xxb}{\Gamma_1,\Gamma_2 \vdash \Delta'_1,  \pf{A}{\alpha}, \Delta''_1, \Delta_2}{exch+weak}}

\begin{theorem}[Cut elimination]\label{thm:cutels} 
Let $\sys$ be one of the modal systems $2_{{\KK}}$, $2_{{{\D}}}$, $2_{{\T}}$, $2_{{\K4}}$, and $ 2_{{\S4}}$. If $\Pi$ is a $2_\sys$--proof of $\Gamma\vdash\Delta$, then there
  exists  a   cut-free $2_\sys$--proof $\Pi^*$
  of $\Gamma\vdash\Delta.$ 
\end{theorem}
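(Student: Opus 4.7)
The plan is to proceed by induction on the lexicographic pair $\langle \grado{\Pi}, \nu(\Pi) \rangle$, where $\nu(\Pi)$ denotes the number of cuts in $\Pi$ whose cut-formula has maximum degree. If $\grado{\Pi} = 0$, then $\Pi$ is already cut-free and I set $\Pi^* := \Pi$. Otherwise, setting $n+1 = \grado{\Pi}$, the goal is to produce a proof $\Pi^\dagger$ of the same end-sequent with $\langle \grado{\Pi^\dagger}, \nu(\Pi^\dagger)\rangle$ strictly smaller in the lexicographic order, so that the induction hypothesis applies.

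To obtain $\Pi^\dagger$, I first locate in $\Pi$ an \emph{uppermost maximal cut}: a cut on some $\pf{A}{\alpha}$ with $\deg(\pf{A}{\alpha}) = n$ such that no cut strictly above it has cut-formula of degree $\ge n$. Its two premise-subproofs $\Pi_1$ and $\Pi_2$ then satisfy $\grado{\Pi_i} \le n$, matching the first hypothesis of both Mix Lemmas. For the systems $2_{\D}, 2_{\T}, 2_{\S4}$, Lemma~\ref{mainlemma} applies directly. For $2_{\KK}$ and $2_{\K4}$, the extra side condition $\alpha \in \iniz{\Gamma_1, \Delta_1}$ or $\alpha \in \iniz{\Gamma_2, \Delta_2}$ required by Lemma~\ref{mainlemma-bis} is guaranteed by the very constraint that Figure~\ref{Fig-constraints} imposes on the cut rule in those systems. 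In either case I obtain $\mix{\Pi_1}{\Pi_2}{\pf{A}{\alpha}}$, a proof of $\Gamma_1, \cana{\Gamma_2} \vdash \cana{\Delta_1}, \Delta_2$ of degree $\le n$.

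Next I splice this mix-proof into $\Pi$ in place of the subproof that ended at the chosen cut, appending a short block of exchanges, weakenings and contractions to restore the original end-sequent $\Gamma_1, \Gamma_2 \vdash \Delta_1, \Delta_2$ of that cut. The new cuts produced by the Mix Lemma only involve proper subformulas of $\pf{A}{\alpha}$, so each has $\deg \le n-1$; hence $\grado{\Pi^\dagger} \le n+1$ and, crucially, $\nu(\Pi^\dagger) < \nu(\Pi)$. Applying the induction hypothesis to $\Pi^\dagger$ then yields the desired cut-free proof of $\Gamma \vdash \Delta$.

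The one genuinely delicate point, already handled inside the Mix Lemmas themselves, is the preservation of the constraints on modal rules after splicing: eigenposition clashes are avoided via the prefix-replacement machinery of Proposition~\ref{teo:eigenpos1} and Lemma~\ref{lemma:prefix-replacement}, and the cut-rule constraint of $2_{\KK}$ and $2_{\K4}$ is exactly what allowed the Mix Lemma to apply in the first place. Apart from these bookkeeping checks, the induction is standard and follows the Girard-style cut-elimination argument for first-order sequent calculus.
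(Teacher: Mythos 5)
Your overall architecture is the same as the paper's: reduce everything to the Mix Lemmas and induct on a pair whose first component is the cut degree. (The paper uses $\langle\grado{\Pi},h(\Pi)\rangle$ and works on the last rule, first making the two premises of a terminal cut cut-free and then mixing them; your variant---pick an uppermost maximal cut and count the maximal cuts---is standard alternative bookkeeping and goes through.) The splicing step and the degree computation are fine, and the eigenposition issues are indeed absorbed by the renaming machinery.

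There is, however, one genuine gap, exactly at the point you dismiss in one sentence for $2_{{\KK}}$ and $2_{{\K4}}$. You assert that the hypothesis of Lemma~\ref{mainlemma-bis} ``is guaranteed by the very constraint that Figure~\ref{Fig-constraints} imposes on the cut rule,'' but you have quoted the cut rule's condition, not the lemma's. The cut rule requires $\alpha\in\iniz{\Gamma_1,\Delta_1}$ or $\alpha\in\iniz{\Gamma_2,\Delta_2}$, whereas Lemma~\ref{mainlemma-bis} requires $\alpha\in\iniz{\Gamma_1,\cana{\Delta_1}}$ or $\alpha\in\iniz{\cana{\Gamma_2},\Delta_2}$: the witnessing formula must survive the deletion of \emph{all} occurrences of the mix formula $\pf{A}{\alpha}$ from the contexts. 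These conditions are not equivalent. If the only formula in the contexts whose position extends $\alpha$ is a side occurrence of $\pf{A}{\alpha}$ itself (say, in $\Delta_1$), the cut is legal but the Mix Lemma is inapplicable, and your reduction stalls. This is precisely why the paper splits the $2_{{\KK}}$/$2_{{\K4}}$ case into three subcases: when $\pf{A}{\alpha}$ occurs as a side formula in the succedent of the left premise (symmetrically, in the antecedent of the right premise), the cut is not mixed at all---it is eliminated outright by taking the cut-free left premise given by the induction hypothesis, contracting its two occurrences of $\pf{A}{\alpha}$, and weakening in the missing context $\Gamma_2,\Delta_2$. You need to add this subcase; in the remaining situation the witness does survive the $\cana{(\cdot)}$ deletion and Lemma~\ref{mainlemma-bis} applies as you intended.
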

\begin{proof}
  By induction on the pair $<\grado{\Pi},h(\Pi)>.$   Suppose $\Pi$ is not cut-free and let $r$ be the last rule applied
  in $\Pi.$ We distinguish two cases:
 \begin{enumerate}
\item\label{rnotcut} $r$ is not a cut.\\
  Let $\Pi$ be
  $$\urule %
  {\left\{ %
      \provasl{ %
        \Pi_i } %
      { %
        \Gamma_i\vdash\Delta_i %
        } %
      {} %
    \right\}_{i\in I }} %
  {\Gamma\vdash\Delta } %
  {r, }
  $$
  where $I$ is one of $\{1\},$  $\{1,2\}$  Apply the 
  induction hypothesis to each $\Pi_i$, obtaining cut-free
proofs $\Pi^*_i,$ for $i\in I.$ A cut-free
 proof $\Pi^*$ of $\Gamma\vdash\Delta$ is then
  $$\urule %
  {\left\{ %
      \provasl{ %
        \Pi^*_i } %
      { %
        \Gamma_i\vdash\Delta_i %
        } %
      {} %
    \right\}_{i\in I }} %
  {\Gamma\vdash\Delta } %
  {r }
  $$
\item\label{riscut} $r$ is  a cut.\\
Let  $\Pi$ be
$$
\brule %
  {\provasl{\Pi_1}{\Gamma_1\vdash  \pf{A}{\alpha},\Delta_1}{}} %
  {\provasl{\Pi_2}{ \Gamma_2,\pf{A}{\alpha} \vdash \Delta_2}{}} %
  {\Gamma\vdash \Delta} %
  {\quad Cut }
$$  
  We have two subcases:
  \begin{enumerate}
  \item\label{case:cutS4}  $\sys$ is one of the  systems $ 2_{{{\D}}}, 2_{{\T}}, 2_{{\S4}}$:\\
   Apply the  induction hypothesis to $\Pi_1$ and $\Pi_2$  to obtain 
  cut-free proofs $\Pi^*_1$ and $\Pi^*_2$  of $\Gamma_1\vdash
  \pf{A}{\alpha},\Delta_1$ and $ \Gamma_2,\pf{A}{\alpha}\vdash \Delta_2$ respectively.

  Applying
  Lemma~\ref{mainlemma} to  
  the pair $<\Pi^*_1,\Pi^*_2>,$ one gets  a   proof
  $\Pi_0$
  of sequent 
  $\Gamma_1,\cana{\Gamma_2}\vdash\cana{\Delta_1},\Delta_2$ such that
  $\grado{\Pi_0}\le \deg(\pf{A}{\alpha})\lt\grado{\Pi}.$ 
  
  Finally  one  gets a  cut-free   proof of
  $\Gamma_1,\cana{\Gamma_2}\vdash\cana{\Delta_1},\Delta_2$  from $\Pi_0$ by
  induction hypothesis and, from it, a  cut-free   proof of
  $\Gamma\vdash\Delta$ by application  of a suitable sequence of structural rules.
  
  \item $\sys$ is one of the  systems $ 2_{{{\KK}}}, 2_{{\K4}}$:\\
  We have three subcases
  \begin{enumerate}
  	\item $\pf{A}{\alpha} \not\in \Delta_1$ and $\pf{A}{\alpha} \not\in \Delta_2$:
  	proceed as for case~\ref{case:cutS4}.
  	\item $\pf{A}{\alpha} \in \Delta_1$ :
  	Apply the  induction hypothesis to $\Pi_1$  to obtain 
  	cut-free proofs $\Pi^*_1$  of $\Gamma_1\vdash
  	\pf{A}{\alpha},\Delta_1$, 
  	then conclude in the following way:
  	\[
  	\xxc
  	\]
  	\item $\pf{A}{\alpha} \in \Delta_2$ :
  	simmetric to the previous one.
  \end{enumerate}

  \end{enumerate}

%  Apply the  induction hypothesis to $\Pi_1$ and $\Pi_2$  to obtain 
%   cut-free proofs $\Pi^*_1$ and $\Pi^*_2$  of $\Gamma_1\vdash
%  \pf{A}{\alpha},\Delta_1$ and $ \Gamma_2,\pf{A}{\alpha}\vdash \Delta_2$ respectively.
%
%
%
% Applying
%  Lemma~\ref{mainlemma} to  
%  the pair $<\Pi^*_1,\Pi^*_2>,$ one gets  a   proof
%  $\Pi_0$
%   of sequent 
%  $\Gamma_1,\cana{\Gamma_2}\vdash\cana{\Delta_1},\Delta_2$ such that
%  $\grado{\Pi_0}\le \deg(\pf{A}{\alpha})\lt\grado{\Pi}.$ 
%
%
%
%Finally  one  gets a  cut-free   proof of
% $\Gamma_1,\cana{\Gamma_2}\vdash\cana{\Delta_1},\Delta_2$  from $\Pi_0$ by
% induction hypothesis and, from it, a  cut-free   proof of
%$\Gamma\vdash\Delta$ by application  of a suitable sequence of structural rules.
\end{enumerate}
\end{proof}

%%%%%%%%%%%%%%%%%%%%%%%%%%%%%%%%%%%%%%%%%%%%%%%%%%
%%%%%%%%%%%%%%%%%%%%%%%%%%%%%%%%%%%%%%%%%%%%%%%%%%

Let $\sys$ be one of the systems ${{\KK}}, {{\D}}, {{\T}}, {{\K4}}, {{\S4}}$. We have the following corollaries of cut-elimination.
\begin{corollary}[Subformula Property]\label{cor:subf}
  Each formula occurring in a cut-free   $2_\sys$--proof $\Pi$ is a
  subformula of some formula occurring in the conclusion of $\Pi.$
\end{corollary}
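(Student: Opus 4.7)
The plan is to prove the statement by straightforward induction on the height $h(\Pi)$ of the cut-free proof. The key observation is that every rule of $2_\sys$ other than Cut has the \emph{local} subformula property: every p-formula appearing in a premise either also appears in the conclusion (as in structural rules and in the side formulas $\Gamma,\Delta$ of every rule) or else is a subformula, in the sense of the definition just given, of the principal formula displayed in the conclusion. Since $\Pi$ is cut-free by hypothesis, Cut never occurs, so the induction goes through without an exceptional case.

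More concretely, I would set up the induction as follows. The base case is an axiom $\pf{A}{\alpha}\vdash\pf{A}{\alpha}$, where the only formula occurring in $\Pi$ is $\pf{A}{\alpha}$ itself, and $\pf{A}{\alpha}\in Sub(\pf{A}{\alpha})$ trivially. For the inductive step, let $r$ be the last rule of $\Pi$, with premise(s) $\Pi_i$ of smaller height having conclusions $\Gamma_i\vdash\Delta_i$. By the induction hypothesis, every formula in each $\Pi_i$ is a subformula of some formula in $\Gamma_i\vdash\Delta_i$. It then suffices to check, case by case on $r$, that every formula in $\Gamma_i\vdash\Delta_i$ is a subformula of some formula in $\Gamma\vdash\Delta$.

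The structural rules and the propositional rules are handled exactly as in the first-order case: the side formulas are preserved verbatim, and active premise formulas such as $\pf{A}{\alpha}$ and $\pf{B}{\alpha}$ lie in $Sub(\pf{A\land B}{\alpha})$, $Sub(\pf{A\lor B}{\alpha})$, $Sub(\pf{A\to B}{\alpha})$, or $Sub(\pf{\neg A}{\alpha})$ by the clauses of the definition. The only point that could be viewed as a mild obstacle is the modal case, because the position on the premise formula differs from the position on the principal formula. This, however, is precisely what the modal clause
\[
Sub(\pf{\#A}{\alpha}) = \{\pf{\#A}{\alpha}\}\cup\bigcup_{\beta\in\pos} Sub(\pf{A}{\alpha\conc\beta}),\qquad \#\in\{\Box,\Diamond\},
\]
is designed to absorb: in $\Box\vdash$ and $\vdash\Diamond$ the active premise formula $\pf{A}{\alpha\conc\beta}$ lies in $Sub(\pf{\Box A}{\alpha})$, respectively $Sub(\pf{\Diamond A}{\alpha})$, by taking that very $\beta$, while in $\vdash\Box$ and $\Diamond\vdash$ the eigenposition $\alpha\conc x$ is covered by taking $\beta=\langle x\rangle$. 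Thus the modal rules satisfy the local subformula property as well, and the induction concludes.
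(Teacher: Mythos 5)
Your proof is correct and is exactly the routine induction the paper has in mind when it states this as an unproved corollary of cut-elimination: the only point worth checking is the modal case, and you correctly observe that the clause $Sub(\pf{\# A}{\alpha})\supseteq Sub(\pf{A}{\alpha\conc\beta})$ for arbitrary $\beta$ (including $\beta=\langle x\rangle$ and $\beta=\langle\ \rangle$) is designed precisely to absorb the position shift in all four modal rules. The only implicit step is the transitivity of the subformula relation, needed to compose the induction hypothesis with the local check, but this follows by an immediate structural induction on the definition of $Sub$.
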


\begin{corollary}[Consistency]\label{cor:oscons}
  $2_\sys$ is consistent, namely there is  no   $2_\sys$--proof  of the empty sequent $\ \vdash\ $.
\end{corollary}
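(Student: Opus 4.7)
The plan is to derive consistency immediately from cut-elimination (Theorem~\ref{thm:cutels}) and the Subformula Property (Corollary~\ref{cor:subf}), by the familiar argument: a cut-free proof of an empty conclusion cannot even begin, because every axiom leaf introduces at least one p-formula, and that p-formula would have to be a subformula of something in the conclusion.

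More precisely, I would argue by contradiction. Assume there exists a $2_\sys$-proof $\Pi$ of the empty sequent $\ \vdash\ $. By Theorem~\ref{thm:cutels} we obtain a cut-free $2_\sys$-proof $\Pi^*$ of $\ \vdash\ $. Consider any leaf of $\Pi^*$: by definition of the calculus, every leaf is an instance of the axiom rule $Ax$, hence of the form $\pf{A}{\alpha}\vdash\pf{A}{\alpha}$, which contains the p-formula $\pf{A}{\alpha}$. By Corollary~\ref{cor:subf}, $\pf{A}{\alpha}$ must be a subformula of some p-formula appearing in the conclusion of $\Pi^*$. But the conclusion is $\ \vdash\ $, which contains no p-formula at all, so there is nothing of which $\pf{A}{\alpha}$ could be a subformula. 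Contradiction.

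A small sanity check is needed to make sure $\Pi^*$ actually contains at least one leaf, i.e.\ it is not a zero-height derivation of $\ \vdash\ $ --- but this is immediate, since the only rule with no premises in the calculus is $Ax$, whose conclusion is never empty. Thus there is no main obstacle: once cut-elimination and the subformula property are in hand, consistency is a one-line corollary.
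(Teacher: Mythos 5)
Your argument is correct and is essentially the intended one: the paper states consistency as an immediate corollary of cut-elimination (Theorem~\ref{thm:cutels}), placed right after the Subformula Property (Corollary~\ref{cor:subf}), and gives no further proof, so the standard "a cut-free proof must start from an axiom, whose formulas would have to be subformulas of the empty conclusion" argument is exactly what is meant. Your sanity check that $\Pi^*$ has at least one leaf (since $Ax$ is the only premise-free rule and its conclusion is nonempty) is a nice touch; one could even bypass the subformula property entirely by noting that every rule except \emph{Cut} has a nonempty conclusion, but your route is equally valid.
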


% !TEX root = ./main.tex
%%%%%%%%%%%%%%%%%%%%%%%%%%%%%%%%
%%%%%%%%%semantics%%%%%%%%%
%%%%%%%%%%%%%%%%%%%%%%%%%%%%%%%%
\section{Semantics}\label{subs:seman}

We introduce in this section a tree-based  Kripke semantics for our 2-sequent modal systems, in order to prove their completeness with respect to the standard axiomatic presentations. 

\subsection{Trees}
Let $\NN^*$ be the set of finite sequences of natural numbers with the partial order $\less{\NN}$ as defined in Section~\ref{sect:sequences}.
%
%We define the following relations on sequences:
%\begin{enumerate}
%\item a strict order $\lest{\NN}$: $t \lest{\NN} s$ iff $t \less{\NN} s$ and $s\neq t$
%\item a successor relation $\suc{\NN}$: $t \suc{\NN} u$ iff $\exists n\in\NN$ s.t. $u=t\conc <n>$
%\end{enumerate}

%We define a partial ordering  \sqsubseteq$\leq$ on $S^*$ as follows: $t\leq <\ >$ for all $t\in S^*$ and $<n_0,\dots,n_k>\leq
%<m_0,\dots,m_l>$ if and only if $l \leq k$ and $n_i=m_i$ for all $0\leq i\leq l.$ We
%denote by $\lt$ the associated strict order.
%controllare

\begin{definition}
A \textit{tree}  is  a subset  $\tree$ of $ \NN^*$  s.t. 
%\begin{enumerate}
%\item $A\subseteq \N^*$ 
%\item 
$<\ >\in \tree$; and
%\item  
if $t\in \tree$ and $s \less{\tree} t$, then $s\in \tree, $ 
%\end{enumerate}
where $\less{\tree}$ is the restriction of $\less{\NN}$ to $\tree$.
\end{definition}
\noindent The elements of $\tree$ are called
\textit{nodes}; a \textit{leaf} is a node with no successors.
Given a tree $\tree$ and $s\in \tree$, we define $\tree_s$ (the \emph{subtree of $\tree$ rooted at $s$}) be the tree  defined
as:
$ s'\in \tree_s\ \Leftrightarrow\ s\conc s' \in \tree$.
Observe that 
$\tree_{<\ >} = \tree$.
In this section, $s$ and $t$ will range over the generic elements of $\Theta$.

\subsection{Tree-semantics}
If $At$ is the set of proposition symbols of our modal language, a \emph{Kripke model} is a triple $\mdl{}=<\tree,\nu, \R>$, where $\tree$ is a tree, $\nu:\tree\to 2^{At}$ is an assignment of proposition symbols to nodes, and  $\R\subseteq \tree\times\tree$.
Given a modal system $\sys\in{\{{\KK},{\D},{\T},{\K4},{\S4}\}}$, a \emph{$\sys$-model} is a Kripke model  $\mdl{_\sys}=<\tree,\nu, \R>$ s.t.
\bigskip

\begin{tabular}{|c|c|c|}
 \hline 
 \textbf{modal system} & \textbf{conditions on $\tree$} & \textbf{conditions on $\R$} \\ 
 \hline 
 {\KK}&  no condition  & $\R=\suc{\tree}$ \\ 
 \hline 
 {\D} & $\tree$ does not have leaves &  $\R=\suc{\tree}$  \\ 
 \hline 
 {\T} & no condition & $\R=\suc{\tree}^0$\\ 
 \hline 
 {\K4} & no condition  & $\R=\lest{\tree}$ \\ 
 \hline 
 {\S4} & no condition & $\R=\less{\tree}$  \\ 
 \hline 
 \end{tabular}  
 
\bigskip\noindent
The satisfiability (or forcing) relation of formulas on a Kripke model is standard; e.g., for a model $\mdl{}$ and node $s$,
$\mdl{},s \models \Box A \Leftrightarrow \forall t. s R t \Rightarrow \mdl{}, t\models A$.
As usual, we write $ \mdl{} \models A$, when $\mdl{},s \models  A$ for all nodes $s$ of $\mdl{}$.

\begin{theorem}[standard completeness]
For each modal system  $\sys$ in {\KK}, {\D}, {\T}, {\K4}, {\S4}, and for every formula $A$,
$\vdash_{\sys} A$ $\Leftrightarrow$  for all $\sys$--model $\mdl{}$, we have $\mdl{} \models A$.
\end{theorem}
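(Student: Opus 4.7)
The plan is to handle the two directions separately: soundness by induction on the axiomatic derivation, and completeness by appeal to classical Kripke completeness followed by an unraveling into a tree model. For soundness $(\Rightarrow)$, I would proceed by induction on the length of a $\sys$-proof of $A$. Propositional axioms and MP are routine. For necessitation, if $A$ holds at every node of every $\sys$-model, then $\Box A$ does too, since every $R$-successor of any node is itself a node. For the characteristic schemata, I check that $R$ in each $\sys$-model satisfies the appropriate frame property: the $\D$-condition ``$\tree$ has no leaves'' with $R = \suc{\tree}$ gives seriality, validating axiom \textbf{D}; the definition $R = \suc{\tree}^0$ of $\T$-models yields reflexivity, validating axiom \textbf{T}; the closures $\lest{\tree}$ and $\less{\tree}$ of $\K4$- and $\S4$-models are transitive, validating axiom \textbf{4}; and axiom \textbf{K} is valid in any Kripke structure. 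The $\Leftarrow$ direction is the nontrivial one.

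For completeness $(\Leftarrow)$, I would argue contrapositively. Suppose $\not\vdash_\sys A$. By the classical Kripke completeness theorem for the axiomatic systems $\KK,\D,\T,\K4,\S4$ (proved via the standard Lindenbaum--Henkin canonical-model construction), there is a pointed Kripke model $(W, S, V, w_0)$ in the conventional sense, where $S$ satisfies the usual frame condition for $\sys$ (no constraint for $\KK$, serial for $\D$, reflexive for $\T$, transitive for $\K4$, reflexive and transitive for $\S4$), such that $A$ fails at $w_0$. I then unravel this model into a tree: the tree $\tree \subseteq \NN^*$ together with a labelling $f \colon \tree \to W$ is built so that $f(\langle\rangle) = w_0$, and for each $s \in \tree$ the children $s \conc \langle n \rangle$ enumerate the $S$-successors of $f(s)$; I set $\nu(s) = V(f(s))$ and choose $R$ to be $\suc{\tree}$, $\suc{\tree}^0$, $\lest{\tree}$, or $\less{\tree}$ as dictated by $\sys$ in the table of the Kripke semantics. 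A truth-preservation lemma, proved by induction on $B$, yields $\mdl{}, s \models B$ iff $(W, S, V), f(s) \models B$; instantiating at $s = \langle\rangle$ produces the required tree countermodel.

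The main obstacle is the $\Box$-step of the truth-preservation lemma, which must be argued separately for each logic because the tree-accessibility $R$ is derived from $\suc{\tree}$ by a different operation in each case. For $\KK$ and $\D$ it is immediate: the tree-children of $s$ are in bijection with the $S$-successors of $f(s)$. For $\T$ and $\S4$, the reflexivity of $S$ must mesh with the reflexive closure built into $R$; since each world is its own $S$-successor, the unraveling yields a tree-child $s'$ of $s$ with $f(s') = f(s)$, while the reflexive part of $R$ also makes $s R s$, so that the set of $R$-successors of $s$ projects under $f$ exactly to the set of $S$-successors of $f(s)$. For $\K4$ and $\S4$, the transitivity of $S$ is used in both directions: every $S$-accessible world of $f(s)$ is the $f$-image of some proper descendant of $s$, and conversely every such descendant maps under $f$ to an $S$-accessible world, so that $R = \lest{\tree}$ or $\less{\tree}$ correctly captures $S$-reachability. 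For $\D$, seriality of $S$ transfers to the absence of leaves in the unraveled tree, matching the $\D$-model condition. Collecting these cases completes the proof.
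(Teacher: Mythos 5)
The paper offers no proof of this theorem at all: it is quoted as the classical Kripke completeness theorem for $\KK$, $\D$, $\T$, $\K4$, $\S4$, specialised to the tree-shaped frames of the table, and is used as a black box (its only role is to feed Corollary~\ref{cor:interpr}). Your argument is the standard textbook proof of that black box, and it is essentially sound: the soundness induction correctly matches each characteristic axiom with the frame property guaranteed by the corresponding row of the table (seriality from the no-leaves condition for $\D$, reflexivity of $\suc{\tree}^0$ for $\T$, transitivity of $\lest{\tree}$ and $\less{\tree}$ for $\K4$ and $\S4$), and the unraveling of a canonical countermodel into a tree, with the truth-preservation lemma argued separately in the $\Box$-case for each closure operation, is exactly how one reconciles the abstract canonical model with the concrete tree semantics the paper uses.

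One technical point needs repair. The worlds of the canonical model are maximal consistent sets, so a world may have uncountably many $S$-successors, and your clause ``the children $s\conc\langle n\rangle$ enumerate the $S$-successors of $f(s)$'' cannot be taken literally inside a tree $\tree\subseteq\NN^*$. The standard fix is a \emph{selective} unraveling: for each formula $\Diamond B$ (equivalently $\lnot\Box\lnot B$) true at $f(s)$, attach one witness child; countably many children per node then suffice, and your truth lemma still goes through because the $\Box$-case only needs that every child maps to an $S$-successor (for the downward direction) and that every failure of $\Box B$ at $f(s)$ is witnessed by some child (for the upward direction). Alternatively, one can first invoke the finite model property of these five logics and unravel a finite countermodel. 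With that adjustment the proof is complete.
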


\subsection{Semantics of 2--sequents}\label{semantics-sequents}
 \newcommand{\conv}[1]{{#1}\!\downarrow}
 
Let $\sys\in{\{{\KK}, {\D}, {\T}, {\K4}, {\S4}\}}$ be a modal system.
A $2_\sys$ \emph{structure} is a pair $\tmdl{\tree}=<\mdl{\tree},\rho>$ where:

\begin{itemize}
	\item $\mdl{\tree}$ is an $\sys$--model $<\tree,\nu, R>$
	\item $\rho:\pos \rightharpoonup \tree$ is a partial function from positions to nodes.
\end{itemize}
We write $\rho(x)\downarrow$ when the function $\rho$ is defined on input $x$. We require that $\rho(\alpha)\downarrow\; \Rightarrow \forall \beta \less{} \alpha.\rho(\beta)\downarrow$. 
Moreover, depending on the specific modal system, $\rho$ has to satisfy the following, additional constraints:
\bigskip

%\begin{table}
\begin{tabular}{|c|c|}
 \hline 
 \textbf{modal system} & \textbf{conditions on $\rho$} \\ 
 \hline 
 \KK&  $(\alpha\suc{\pos} \beta\ \&\  \conv{\rho(\alpha)}\ \&\  \conv{\rho(\beta)}) \Rightarrow  \rho(\alpha)\suc{\tree\ } \rho(\beta)$\\ 
 \hline 
 \K4 &  $(\alpha\suc{\pos} \beta\ \&\  \conv{\rho(\alpha)}\ \&\  \conv{\rho(\beta)}) \Rightarrow  \rho(\alpha)\lest{\tree } \rho(\beta)$ \\ 
 \hline 
 \D &  $ \rho \mbox{\ is total\ }\ \&\ (\alpha\suc{\pos} \beta \Rightarrow  \rho(\alpha)\suc{\tree\ } \rho(\beta))$  \\ 
 \hline 
 \T & $\rho \mbox{\ is total\ }\ \&\ (\alpha\suc{\pos} \beta \Rightarrow  \rho(\alpha)\suc{\tree\ }^0 \rho(\beta))$\\ 
 \hline 
 \S4 &  $ \rho \mbox{\ is total\ }\ \&\ ( \alpha\suc{\pos} \beta \Rightarrow  \rho(\alpha)\less{\tree}\rho(\beta))$  \\ 
 \hline 
 \end{tabular}  
%\end{table}

\bigskip
\noindent
Since, in general, $\rho$ is partial (which is necessary for dealing with {\KK} and {\K4}), we need two different notions of satisfiability: $\models^\ell$ for the left hand side formulas in a sequent, and $\models^r$ for the right hand side formulas. Define then, for a  $2_\sys$ structure $<\mdl{\tree},\rho>$:
\begin{itemize}

\item
$\mdl{\tree},\rho\models^\ell \pf{A}{\alpha} \Leftrightarrow  (\conv{\rho(\alpha)} \ \&\ 	 \mdl{\tree},\rho(\alpha) \models A$);

\item  
$\mdl{\tree},\rho\models^r \pf{A}{\alpha} \Leftrightarrow  (\conv{\rho(\alpha)	}\ \Rightarrow \mdl{\tree},\rho(\alpha) \models A)$.
\end{itemize}

When \textit{$\rho$ is total} we observe that two notions of satisfiability collapse and we simply write: 
$$
\mdl{\tree},\rho\models \pf{A}{\alpha} \Leftrightarrow 
 \mdl{\tree},\rho(\alpha) \models A.$$
The definition is  extended to sequents:
$$
\mdl{\tree},\rho\models \Gamma\vdash\Delta \Leftrightarrow (
\forall \pf{A}{\alpha}\in\Gamma. \mdl{\tree},\rho\models^\ell \pf{A}{\alpha}
\Rightarrow \exists \pf{B}{\beta}\in\Delta. \mdl{\tree},\rho\models^r \pf{B}{\beta}
 ).
$$
Finally, given a modal system $\sys$,

$$
\Gamma\models_{\sys}\Delta \Leftrightarrow \forall\ 2_\sys\mbox{ structure\ } \tmdl{}, \tmdl{}\models \Gamma\vdash\Delta.
$$

We now introduce some notation for expressing substitution of values into the evaluation functions $\rho$, in correspondence of specific positions.
For $t\in\Theta$, define
\[
\rho\{\alpha\conc x/t\}(\beta)=
\begin{cases}
\rho(\beta) \mbox{\qquad \ if\ } \beta\neq \alpha\conc x \\
\rho(\alpha)\conc t \mbox{\quad otherwise\ }\\
\end{cases}
\]
As usual with expressions dealing with partial functions, any such substitution expression is undefined whenever it  formally contains an undefined 
subexpression; e.g., $\rho\{\alpha\conc x/\rho(\gamma)\}(\beta)$ is undefined when $\rho(\gamma)$ is undefined, or when $\rho(\alpha)\conc \rho(\gamma) \not\in \Theta$.

We define the following set of $\Theta$ elements:
\begin{itemize}
	\item $\Theta_{\KK}= \{t: |t|=1\}$;
	\item $ \Theta_{\K4}=\{t: |t|\gt0\}$;
	\item  $\Theta_{\D}=\{t: |t|= 1\} $;
	\item  $ \Theta_{\T}=\{t: |t|\leq 1\}$;
	\item  $ \Theta_{\S4}=\{t: |t|\geq 0\}$.
\end{itemize}
As for other notations, we will write $\Theta_\sys$ for any of these sets. We conclude with the crucial lemmas needed for the soundness of the modal rules. The first deals with the soundness of $\vdash\Box$ and $\Diamond\vdash$. 
\begin{lemma}\label{lemma:sub1} Let $\sys\in\{ {\KK}, {\D}, {\T}, {\K4}, {\S4}  \}$:
	\begin{enumerate}
		\item $\rho\models^r \Box \pf{A}{\alpha} \Leftrightarrow \forall t\in \Theta_\sys. \rho\{\alpha\conc x/t\}\models^r \pf{A}{\alpha\conc x};$
\item $\rho\models^r \Diamond \pf{A}{\alpha} \Leftrightarrow \exists t\in \Theta_\sys. \rho\{\alpha\conc x/t\} \models^r \pf{A}{\alpha\conc x}.$
	\end{enumerate}
\end{lemma}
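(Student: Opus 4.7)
The plan is to reduce both equivalences to a uniform correspondence between the $R$-successors of a node and the extensions of that node by elements of $\Theta_\sys$. I would first verify, by a short case analysis on $\sys$, that for every $s\in\Theta$,
\[
  \{\,s' : s\,R\,s'\,\}=\{\,s\conc t : t\in\Theta_\sys,\; s\conc t\in\Theta\,\}.
\]
This reads off directly from the two tables in Section~\ref{subs:seman}: singletons give $\suc{\Theta}$ for $\KK$ and $\D$; the empty sequence in $\Theta_\T$ accounts for the reflexive loop of $\T$; non-empty sequences give $\lest{\Theta}$ for $\K4$; and all sequences give $\less{\Theta}$ for $\S4$. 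Once this bijection is in hand, each item of the lemma becomes the standard Kripke clause for $\Box$ or $\Diamond$, re-indexed through the correspondence.

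For each item I would split on whether $\rho(\alpha)\downarrow$. When $\rho(\alpha)$ is undefined, the left-hand side is vacuous by the implicational definition of $\models^r$, and the value $\rho\{\alpha\conc x/t\}(\alpha\conc x)=\rho(\alpha)\conc t$ inherits this undefinedness, so the right-hand side is also trivially satisfied (as a universal of vacuities for $\Box$, or a vacuously witnessed existential for $\Diamond$). When $\rho(\alpha)\downarrow$, I would unfold $\mdl{\Theta},\rho(\alpha)\models\Box A$ as $\forall s'$ with $\rho(\alpha)Rs'\!:\mdl{\Theta},s'\models A$ and, using the bijection above, re-index it as $\forall t\in\Theta_\sys$ with $\rho(\alpha)\conc t\in\Theta$; any $t$ for which $\rho(\alpha)\conc t\notin\Theta$ contributes a vacuously true instance on the right, so the two universals match. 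The $\Diamond$ clause is dual: an $R$-successor $s'=\rho(\alpha)\conc t$ at which $A$ holds produces a non-vacuous substitution witness $\rho\{\alpha\conc x/t\}\models^r A^{\alpha\conc x}$, and conversely a non-vacuous witness on the right yields exactly such an $R$-successor.

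The subtle spot---and the likely main obstacle---is the bookkeeping around the partiality of $\rho$, which is genuine for $\KK$ and $\K4$, where $\rho$ may fail to be total. Since $\models^r$ is implicational, the quantifiers on the right mix ``real'' instances (those with $\rho(\alpha)\conc t\in\Theta$) with ``vacuous'' ones; for $\Box$ the vacuous ones are harmless, but for $\Diamond$ one must be careful to extract an existential witness that is non-vacuous. This is precisely guaranteed by the constraint on $\rho$ from the second table of Section~\ref{subs:seman}: $\rho\{\alpha\conc x/t\}(\alpha\conc x)\downarrow$ coincides with $\rho(\alpha)\conc t\in\Theta$, which by the tree-based definition of $R$ coincides with $\rho(\alpha)\,R\,(\rho(\alpha)\conc t)$, closing the loop in both directions.
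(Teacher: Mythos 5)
The paper states Lemma~\ref{lemma:sub1} without proof, so there is no official argument to compare against; judged on its own terms, your reduction is the natural one and its core is sound. The correspondence $\{s' : s\,\R\,s'\}=\{s\conc t : t\in\Theta_\sys,\ s\conc t\in\tree\}$ is correct for all five systems (prefix-closure of $\tree$ is what makes the $\K4$ and $\S4$ cases work), and it disposes of item~1 entirely: a $t$ with $\rho(\alpha)\conc t\notin\tree$ contributes a vacuously true conjunct to the universal on the right, so the two universal quantifications match, and the case where $\rho(\alpha)$ is undefined is trivial on both sides. The left-to-right direction of item~2 is likewise fine.

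The gap is in the right-to-left direction of item~2, exactly at the spot you flag, and your proposed repair does not close it. The conditions on $\rho$ in the table of Section~\ref{subs:seman} only relate the values of $\rho$ at $\suc{\pos}$-related positions \emph{when both are defined}; they do nothing to exclude vacuous witnesses of the existential. Concretely, in a $2_{\KK}$ structure let $\rho(\alpha)$ be defined and equal to a leaf of $\tree$: for every $t\in\Theta_{\KK}$ the value $\rho\{\alpha\conc x/t\}(\alpha\conc x)=\rho(\alpha)\conc t$ is undefined, hence $\rho\{\alpha\conc x/t\}\models^r\pf{A}{\alpha\conc x}$ holds vacuously and the right-hand side of item~2 is true, while $\mdl{\tree},\rho(\alpha)\models\Diamond A$ is false because a leaf has no $\suc{\tree}$-successor. (The same phenomenon is not confined to the partial-$\rho$ systems: even in $2_{\S4}$ a $t$ with $\rho(\alpha)\conc t\notin\tree$ is a vacuous witness.) You cannot escape by decreeing that vacuous instances do not count as witnesses, because your own treatment of the case ``$\rho(\alpha)$ undefined'' relies on a vacuous witness to make the right-hand side true. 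So what your argument actually establishes is the $\Rightarrow$ direction of item~2, plus the $\Leftarrow$ direction under the additional hypothesis that the witness $t$ satisfies $\rho(\alpha)\conc t\in\tree$ --- which is the form in which the lemma is invoked in the soundness proof, where the side condition of $\vdash\Diamond$ guarantees definedness. A correct write-up must either build that hypothesis into the statement or argue explicitly why a non-vacuous witness can always be extracted; the appeal to ``the constraint on $\rho$'' does neither.
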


The second lemma deals with the soundness of $\vdash\Diamond$ and $\Box\vdash$. 

\begin{lemma}\label{lemma:sub2}Let  $\sys\in\{ {\KK}, {\D}, {\T}, {\K4}, {\S4}  \}$:
		\item $\rho\models^r \pf{A}{\alpha\conc \beta}\Leftrightarrow \rho\{\alpha\conc x/\rho(\beta)\}\models^r\pf{A}{\alpha\conc x}$.
\end{lemma}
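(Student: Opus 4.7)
The proof is a direct verification, unfolding the definitions of $\models^r$ and of the substitution $\rho\{\alpha\conc x/t\}$ introduced earlier in the section. First, assuming $x$ to be a token not already occurring in $\rho$'s domain (as will be the case when applying the lemma for soundness of the $\vdash\Diamond$ and $\Box\vdash$ rules, where $\alpha\conc x$ is an eigenposition), the updated valuation $\rho' := \rho\{\alpha\conc x/\rho(\beta)\}$ agrees with $\rho$ at every position except $\alpha\conc x$, where, by definition of the substitution, $\rho'(\alpha\conc x)=\rho(\alpha)\conc\rho(\beta)$.

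The key computational step is then the identity $\rho'(\alpha\conc x)=\rho(\alpha\conc\beta)$: this follows from the structural constraints on $\rho$ tabulated for each system $\sys$ (prefix preservation together with the system-specific successor clause), which force $\rho$ to send $\alpha\conc\beta$ to $\rho(\alpha)$ extended by exactly $\rho(\beta)$. With this identity in hand, both sides of the equivalence reduce to the single assertion ``if $\conv{\rho(\alpha\conc\beta)}$, then $\mdl{\tree},\rho(\alpha\conc\beta)\models A$'', and the $\Leftrightarrow$ is immediate.

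The main obstacle, and the only place that genuinely requires care, is handling the partiality of $\rho$, which is necessary for \KK\ and \K4. Specifically, one must verify that $\rho'(\alpha\conc x)$ is defined precisely when $\rho(\alpha\conc\beta)$ is---that is, when $\rho(\alpha)$ and $\rho(\beta)$ are both defined and $\rho(\alpha)\conc\rho(\beta)\in\Theta$. With the conventions on partial substitution expressions set out earlier in the section (``any such substitution expression is undefined whenever it formally contains an undefined subexpression''), these definedness conditions coincide, so that the vacuous cases on both sides of the equivalence match up, and the proof is complete.
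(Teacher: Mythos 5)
The paper states this lemma without giving a proof, so there is no official argument to compare yours against; judged on its own, your argument breaks down at precisely the step you identify as the key one. The identity $\rho(\alpha\conc\beta)=\rho(\alpha)\conc\rho(\beta)$ does \emph{not} follow from the constraints tabulated for $\rho$: those constraints only relate $\rho(\gamma)$ and $\rho(\gamma\conc z)$ through the tree order ($\suc{\tree}$, $\suc{\tree}^0$, $\lest{\tree}$ or $\less{\tree}$), and nowhere force $\rho$ to be ``additive'' with respect to concatenation of positions. For a concrete failure take $\sys={\S4}$ (or ${\T}$) and let $\rho$ be the total function constantly equal to a fixed non-root node $u$; this satisfies the ${\S4}$ clause because $\less{\tree}$ is reflexive, yet $\rho(\alpha\conc\beta)=u$ while $\rho(\alpha)\conc\rho(\beta)=u\conc u\neq u$, and a propositional letter true at $u$ but false at $u\conc u$ separates the two sides of the asserted equivalence. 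The same example undercuts your claim that the definedness conditions coincide: $\rho(\alpha\conc\beta)\downarrow$ forces $\rho(\alpha)\downarrow$ (definedness is closed under prefixes), but it does not force $\rho(\beta)\downarrow$, and even when everything is defined the literal concatenation $\rho(\alpha)\conc\rho(\beta)$ need not be a node of $\tree$.

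What the constraints actually yield is weaker but is exactly what is needed: whenever $\rho(\alpha)\downarrow$ and $\rho(\alpha\conc\beta)\downarrow$, chaining the single-step clauses along the tokens of $\beta$ gives $\rho(\alpha)\less{\tree}\rho(\alpha\conc\beta)$, hence a unique \emph{displacement} $t$ with $\rho(\alpha\conc\beta)=\rho(\alpha)\conc t$; moreover the constraint on $\beta$ in the relevant rule puts $t$ in $\Theta_{\sys}$ (e.g.\ $|t|=|\beta|=1$ for ${\KK}$ and ${\D}$, $|t|\geq|\beta|\gt 0$ for ${\K4}$). The substitution for which the equivalence holds is $\rho\{\alpha\conc x/t\}$ for \emph{this} $t$ --- and this is the form in which the lemma is consumed in the soundness proof, where Lemma~4.1 is subsequently applied to some $t\in\Theta_{\sys}$. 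Read with ``$\rho(\beta)$'' taken literally as the node assigned to the position $\beta$, the statement is false; your proof should instead first establish the existence of the displacement $t$, then perform the (at that point genuinely routine) unfolding of $\models^r$ for $\rho\{\alpha\conc x/t\}$, treating separately the vacuous case in which $\rho(\alpha\conc\beta)$ is undefined.
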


We are finally in the position to prove the \emph{soundness} theorem, by an easy induction on proofs which---we remark once again---strictly mimics the standard proof of soundness for the first order sequent calculus. 

\begin{theorem}[soundness]
	Let $\sys\in\{ {\KK}, {\D}, {\T}, {\K4}, {\S4}  \}$ be a modal system
If $\Gamma\vdash\Delta$ is derivable in $2_\sys$ then $\Gamma\models_{\sys}\Delta$.
\end{theorem}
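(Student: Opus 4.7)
The plan is an induction on the height of $\Pi$, modelled on the standard soundness proof for first-order sequent calculus, with the partial assignment $\rho\colon\pos\rightharpoonup\tree$ playing the role of a first-order environment. The axiom case is immediate: $\rho\models^\ell \pf{A}{\alpha}$ forces $\rho(\alpha)\downarrow$ and $\mdl{\tree},\rho(\alpha)\models A$, so $\rho\models^r\pf{A}{\alpha}$. Structural rules fall out of the satisfaction clause for sequents, and the propositional rules are handled by routine case analysis on the principal connective, using the fact that $\models^\ell$ and $\models^r$ agree on any p-formula whose position lies in the domain of $\rho$.

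For the cut rule, when $\rho$ is total (the case of $2_\D, 2_\T, 2_\S4$) the two notions of satisfaction collapse and the standard chaining argument applies. For $2_\KK$ and $2_\K4$, I would exploit the syntactic cut-constraint $\alpha\in\iniz{\Gamma_1,\Delta_1}$ or $\alpha\in\iniz{\Gamma_2,\Delta_2}$ together with the monotonicity clause on $\rho$ (if $\rho(\gamma)\downarrow$ and $\beta\less{}\gamma$ then $\rho(\beta)\downarrow$) to force $\rho(\alpha)\downarrow$ in any scenario where the two premises must be chained; at that point $\models^\ell$ and $\models^r$ coincide on the cut formula $\pf{A}{\alpha}$ and the usual argument goes through.

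The modal rules are the heart of the argument, and I would follow the formal analogy with first-order quantifier rules that has been the leitmotiv of the paper. For $\vdash\Box$ and $\Diamond\vdash$, the eigenposition freshness $\alpha\conc x\notin\iniz{\Gamma,\Delta}$ guarantees that for every $t\in\Theta_\sys$ the modified assignment $\rho\{\alpha\conc x/t\}$ agrees with $\rho$ on every position appearing in the context; combining this with the induction hypothesis on the premise and Lemma~\ref{lemma:sub1} yields $\rho\models^r\pf{\Box A}{\alpha}$ (and dually for $\Diamond\vdash$, where $\rho\models^\ell \pf{\Diamond A}{\alpha}$ already forces $\rho(\alpha)\downarrow$, so $\models^\ell$ and $\models^r$ coincide there). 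For $\Box\vdash$ and $\vdash\Diamond$, Lemma~\ref{lemma:sub2} rewrites the premise's assertion at $\alpha\conc\beta$ as one at $\alpha\conc x$ under $\rho\{\alpha\conc x/\rho(\beta)\}$, which is exactly the semantic counterpart of a $\Box$-elimination step with witness $\rho(\beta)$.

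The main obstacle will be the parametric bookkeeping: verifying, system by system, that the syntactic constraint on $\beta$ tabulated in Figure~\ref{Fig-constraints} (singleton for $\KK$ and $\D$, nonempty for $\K4$, empty or singleton for $\T$, unconstrained for $\S4$) together with the context-witness clause for $\KK$ and $\K4$ really matches the conditions on $\rho$ listed in the table of Section~\ref{semantics-sequents}, so that the instantiation $t=\rho(\beta)$ is always defined, lies in $\Theta_\sys$, and satisfies $\rho(\alpha) R t$. Once this matching is checked uniformly across the five systems the induction closes and soundness follows.
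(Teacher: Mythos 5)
Your proposal is correct and follows essentially the same route as the paper: induction on the derivation, with Lemmas~\ref{lemma:sub1} and~\ref{lemma:sub2} discharging the modal cases via the eigenposition side-condition and the system-by-system matching of the constraints on $\beta$ against the conditions on $\rho$ and $\Theta_\sys$. The only difference is one of coverage, not of method: the paper's sketch treats only $\vdash\Box$ and $\vdash\Diamond$, whereas you also spell out the cut case for $2_{\KK}$ and $2_{\K4}$ (where the constraint $\alpha\in\iniz{\Gamma_1,\Delta_1}$ or $\alpha\in\iniz{\Gamma_2,\Delta_2}$, together with downward closure of the domain of $\rho$, is exactly what rescues the chaining when $\rho(\alpha)$ might be undefined), which is a worthwhile addition.
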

\begin{proof}[Proof sketch.] By induction on the proof of $\Gamma\vdash\Delta$ in $2_\sys$. 
We  examine only the cases of $\vdash \Box$ and $\vdash\Diamond$.
\begin{description}
	\item[$\vdash\Box$] We observe first that the rule is the same for all the systems.

	$
	\forall \rho. \mdl{\tree},\rho\models \Gamma\vdash\pf{A}{\alpha\conc x},\Delta
	$
	\\
	$\Leftrightarrow$
	\\ 
	 $\forall \rho. \mdl{\tree},\rho\models^\ell \Gamma,\neg\Delta\Rightarrow 
	 \mdl{\tree},\rho\models^r \pf{A}{\alpha\conc x}$.
	 \\
	 $\Leftrightarrow$ (by the genericity of $\rho$)
	 \\ 
	 $\forall \rho \forall t\in\Theta_{\sys} \mdl{\tree},\rho\{\alpha\conc x/t\}\models^\ell \Gamma, \neg\Delta\Rightarrow 
	 \mdl{\tree},\rho\{\alpha\conc x/t\}\models^r \pf{A}{\alpha\conc x}$\\
	 $\Leftrightarrow$ (since $\alpha\conc x \not\in \iniz{\Gamma,\Delta}$)
	 \\  
	 $\forall \rho. \mdl{\tree},\rho\models^\ell \Gamma, \neg\Delta\Rightarrow  \forall t\in\Theta_\sys 
	 \mdl{\tree},\rho\{\alpha\conc x/t\}\models^r \pf{A}{\alpha\conc x}$.\\
	 Now Lemma~\ref{lemma:sub1} gives the conclusion.
	\item[$\vdash\Diamond$] The rule have different constraints in different systems; we deal with the $2_{\textbf{K4}}$ case, the others being similar or easier. 
\\
	$
 \mdl{\tree},\rho\models \Gamma\vdash\pf{A}{\alpha\conc \beta},\Delta
	$
	\\
	$\Leftrightarrow$
	\\  
	$\mdl{\tree},\rho\models^\ell \Gamma,\neg\Delta\Rightarrow 
	\mdl{\tree},\rho\models^r \pf{A}{\alpha\conc \beta}$.
	\\
	$\Leftrightarrow$ (by Lemma~\ref{lemma:sub2})
	\\
	$\mdl{\tree},\rho\models^\ell \Gamma,\neg\Delta\Rightarrow 
	\mdl{\tree},\rho\{\alpha\conc x/\rho(\beta)\}\models^r \pf{A}{\alpha\conc x}$
	\\
	Observe now that the side condition of $\vdash\Diamond$ for $2_{\K4}$ implies that  $\rho(\beta)\downarrow$. Lemma~\ref{lemma:sub1}(2) allows to conclude.
\end{description}
\end{proof}

\begin{corollary}\label{cor:interpr}
 If $\vdash\pf{A}{\alpha}$ is derivable in $2_\sys$, then in the Hilbert-style presentation of $\sys$ we have  $\vdash_{\sys} A$.
\end{corollary}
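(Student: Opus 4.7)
The plan is to bridge the soundness theorem (just proved) with the standard completeness theorem for Kripke semantics, mediated by a syntactic reduction of the final position to $<\ >$. Concretely: reduce the given derivation to one with conclusion $\vdash \pf{A}{<\ >}$, invoke soundness to obtain validity in every $2_\sys$-structure, then invoke standard completeness to obtain $\vdash_\sys A$.

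For the reduction, I would iterate Lemma~\ref{lemma:prefix-replacement} to peel off one token of $\alpha = <z_1,\ldots,z_n>$ at a time. Taking $\delta = <z_1,\ldots,z_{n-1}>$, $z = z_n$, and $\beta = <\ >$ (after renaming eigenpositions via Proposition~\ref{teo:eigenpos1} so that none equals $\alpha$, and noting that no eigenposition can equal the empty sequence, since eigenpositions always have length at least one), a single application converts $\vdash \pf{A}{\alpha}$ into $\vdash \pf{A}{<z_1,\ldots,z_{n-1}>}$. Iterating $n$ times yields a $2_\sys$-proof of $\vdash \pf{A}{<\ >}$.

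By soundness, $\rho(<\ >)\!\downarrow$ implies $\mdl{},\rho(<\ >)\models A$ in every $2_\sys$-structure $\langle \mdl{},\rho\rangle$. To conclude $\mdl{}\models A$ for an arbitrary $\sys$-model $\mdl{}$, I would fix any node $t\in\tree$ and exhibit a $\rho$ with $\rho(<\ >) = t$ meeting the constraints of Section~\ref{semantics-sequents}. For \S4 and \T take $\rho$ constantly equal to $t$, valid because $\less{\tree}$ and $\suc{\tree}^0$ are reflexive; for \KK and \K4 take $\rho$ defined only at $<\ >$, so the successor constraint holds vacuously (any $\gamma\suc{\pos}\gamma'$ with $\rho(\gamma)\!\downarrow$ and $\rho(\gamma')\!\downarrow$ would force $\gamma=\gamma'=<\ >$, contradicting $\gamma\neq\gamma'$); for \D, extend $\rho$ inductively by setting $\rho(\gamma\conc <z>)$ to an arbitrary child of $\rho(\gamma)$ in $\tree$, which exists since a \D-tree has no leaves. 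Then soundness gives $\mdl{},t\models A$, hence $\mdl{}\models A$, and standard completeness yields $\vdash_\sys A$. The main obstacle is the iterated prefix replacement: one must track eigenposition conflicts across successive applications, but Proposition~\ref{teo:eigenpos1} supplies the renaming freedom; once $\alpha$ has been reduced to $<\ >$, the semantic witnesses are immediate in each of the five systems.
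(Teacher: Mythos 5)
Your proof is correct and follows exactly the route the paper intends for this corollary: apply the soundness theorem and then the standard Kripke completeness theorem for $\sys$. The preliminary reduction of the conclusion position $\alpha$ to $<\ >$ via iterated prefix replacement, and the explicit construction of evaluation functions $\rho$ sending $<\ >$ to an arbitrary node, are details the paper leaves implicit, but they are the right ones---without the reduction the semantic argument would not reach nodes of depth smaller than the length of $\alpha$ in the {\D}, {\KK} and {\K4} cases.
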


%\input{natdedsyst.tex}
%\input{normalization.tex}
% !TEX root = ./main.tex
\section{Discrete Linear Temporal Logic}\label{sec:discreteLTL}
In the previous sections we have exploited the notions of position as sequence of tokens. The present section will explore what kind of modalities we may express when positions are treated as \emph{finite sets}.

\subsection{Relaxing positions: Towards linear time}
For the purpose of this section, positions are finite sets of tokens. Or, more precisely, we  quotient p-formulas with respect to the equivalence relation generated by the following schemas:
$\pf{A}{\alpha\conc x\conc y\conc  t  } \sim \pf{A}{\alpha\conc y\conc x\conc \beta}$, 
$\pf{A}{\alpha\conc x\conc x\conc \beta} \sim \pf{A}{\alpha\conc x\conc \beta} $.
Taking as base system the calculus $2_{\S4}$, the modal rules may be reformulated as:
\begin{center}
$\urule %
  {\Gamma,\pf{A}{\alpha\cup\beta}\vdash\Delta } %
  {\Gamma,\pf{\Box A}{\alpha}\vdash\Delta } %
  {\quad\Box\vdash } $ %
\qquad\qquad
$\urule %
  {\Gamma\vdash\pf{A}{\alpha\cup \{x\}},\Delta} %
  {\Gamma\vdash\pf{\Box A}{\alpha},\Delta } %
  {\quad\vdash\Box } $ %
\\ \bigskip
$
\urule %
  {\Gamma\pf{A}{\alpha\cup \{x\} } \vdash \Delta} %
  {\Gamma \Diamond \pf{A}{\alpha} \vdash\Delta } %
  {\quad\Diamond\vdash} 
$
\qquad\qquad
$\urule %
  {\Gamma\vdash\pf{A}{\alpha\cup \beta},\Delta } %
  {\Gamma\vdash\pf{\Diamond A}{\alpha},\Delta} %
  {\quad\vdash \Diamond } $ %
\end{center}

\textbf{Constraints:} In the  rules $\vdash\Box$ and $\Diamond\vdash$,  $x$ does not occur in any position in $\Gamma,\Delta$ (we write for this: $x\not\in\Gamma,\Delta$).

%{\color{red}
%Let us take into account  the calculus $2_{S4}$  and try to answer the following question:\\
% 
%\textit{"What does it happen  if we relax positions in order to have finite sets instead of sequences?"
%Or,  more precisely, what does it happens if we assume to  quotient p--formulas with respect to the following equivalence relation
%$\pf{A}{\alpha\conc x\conc y\conc  t  } \sim \pf{A}{\alpha\conc y\conc x\conc \beta}$, 
%$\pf{A}{\alpha\conc x\conc x\conc \beta} \sim \pf{A}{\alpha\conc x\conc \beta} $?"}
%
%If positions are   finite sets of tokens we can reformulate the modal rule in the following way:
%}
%\begin{center}
%$\urule %
%  {\Gamma,\pf{A}{\alpha\cup\beta}\vdash\Delta } %
%  {\Gamma,\pf{\Box A}{\alpha}\vdash\Delta } %
%  {\quad\Box\vdash } $ %
%\qquad\qquad
%$\urule %
%  {\Gamma\vdash\pf{A}{\alpha\cup \{x\}},\Delta} %
%  {\Gamma\vdash\pf{\Box A}{\alpha},\Delta } %
%  {\quad\vdash\Box } $ %
%\\ \bigskip
%$
%\urule %
%  {\Gamma\pf{A}{\alpha\cup \{x\} } \vdash \Delta} %
%  {\Gamma \Diamond \pf{A}{\alpha} \vdash\Delta } %
%  {\quad\Diamond\vdash} 
%$
%\qquad\qquad
%$\urule %
%  {\Gamma\vdash\pf{A}{\alpha\cup \beta},\Delta } %
%  {\Gamma\vdash\pf{\Diamond A}{\alpha},\Delta} %
%  {\quad\vdash \Diamond } $ %
%\end{center}
%. SS
%
%\textbf{Constraints:} in the  rules $\vdash\Box$ and $\Diamond\vdash$ the token $x$ must verify that  $x\not\in\Gamma,\Delta$,
%{\color{red}
%where $x\not\in\Gamma,\Delta$ means that $x$ does not occur in any position in $\Gamma,\Delta$.
%}

Let us call $2_{\Q2}$ the resulting sequent calculus.  It is easy to see that, indeed, the characteristic axiom of the modal system \Q2 is provable in $2_{\Q2}$:

\def\qda{
\prova{ % premessa
  \pf{A}{\{ x,y\}} \vdash \pf{A}{\{ x,y \} }
  }
  { % conclusione
  \pf{\Box A}{\{ y \} }\vdash  \pf{A}{\{ x,y \} }
  }
  {  \Box\vdash } %nome 
}

\def\qdb{
\prova{ % premessa
  \qda
  }
  { % conclusione
  \pf{\Box A}{\{ y \} }\vdash  \pf{\Diamond A}{\{ x\} }
  }
  { \vdash\Diamond} %nome 
}

\def\qdc{
\prova{ % premessa
  \qdb
  }
  { % conclusione
  \pf{\Box A}{\{ y \} }\vdash  \pf{\Box\Diamond A}{\emptyset }
  }
  { \vdash\Box } %nome 
}

\def\qdd{
\prova{ % premessa
  \qdc
  }
  { % conclusione
  \pf{\Diamond\Box A}{\emptyset}\vdash  \pf{\Box\Diamond A}{\emptyset }
  }
  { \Diamond\vdash } %nome 
}

\def\qde{
\prova{ % premessa
  \qdd
  }
  { % conclusione
  \pf{\Diamond\Box A \to \Box\Diamond A}{\emptyset}
  }
  { \vdash\to } %nome 
}
\[
\qde
\]
It is well known that this axiom is used to prove \Q2 complete for Kripke models (see Section~\ref{subs:seman}) whose accessibility relation is a directed partial order.
The following theorem follows by a tedious routine.\footnote{The theorem is not needed for the rest of the paper. The calculus  $2_{\Q2}$ is presented only as an intermediate step towards Linear Time Logic.}
\begin{theorem}
 $\vdash\pf{A}{\alpha}$ is derivable in $2_{\Q2}$  iff 
 $\vdash_{\Q2} A$.
\end{theorem}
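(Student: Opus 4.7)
The plan is to prove both directions separately, following the template established for the systems of Section~3, but now adapted to set-valued positions.

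For the (completeness) direction $\vdash_{\Q2} A \Rightarrow\; \vdash\pf{A}{\alpha}$ derivable in $2_{\Q2}$, I would argue by induction on the Hilbert derivation in \Q2. The propositional axioms require only the propositional rules, applied at an arbitrary fixed position. For axiom \textbf{K}, the derivation given in Section~3.3 transfers verbatim, reading $<x>$ as $\{x\}$ and $<\ >$ as $\emptyset$. The axioms \textbf{T} and \textbf{4} similarly carry over once we observe that, since positions are sets, $\alpha \subseteq \alpha$ (covering \textbf{T}) and nested unions can be collapsed or extended (covering \textbf{4}). The .2 axiom $\Diamond\Box A \to \Box\Diamond A$ is precisely the derivation displayed just before the statement. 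For \textbf{MP}, cut is available (no constraint is imposed on it in $2_{\Q2}$, since the base system is $2_{\S4}$). For \textbf{NEC}, I would prove a lifting lemma analogous to Proposition~\ref{lift:basics}, showing that if $\Gamma\vdash\Delta$ is provable, so is $\Gamma[\emptyset\Rsh\beta]\vdash\Delta[\emptyset\Rsh\beta]$; a routine induction on the derivation suffices, with the only care being on the modal rules where eigentokens must be renamed to stay fresh with respect to $\beta$.

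For the (soundness) direction $\vdash\pf{A}{\alpha}$ derivable in $2_{\Q2} \Rightarrow\; \vdash_{\Q2} A$, the natural strategy is to route through the semantics, exactly as Corollary~\ref{cor:interpr} does in the basic setting. Recall that \Q2 is Kripke-complete with respect to directed reflexive-transitive frames. I would therefore define a $2_{\Q2}$-structure as a pair $\langle\mathcal{M}_{\Theta},\rho\rangle$ where $\mathcal{M}_\Theta$ is a \Q2-model and $\rho:\mathcal{P}\to\Theta$ is a total map from set-positions to nodes satisfying the monotonicity condition $\alpha\subseteq\beta \Rightarrow \rho(\alpha)\leq\rho(\beta)$, where $\leq$ is the accessibility of $\mathcal{M}_\Theta$. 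Satisfaction of $\pf{A}{\alpha}$ is then $\mathcal{M}_\Theta,\rho(\alpha)\models A$, extended to sequents in the standard way. Soundness then proceeds by induction on derivations, with the only nontrivial cases being the modal rules: for $\vdash\Box$ and $\Diamond\vdash$, the eigentoken side condition $x\not\in\Gamma,\Delta$ permits quantifying over all possible values of $\rho(\alpha\cup\{x\})$ ranging over successors of $\rho(\alpha)$; for $\Box\vdash$ and $\vdash\Diamond$, one must check that $\rho(\alpha\cup\beta)$ is an accessible point from $\rho(\alpha)$, which is forced by monotonicity.

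The main obstacle is showing that such structures exist in sufficient abundance to refute every non-theorem of \Q2. Given a Kripke \Q2-countermodel $\mathcal{M}_\Theta$ with a world $w$ witnessing $\mathcal{M}_\Theta,w\not\models A$, I need to construct a total $\rho$ with $\rho(\alpha)=w$ such that the monotonicity condition is consistent with \emph{every} possible set-union of tokens. This is precisely where the directedness of \Q2-frames is indispensable: given already-assigned values $\rho(\alpha_1),\dots,\rho(\alpha_n)$ for pairwise overlapping positions, a witness for $\rho(\alpha_1\cup\cdots\cup\alpha_n)$ dominating all of them exists exactly because the frame is directed. A stepwise extension along an enumeration of $\mathcal{P}$, using directedness to pick each new value consistently above all previously-assigned $\rho(\alpha')$ with $\alpha'\subseteq\alpha$, produces the required $\rho$. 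Combined with soundness, this yields the contrapositive: if $\vdash\pf{A}{\emptyset}$ is not a $2_{\Q2}$-theorem then $A$ is not a \Q2-theorem.

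Putting the two directions together gives the theorem. The footnote suggests the authors regard these as routine once the preceding development is in place; indeed, the only genuinely new content beyond Section~4 is the directedness-driven construction of $\rho$ on the partial order of set-positions, and the verification that set equality makes the base-rules of $2_{\S4}$ sound in the presence of the idempotence and commutativity quotient on positions.
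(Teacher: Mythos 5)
The paper itself gives no proof of this theorem: it is explicitly waved off as following ``by a tedious routine'' and footnoted as unnecessary for the sequel, so the only thing to compare your proposal against is the template of Sections~3--4 --- which is exactly the template you follow (derive the axioms, lift, cut for \textbf{MP} in one direction; an adapted semantics of total monotone assignments plus Kripke completeness of \Q2 in the other). Your architecture is right and would work. Two corrections on where the weight actually falls. First, the step you single out as ``the main obstacle'' --- manufacturing enough structures to refute every non-theorem --- is in fact trivial: since the accessibility relation of a \Q2-model is reflexive, the \emph{constant} assignment $\rho(\beta)=w$ for every position $\beta$ is already total and monotone, so a world $w$ with $w\not\models A$ immediately yields a structure falsifying $\vdash\pf{A}{\alpha}$; no stepwise directedness construction is needed there.

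Second --- and this is where the real content of directedness hides --- the step you pass over with ``permits quantifying over all possible values of $\rho(\alpha\cup\{x\})$'' is the one that needs the directed frame. To prove $\vdash\Box$ (and $\Diamond\vdash$) sound you must, for an arbitrary successor $t$ of $\rho(\alpha)$, produce a \emph{total monotone} $\rho'$ that agrees with $\rho$ on every $x$-free position and satisfies $\rho'(\alpha\cup\{x\})=t$. Pinning $\rho'(\gamma\cup\{x\})$ for $\gamma\subseteq\alpha$ to $t$ is harmless, but for $\gamma\not\subseteq\alpha$ you need a common upper bound of $\rho(\gamma)$ and $t$; such a bound exists because both points dominate $\rho(\gamma\cap\alpha)$ and the frame is directed, and the coherent choice over all such $\gamma$ is made by induction on $|\gamma|$. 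This is also the honest explanation of why the same rule with set-valued positions is \emph{unsound} over plain \S4-frames (it derives $\Diamond\Box A\to\Box\Diamond A$), so the argument cannot be made without it. Make that re-assignment explicit and your proof is complete; everything else in your outline (axiom derivations, the lifting lemma for \textbf{NEC}, cut for \textbf{MP}, soundness of $\Box\vdash$/$\vdash\Diamond$ via monotonicity) is routine and correct.
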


%
%
%If we give a look to the axioms and rules $A3,A4, A5, \Box G$ for the future operator $\Box$ we immediately recognize the rules for $S4$.
%This suggest us that the candidate to handle this operator is $2_{S4}$.
%
%Moreover ... via completeness of $LTL$ w.r.t. discrete linear orders it is immediate to observe that the connectness formula $\Diamond \Box A \to \Box \Diamond A$ is valid namely.
%
%$\vdash_{LTL} \Diamond \Box A \to \Box \Diamond A $
%
%this means that we have to express at least the logic $S4.2$.
%
%In order to have a calculus for $S4.2$ we simply work with formulas modulo the following 
%
%or, in other words, we substitute the sequences of tokens with finite set of tokens, where the concatenation become the set union operation.
%
%The resulting rules for $\Box$ and $\Diamond$ are:
%
%
%
%\begin{remark}
%Please note that the above proof is \textit{incorrect} in $2_{S4}$, the substitution of seq	uences with sets is the key ingredient to prove $\Diamond\Box A \to \Box\Diamond A$.
%\end{remark}
%

\subsection{Axiomatic formulation of Linear Time Logic, \ltl}

The language of \ltl{}
  is a propositional  language with a
denumerable set $At$ of propositional letters, augmented with the
\textit{temporal}  operators
$\Box$ and $\circ.$ An axiomatization of \ltl{} (not a minimal one) is the following; we write $\vdash_{\ltl}$ for the provability relation in this system.

\bigskip
\noindent{\bf Axioms}
\begin{itemize}
\item[A0] All temporal instances of first order classical tautologies.
\item[A1]
  $\circ(A\to B)\to(\circ A\to\circ B)$
\item[A2] $\lnot\circ A\to\circ\lnot A$
\item[A3]
  $\Box(A\to B)\to(\Box A\to\Box B)$
\item[A4] $\Box A\to A$
\item[A5] $\Box A\to\Box\Box A$
\item[A6] $\Box A\to\circ A$
\item[A7] $\Box A\to\circ\Box A$
\item[A8] $ A\land\Box( A\to\circ A)\to\Box A$
\end{itemize}

\bigskip
\noindent{\bf Rules}

$$\prova{\vdash A\to B \quad \vdash A}{\vdash B}{\rm MP}\qquad\quad \prova{\vdash A}{\vdash\circ A}{\circ \rm G} \qquad\quad \prova{\vdash A}{\vdash\Box  A}{\Box \rm G}$$

\bigskip
From a semantical point of view, \ltl{} is complete with respect to Kripke models where the accessibility relations are discrete linear orders isomorphic to $\NN$.
Given the frame $\NN$ of natural numbers,  a
map $v:\NN\to 2^{At}$ and a natural number $m$, the relation of
satisfiability  by the 
model  $\NN_v=\langle{\NN }, v\rangle$ of a temporal formula $A$ at time $m$
(notation: $\NN_v\forces_m A$) is defined by induction on the complexity of
$A$ in the standard way. We recall here the definition for modalities.

\begin{enumerate}
\item[] $A$ is $\Box B$:\quad   ${\NN }_v\forces_m A\ \Leftrightarrow\ {\NN}_v\forces_n  B \mbox{\ for all\ } n\ge m;$ 
\item[] $A$ is $\circ  B$:\quad  ${\NN }_v\forces_m A\ \Leftrightarrow\ {\bf
  N}_v\forces_{m+1}  B.$ 
\end{enumerate}

\begin{remark}
The purpose of this paper is to provide a uniform proof theoretic treatment of the modal standard universal (namely  $\Box$) and existential (namely $\Diamond$) quantifiers, in various contexts, from the simplest modal logics (the minimal K system) to multimodal systems like \ltl.
For this reason, following~\cite{BarMas:apal, BaMaJANCL13, MVVJANCL2101, MVVJLC2011,BaMaAML2004}, we study only the \textsf{Until}-free fragment of \ltl. Indeed, \textsf{Until} is complex, as it is both existential and universal at the same time:  $A\ \textsf{Until}\ B$ holds at the current time instant $w$ iff either $B$ holds at $w$ or there exists an instant $w'$ in the future at which $B$ holds and such that $A$ holds at all instants between $w$ and $w'$. 
The treatment of \textsf{Until} would make us deviate significantly from the objectives of the paper and is left for further work.	
\end{remark}

The above axiom schemas  and rules are complete in a sense made precise by the
following (see \cite{emerson:handbook}):

\begin{theorem} For every temporal formula $A$
$$\vdash_{\ltl}A\ \Leftrightarrow\ 
{\NN }_v \forces_0 A \mbox{\ for all\ }v:{\NN }\to 2^{At}.$$
\end{theorem}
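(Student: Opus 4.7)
The plan is to prove both directions separately, with the left-to-right direction (soundness) being routine and the right-to-left direction (completeness) carrying all the real content.

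For soundness ($\Rightarrow$), I would proceed by induction on the length of an axiomatic derivation, checking that each axiom is valid at every instant $m \in \NN$ (not just at $0$, so that the generalisation rules preserve the property). Axiom A1 is the $\circ$-analog of the $\KK$ axiom and follows from the functionality of the successor; A2 similarly uses the fact that there is exactly one successor; A3 is the $\KK$ axiom for $\Box$; A4 follows from the reflexivity of $\geq$; A5 from transitivity; A6 from $m+1 \geq m$; A7 from the fact that $n \geq m+1$ implies $n \geq m$; finally A8 is the induction schema, verified by a short induction on $\NN$: assuming $A$ holds at $m$ and $A \to \circ A$ holds at every $n \geq m$, an easy induction shows $A$ holds at every $n \geq m$. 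The rules MP, $\circ$G and $\Box$G obviously preserve validity at every instant.

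For completeness ($\Leftarrow$), I would use the standard canonical model construction. Take as worlds the maximal $\ltl$-consistent sets of formulas, and define two accessibility relations by $w R_\circ w' \iff \{B : \circ B \in w\} \subseteq w'$ and $w R_\Box w' \iff \{B : \Box B \in w\} \subseteq w'$. Using A1 and A2, show that $R_\circ$ is a total function; using A3--A5, show $R_\Box$ is reflexive and transitive; using A6 and A7, show $R_\circ \subseteq R_\Box$ and that $R_\circ$ composed with $R_\Box$ is contained in $R_\Box$; and using the induction schema A8, show that $R_\Box$ is exactly the reflexive-transitive closure of $R_\circ$. The truth lemma then follows in the usual way (existence lemma for $\Box$ requires the standard argument that $\{B : \Box B \in w\} \cup \{\neg A\}$ is consistent whenever $\Box A \notin w$).

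The main obstacle, and the step that really needs A8, is transferring this completeness result for the canonical model to completeness with respect to $\NN$-based models. Given a formula $A$ with $\not\vdash_{\ltl} A$, pick a maximal consistent set $w_0$ containing $\neg A$, and unravel the $R_\circ$-successor chain $w_0, w_1, w_2, \ldots$ with $w_{n+1}$ the unique $R_\circ$-successor of $w_n$. Define $v : \NN \to 2^{At}$ by $v(n) = w_n \cap At$. One then verifies, by induction on formulas, that $\NN_v \forces_n B \iff B \in w_n$; the cases for $\circ$ and $\Box$ use, respectively, the functionality of $R_\circ$ and the fact that $R_\Box$ coincides with the reflexive-transitive closure of $R_\circ$ in the canonical model, so $\Box B \in w_n$ iff $B \in w_m$ for every $m \geq n$. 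Hence $\NN_v \not\forces_0 A$, as required. Since this is a well-established argument, one may alternatively just invoke~\cite{emerson:handbook}.
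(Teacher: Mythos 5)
The paper does not actually prove this theorem: it is quoted as a known result with a pointer to the literature (\cite{emerson:handbook}), so your closing remark that one may simply invoke that reference is exactly what the authors do. Your soundness direction is unproblematic (validity of each axiom at every instant, A8 by an easy induction on $\NN$, the rules preserving validity).

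Your completeness sketch, however, has a genuine gap at its central step. You claim that in the canonical model built from \emph{all} maximal consistent sets the induction schema A8 forces $R_\Box$ to coincide with the reflexive--transitive closure of $R_\circ$. This is false, because the logic is not compact even in its \textsf{Until}-free fragment: every finite subset of $\{\neg\Box p\}\cup\{\circ^n p : n\in\NN\}$ is satisfiable, so some maximal consistent set $w_0$ contains all of it. Unravelling the $R_\circ$-chain from $w_0$ then gives $p\in w_m$ for every $m\ge 0$, yet $\Box p\notin w_0$, so your truth lemma ``$\Box B\in w_n$ iff $B\in w_m$ for all $m\ge n$'' already fails at the root of the chain. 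The point is that A8 only constrains formula-definable sets of worlds, and the set of worlds $R_\circ$-reachable from $w_0$ need not be definable in the canonical model. The standard repair is to work with a \emph{finite} canonical model: take maximal consistent subsets of a finite closure set of the target formula (closed under subformulas and under the unfolding $\Box B\leftrightarrow B\wedge\circ\Box B$); finiteness is what lets A8 force the needed coincidence of $R_\Box$ with the transitive closure of $R_\circ$, after which one unravels the resulting finite structure into an $\NN$-model. Without this filtration-style step (or an equivalent argument) the right-to-left direction is not established.
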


\subsection{Towards a sequent calculus}
An analysis of the axioms of the bimodal system \ltl{} makes clear that we must express both the behaviour of $\circ$ and $\Box$, \emph{per se}, and their mutual relations. 

\subsubsection{Next and Always}\label{Sec:next-always}

The treatment of the next operator $\circ$ is  simple---axioms $A1$, $A2$, and the inference rule $\circ G$ say that $\circ$ behaves as the \textit{necessity} operator of the modal system {\D}.
Moreover, axiom $A2$ says that $\circ$ is auto-dual and thus behaves both as the necessity and the possibly operator of  ${\D}$: 
no constraint is needed for the rule $\vdash\circ$. In the language of positions, this means that the positions for $\circ$ may be taken as the natural numbers (or, in other words, we can replace a position as a list of tokens with its length).
Consequently we have the rules:

\begin{center}
$\urule %
  {\Gamma,\pf{A}{n+1}\vdash\Delta } %
  {\Gamma,\pf{\circ A}{n}\vdash\Delta } %
  {\quad \circ \vdash } $ %
\qquad\qquad
$\urule %
  {\Gamma\vdash\pf{A}{n+1},\Delta} %
  {\Gamma\vdash\pf{\circ A}{n},\Delta } %
  {\quad\vdash\circ} $ %
\end{center}

As for the always operator $\Box$, we must at least  express ${\Q2}$, since \ltl{} is complete with respect to discrete total orders isomorphic to $\NN$, which trivially enjoy the property of directness. 
Consequently, positions and rules for $\Box$  must  inherit  those of $2_{\Q2}$.

\subsubsection{Interaction between $\circ$ and $\Box$}

Axioms $A6$, $A7$, and $A8$, tell us that ``semantically'' Always ($\Box$) behaves as the reflexive and transitive closure of Next ($\circ$).
To formalise this interaction between $\circ$ and $\Box$,  we extend the notion of position. Roughly speaking a position becomes a pair---one component is a set of tokens, needed to handle $\Box$ and its interaction to $\Diamond$ and $\circ$; the other component is a natural number, for handling $\circ$.

\subsection{The calculus $2_{\ltl}$}

\begin{definition}The set of \textit{positions for \ltl{}} is the set of 
   pairs $<n,S>$ where $n$ is a natural number and $S$ is a finite set of tokens
   from a
   denumerable set\linebreak $T=\{x_0,x_1,\ldots\}.$ 
\end{definition}

Let $s=<n,S>$ and $t=<m,T>$ be positions. For the sake of simplicity we introduce the following notation:

\begin{itemize}
\item $s\oplus t$ for $ <n+m,S\cup T>;$
\item if $T=\emptyset$, we write $s\oplus m$ for $s\oplus t$;
\item if $t=<0,\{x\}>$, we write $s\oplus x$ for $s\oplus t$;
\item if $t=<n,\{ \}>$, we abbreviate $t$ with $n$.
\item we let $s[t/x] =
\left\{
\begin{array}{cl}
<n+m,(S\setminus\{x\})\cup T> &\ \mbox{if}\ x\in S;\\
s & \ \mbox{otherwise.}
\end{array}
\right.$
\end{itemize}

The rules for Next ($\circ$) are the ones already discussed in Section~\ref{Sec:next-always}, acting only on the second component of positions:

\begin{center}
$\urule %
  {\Gamma,\pf{A}{s\oplus 1}\vdash\Delta } %
  {\Gamma,\pf{\circ A}{s}\vdash\Delta } %
  {\quad \circ \vdash } $ %
\qquad\qquad
$\urule %
  {\Gamma\vdash\pf{A}{s\oplus 1},\Delta} %
  {\Gamma\vdash\pf{\circ A}{s},\Delta } %
  {\quad\vdash\circ} $ %
\end{center}

Regarding $\Box$ and its dual $\Diamond$ (which must be introduced anyway, since we are in a classical setting), the question is more delicate. We already discussed why $\Diamond\vdash$ and $\vdash\Box$  are, \textit{de facto}, those for \Q2: we formulate them as operating on the  first component of  positions:

\begin{center}
$\urule %
  {\Gamma\vdash\pf{A}{s \oplus  x},\Delta} %
  {\Gamma\vdash\pf{\Box A}{s},\Delta } %
  {\quad\vdash\Box } $ %
\qquad \qquad
$\urule %
  {\Gamma,\pf{A}{s \oplus  x}\vdash\Delta} %
  { \pf{\Gamma,\Diamond A}{s}\vdash\Delta} %
  {\quad\Diamond\vdash} $ %
\end{center}
with the proviso that $x\not\in\Gamma,\Delta$.

The more delicate axioms $A6$ and $A7$ force to operate on both components of positions:

\begin{center}
$\urule %
  {\Gamma,\pf{A}{s \oplus t}\vdash\Delta } %
  {\Gamma,\pf{\Box A}{s}\vdash\Delta } %
  {\quad\Box\vdash } $ %
\qquad\qquad
$\urule %
  {\Gamma\vdash\pf{A}{s \oplus t},\Delta } %
  {\Gamma\vdash\pf{\Diamond A}{s},\Delta} %
  {\quad\vdash \Diamond } $ %
\end{center}

\bigskip

\noindent These rules, however, do not validate axiom $A8$: $ A\land\Box( A\to\circ A)\to\Box A$. It is easy to see that $A8$ is the formal analogue (in the temporal setting) of an induction axiom for natural numbers (see the soundness theorem stated in Section~\ref{Sect:SemInduction} for the details).
A possible formulation of induction in a sequent calculus for \textbf{PA} (see e.g.~\cite{takeuti:pt}) is the following:
\[
\urule{
\Gamma,A(x) \vdash A(x+1), \Delta
}
{
\Gamma,A(0) \vdash A(t), \Delta
}
{\rm ind}
\]
where $x\not\in \Gamma,\Delta$. Following, once again, our formal analogy between first-order variables and positions, we may express axiom $A8$ by means of the following rule:

\[
\urule{
\Gamma, \pf{A}{s \oplus x} \vdash  \pf{A}{s \oplus x\oplus 1} , \Delta
}
{
\Gamma,\pf{A}{s} \vdash \pf{A}{s\oplus t}, \Delta
}
{\rm IND}
\]
where $x\not\in s, \Gamma,\Delta$.

Figure~\ref{fig:temprule} summarises the full set of rules of System $2_{\ltl}$.
 
	\begin{figure}[htb!]
\subsubsection*{Identity rules, Structural rules, Propositional rules }
		Those of the systems of Section~\ref{fig:2s4rules}, formulated with the new notion of position.
		
\subsubsection*{Temporal rules}
		\begin{center}
			$\urule %
			{\Gamma,\pf{A}{s\oplus t}\vdash\Delta } %
			{\Gamma,\pf{\Box A}{ s }\vdash\Delta } %
			{\quad\Box\vdash } $ %
			\qquad\qquad
			$\urule %
			{\Gamma\vdash\pf{A}{ s \oplus x},\Delta} %
			{\Gamma\vdash\pf{\Box A}{ s },\Delta } %
			{\quad\vdash\Box } $ %
			\\ \bigskip
			$\urule %
			{\Gamma,\pf{A}{ s \oplus x}\vdash\Delta} %
			{ \pf{\Gamma,\Diamond A}{ s }\vdash\Delta} %
			{\quad\Diamond\vdash } $ %
			\qquad\qquad
			$\urule %
			{\Gamma\vdash\pf{A}{ s \oplus t},\Delta } %
			{\Gamma\vdash\pf{\Diamond A}{ s },\Delta} %
			{\quad\vdash \Diamond } $ \\
		\bigskip
		$\urule %
		{\Gamma,\pf{A}{ s \oplus 1}\vdash\Delta} %
		{ \pf{\Gamma,\circ A}{ s }\vdash\Delta} %
		{\quad\circ \vdash } $ %
		\qquad\qquad
		$\urule %
		{\Gamma\vdash\pf{A}{ s \oplus 1},\Delta } %
		{\Gamma\vdash\pf{\circ A}{ s },\Delta} %
		{\quad\vdash \circ } $ \\
			\bigskip
			$\urule{
				\Gamma, \pf{A}{s \oplus x} \vdash  \pf{A}{s \oplus x\oplus 1} , \Delta
			}
			{
				\Gamma,\pf{A}{s} \vdash \pf{A}{s\oplus t}, \Delta
			}
			{\rm IND}
			$
		\end{center}
		\textbf{Constraints:} 
		
		In  rules $\vdash\Box$, $\Diamond\vdash$ and $\rm IND$,  $x \not\in s,\Gamma,\Delta $.

		\caption{Rules for the System $2_{\ltl}$}\label{fig:temprule}
	\end{figure}
	
	%%%%%%%%%%%%%%%%%%%%%%%%%%%%
	%%%%%%%%%%%%%%%%%%%%%%%%%%%%
	
 \subsection{Weak completeness}
 We show here 
  that the system $2_{\ltl}$ proves the same theorems of \ltl, i.e.  if $\ltl$ proves $A$, then $2_{\ltl}$  proves $\vdash A^{s}$ for a generic position $s$ (in particular $s=0$).

  The proof of axioms A1, A3 A4 and A5, is identical (up to the use of the new notion of positions) to the ones given for axioms \textbf{K}, \textbf{T} and \textbf{4}.

  %%%%%%%%%%%%%%%%%%%%%%%%%%%%%%%%%%%%%%%%%%%%
  
  \subsubsection*{Axioms A1, A3}
  \bigskip

   \def\kaa{
   	\prova{ % premessa
   		\pf{B}{s\oplus 1} \vdash \pf{B}{s\oplus 1}
   		\quad
   		\pf{A}{s\oplus 1} \vdash \pf{A}{s\oplus 1}
   	}
   	{ % conclusione
   		\pf{A}{s\oplus 1},  \pf{A\to B}{s\oplus 1}\vdash \pf{B}{s\oplus 1}
   	}
   	{ \to\vdash } %nome 
   }
   
   \def\kbb{
   	\prova{ % premessa
   		\kaa
   	}
   	{ % conclusione
   		\pf{A}{s\oplus 1},  \pf{\circ(A\to B)}{s}\vdash \pf{B}{s\oplus 1}
   	}
   	{ \circ\vdash } %nome 
   }
   
   \def\kcc{
   	\provas{ % premessa
   		\kbb	  }
   	{ % conclusione
   		\pf{\circ A}{s},  \pf{\circ(A\to B)}{s}\vdash \pf{B}{s\oplus 1}
   	}
   	{ \circ\vdash } %nome 
   }
   
   \def\kdd{
   	\prova{ % premessa
   		\kcc			
   	}
   	{ % conclusione
   		\pf{\circ A}{s},  \pf{\circ(A\to B)}{s}\vdash \pf{\circ B}{s}
   	}
   	{  \vdash \circ } %nome 
   }
   
   \def\kee{
   	\provas{ % premessa
   		\kdd
   	}
   	{ % conclusione
   		\pf{\circ(A\to B)}{s} \vdash \pf{\circ A \to \circ B}{s}
   	}
   	{ \vdash\to } %nome 
   }
   
   \def\kff{
   	\prova{ % premessa
   		\kee 
   	}
   	{ % conclusione
   		\vdash \pf{\circ(A\to B)\to(\circ A \to \circ B)}{s}
   	}
   	{ \vdash\to } %nome 
   }
   
   \def\ka{
   	\prova{ % premessa
   		\pf{B}{s\oplus x} \vdash \pf{B}{s\oplus x}
   		\quad
   		\pf{A}{s\oplus x} \vdash \pf{A}{s\oplus x}
   	}
   	{ % conclusione
   		\pf{A}{s\oplus x},  \pf{A\to B}{s\oplus x}\vdash \pf{B}{s\oplus x}
   	}
   	{ \to\vdash } %nome 
   }
   
   \def\kb{
   	\prova{ % premessa
   		\ka
   	}
   	{ % conclusione
   		\pf{A}{s\oplus x},  \pf{\Box(A\to B)}{s}\vdash \pf{B}{s\oplus x}
   	}
   	{ \Box\vdash } %nome 
   }
   
   \def\kc{
   	\provas{ % premessa
   		\kb	  }
   	{ % conclusione
   		\pf{\Box A}{s},  \pf{\Box(A\to B)}{s}\vdash \pf{B}{s\oplus x}
   	}
   	{ \Box\vdash } %nome 
   }
   
   \def\kd{
   	\prova{ % premessa
   		\kc			
   	}
   	{ % conclusione
   		\pf{\Box A}{s},  \pf{\Box(A\to B)}{s}\vdash \pf{\Box B}{s}
   	}
   	{  \vdash \Box } %nome 
   }
   
   \def\ke{
   	\provas{ % premessa
   		\kd
   	}
   	{ % conclusione
   		\pf{\Box(A\to B)}{s} \vdash \pf{\Box A \to \Box B}{s}
   	}
   	{ \vdash\to } %nome 
   }
   
   \def\kf{
   	\prova{ % premessa
   		\ke 
   	}
   	{ % conclusione
   		\vdash \pf{\Box(A\to B)\to(\Box A \to \Box B)}{s}
   	}
   	{ \vdash\to } %nome 
   }
   
  \[
  \kff \qquad \kf
  \]
  
%  \subsubsection*{Axiom D}
%  
%  \def\da{
%  	\prova{ % premessa
%  		\pf{A}{<x>}\vdash \pf{A}{<x>}
%  	}
%  	{ % conclusione
%  		\pf{\Box A}{<\ >}\vdash \pf{A}{<x>}
%  	}
%  	{ \Box\vdash } %nome 
%  }
%  
%  \def\db{
%  	\prova{ % premessa
%  		\da
%  	}
%  	{ % conclusione
%  		\pf{\Box A}{<\ >}\vdash \pf{\Diamond A}{<\ >}
%  	}
%  	{ \vdash\diamond } %nome 
%  } 
%  
%  \def\dc{
%  	\prova{ % premessa
%  		\db			
%  	}
%  	{ % conclusione
%  		\vdash \pf{\Box A \to \Diamond A}{<\ >}
%  	}
%  	{ \vdash\to } %nome 
%  }
%  
%  $$
%  \dc
%  $$
%  
  
\subsubsection*{Axioms A4 and A5}

  \def\ta{
  	\prova{ % premessa
  		\pf{A}{s}\vdash \pf{A}{s}
  	}
  	{ % conclusione
  		\pf{\Box A}{s}\vdash \pf{A}{s}
  	}
  	{ \Box\vdash } %nome 
  }
  \def\tb{
  	\prova{ % premessa
  		\ta
  	}
  	{ % conclusione
  		\pf{\Box A\to A}{s}
  	}
  	{ \vdash\to } %nome 
  }

  \def\qa{
  	\prova{ % premessa
  		\pf{A}{s\oplus y\oplus x}\vdash \pf{A}{s\oplus y\oplus x}
  	}
  	{ % conclusione
  		\pf{\Box A}{s}\vdash \pf{A}{s\oplus y\oplus x}
  	}
  	{ \Box\vdash } %nome 
  }
  
  \def\db{
  	\prova{ % premessa
  		\qa
  	}
  	{ % conclusione
  		\pf{\Box A}{s}\vdash \pf{\Box A}{s\oplus y}
  	}
  	{ \vdash\Box } %nome 
  } 
  
  \def\dc{
  	\prova{ % premessa
  		\db			
  	}
  	{ % conclusione
  		\pf{\Box A}{s}\vdash \pf{\Box \Box A}{s}
  	}
  	{ \vdash\Box } %nome 
  }
  
  \def\de{
  	\prova{ % premessa
  		\dc		
  	}
  	{ % conclusione
  		\vdash\pf{\Box A\to \Box \Box A}{s}
  	}
  	{ \vdash\to } %nome 
  }
  
  $$
   \tb \qquad \de
  $$

\subsubsection*{Axioms A2, A6, A7}

\def\ze{
	\provas{\pf{A}{s\oplus 1}\vdash \pf{A}{s\oplus 1}}{  \vdash \pf{A}{s\oplus 1}, \pf{\neg A}{s\oplus 1}}{ \vdash\neg }	
}
\def\zd{
	\provas{\ze}{  \vdash \pf{\circ A}{s}, \pf{\neg A}{s\oplus 1}}{ \vdash\circ }	
}

\def\zc{
	\provas{\zd}{  \vdash \pf{\circ A}{s}, \pf{ \circ\neg A}{s}}{ \vdash\circ }	
}

\def\zb{
	\prova{\zc}{ \pf{\neg\circ A}{s} \vdash \pf{ \circ\neg A}{s}}{\neg \vdash }	
}

\def\za{
\prova{\zb}{\vdash \pf{\neg\circ A\to \circ\neg A}{s}}{\vdash \to}	
}

% \pf{A}{s}\vdash \pf{A}{s}

\def\uc{
	\prova{ % premessa
		\pf{A}{s\oplus 1}\vdash \pf{A}{s\oplus 1}
	}
	{ % conclusione
		\pf{A}{s\oplus 1}\vdash \pf{\circ A}{s}
	}
	{ \vdash \circ} %nome 
}

\def\ub{
	\prova{ % premessa
		\uc
	}
	{ % conclusione
		\pf{\Box A}{s}\vdash \pf{\circ A}{s}
	}
	{ \Box\vdash } %nome 
}
\def\ua{
	\prova{ % premessa
		\ub
	}
	{ % conclusione
		\pf{\Box A\to \circ A}{s}
	}
	{ \vdash\to } %nome 
}

\def\wdd{
\prova{
	\pf{A}{s\oplus x \oplus 1} \vdash \pf{A}{s\oplus x \oplus 1}
}{\pf{\Box A}{s} \vdash \pf{A}{s\oplus x \oplus 1}}{\Box\vdash }	
}

\def\wc{
	\prova{ % premessa
	\wdd
	}
	{ % conclusione
		\pf{\Box A}{s}\vdash \pf{\Box A}{s\oplus 1}
	}
	{ \vdash \Box} %nome 
}

 \def\wb{
	\prova{ % premessa
		\wc
	}
	{ % conclusione
		\pf{\Box A}{s}\vdash \pf{\circ \Box A}{s}
	}
	{ \vdash \circ } %nome 
}
\def\wa{
	\prova{ % premessa
		\wb
	}
	{ % conclusione
		\pf{\Box A\to \circ \Box A}{s}
	}
	{ \vdash\to } %nome 
}

$$
\zb \quad \wa \quad \ua
$$

 %%%%%%%%%%%%%%%%%%%%%%%%%%%%%%%%%%%%%%%%%%%%
 
\subsubsection*{Axiom IND}
 
\def\aaa{
\prova{
 \pf{A}{s\oplus x \oplus 1} \vdash \pf{A}{s\oplus x \oplus 1} 
 }
 {
 \pf{\circ A}{s\oplus x} \vdash \pf{A}{s\oplus x \oplus 1} 
 }
 {\circ\vdash}
} 
 
\def\bbb{
\prova{
\aaa\quad \pf{A}{s\oplus x} \vdash \pf{A}{s\oplus x} 
}
{
\pf{A\to \circ A}{s\oplus x }, \pf{A}{s\oplus x} \vdash \pf{A}{s\oplus x \oplus 1} 
}
{\to\vdash}
 }
 
 \def\ccc{
 \prova{
 \bbb
 }
 {
 \pf{\Box(A\to \circ A)}{s} , \pf{A}{s\oplus x} \vdash \pf{A}{s\oplus x \oplus 1} 
 }
 { \Box\vdash}
 }
 
\def\ddd{
\prova{
  \ccc
  }
  {
  \pf{\Box(A\to \circ A)}{s} , \pf{A}{s} \vdash \pf{A}{s\oplus z} 
  }
  { IND}
}
 
\def\eee{
\prova{
  \ddd
  }
  {
  \pf{A\wedge \Box(A\to \circ A)}{s} \vdash \pf{A}{s\oplus z} 
  }
  { \wedge\vdash }
}

\def\fff{
\prova{
  \eee
  }
  {
  \pf{A\wedge \Box(A\to \circ A)}{s} \vdash \pf{\Box A}{s}
  }
  { \vdash\Box}
 }
 
\[
\prova{
  \fff
  }
  {
  \vdash \pf{A\land\Box( A\to\circ A)\to\Box A}{s}
  }
  { \vdash\to }
\]

\begin{proposition}[lift]
If the sequent $\vdash \pf{A}{s}$ is provable in $2_{\ltl}$, so is the sequent $\vdash \pf{A}{s\oplus t}$,  for each position $t$.
\end{proposition}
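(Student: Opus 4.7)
The plan is a straightforward adaptation of Proposition~\ref{lift:basics}: induct on the height of a $2_{\ltl}$-derivation $\Pi$ of $\vdash \pf{A}{s}$, uniformly ``lifting'' every p-formula $\pf{B}{u}$ appearing in $\Pi$ to $\pf{B}{t \oplus u}$, where $t$ is the fixed position given in the statement. The end-sequent then becomes $\vdash \pf{A}{t \oplus s}$, as required. More generally, the same argument applied to an arbitrary provable sequent $\Gamma \vdash \Delta$ yields a derivation whose end-sequent is obtained by replacing each position $u$ in $\Gamma,\Delta$ by $t \oplus u$.

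Most inductive cases are routine. Axioms, structural rules, and propositional rules act at a single position which is simply replaced by its lift. The unconstrained temporal rules $\Box\vdash$, $\vdash\Diamond$, $\circ\vdash$, $\vdash\circ$ go through because $\oplus$ is associative (and commutative on its set component), so a premise whose principal formula sits at position $s \oplus u$ is transformed into one at $(t \oplus s) \oplus u = t \oplus (s \oplus u)$, still fitting the rule schema.

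The delicate case is that of the rules with an eigenposition constraint, namely $\vdash\Box$, $\Diamond\vdash$, and $\mathrm{IND}$, whose side condition is $x \notin s,\Gamma,\Delta$. After lifting by $t$, the side condition that must be verified becomes $x \notin t \oplus s$ together with $x$ not occurring in the lifted contexts. To achieve this I would first rename all eigenpositions of $\Pi$, making them distinct from one another and from every token occurring in $t$; this is the analogue for $2_{\ltl}$ of Proposition~\ref{teo:eigenpos1} and Lemma~\ref{lemma:prefix-replacement}, and the same token-freshness argument applies, since each of the three constrained rules introduces exactly one eigenposition-token. Once the renaming is performed, $x \notin s$ together with $x \notin t$ gives $x \notin t \oplus s$, and $x \notin \Gamma,\Delta$ together with $x \notin t$ gives that $x$ is absent from the lifted $\Gamma$ and $\Delta$, so the lifted rule application is a legal instance of the same rule.

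The only potential obstacle is bureaucratic, namely checking that the eigenposition machinery transfers from the list-based to the set-and-counter notion of position used for \ltl. This is routine: $\oplus$ remains associative, the sets involved are finite, and the substitution-of-a-fresh-token lemma needed for renaming is in fact slightly simpler than in the sequence-based case, because occurrences of a token in a position are detected set-theoretically. With these preliminaries in place, the induction closes without further difficulty.
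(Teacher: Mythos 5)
Your proposal is correct and follows essentially the same route as the paper, which proves the statement exactly by redoing Proposition~\ref{lift:basics} for $2_{\ltl}$: a general lifting of all positions in a derivation of $\Gamma\vdash\Delta$ by $t$, made legal at the constrained rules ($\vdash\Box$, $\Diamond\vdash$, $\rm IND$) by first renaming eigenpositions to be fresh with respect to the tokens of $t$. The details you supply about commutativity and associativity of $\oplus$ and the freshness argument are precisely the routine verifications the paper leaves implicit.
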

\begin{proof}
As Proposition~\ref{lift:basics}, proving a general lifting property for sequents $\Gamma\vdash\Delta$ provable in $2_{\ltl}$, and exploiting renaming of eigenpositions. 
\end{proof}

As an immediate  consequence we have that

\begin{corollary}

If  $\vdash \pf{A}{<0,\emptyset >}$ is provable  
 so is the sequent 
$\vdash \pf{\Box A}{<0,\emptyset >}, \vdash \pf{\circ A}{<0,\emptyset >}$.
\end{corollary}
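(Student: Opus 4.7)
The plan is to reduce the corollary to the preceding lift proposition combined with a single application of the relevant right-introduction rule. Both halves are essentially one-step constructions, so the proposal is short.

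For the first conclusion, $\vdash\pf{\Box A}{<0,\emptyset>}$, I would pick a fresh token $x$ and take $t=<0,\{x\}>$, so that $<0,\emptyset>\oplus t = <0,\{x\}> = <0,\emptyset>\oplus x$. The lift proposition, applied to the assumed proof of $\vdash\pf{A}{<0,\emptyset>}$, then yields a proof of $\vdash\pf{A}{<0,\emptyset>\oplus x}$. Since the antecedent is empty and the succedent is a single p-formula whose position has already been accounted for, the constraint ``$x\not\in s,\Gamma,\Delta$'' for the rule $\vdash\Box$ (with $s=<0,\emptyset>$, $\Gamma=\Delta=\emptyset$) is vacuously satisfied. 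A single application of $\vdash\Box$ concludes with $\vdash\pf{\Box A}{<0,\emptyset>}$.

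For the second conclusion, $\vdash\pf{\circ A}{<0,\emptyset>}$, I would apply the lift proposition with $t=<1,\emptyset>$, obtaining a proof of $\vdash\pf{A}{<0,\emptyset>\oplus 1}$. The rule $\vdash\circ$ carries no eigenposition side-condition, so one application of $\vdash\circ$ yields $\vdash\pf{\circ A}{<0,\emptyset>}$, as required.

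The only thing to double-check is the vacuous verification of the eigenposition constraint in the $\vdash\Box$ step, which is immediate because the context is empty and $x$ is freshly chosen. There is no genuine obstacle here; the corollary is really just the temporal analogue of the corresponding corollary of Proposition~\ref{lift:basics} for the modal spectrum treated in Section~\ref{sec:2seq}, and the argument transfers verbatim once the new notion of position is in place.
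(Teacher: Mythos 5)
Your proposal is correct and is exactly the argument the paper intends: the corollary is stated there as an immediate consequence of the lift proposition, obtained by lifting with $t=\langle 0,\{x\}\rangle$ (resp.\ $t=\langle 1,\emptyset\rangle$) and closing with one application of $\vdash\Box$ (resp.\ $\vdash\circ$), the eigenposition constraint being vacuous in the empty context. Nothing is missing.
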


Finally we have:
%
%\begin{theorem}[weak completeness]
%$\vdash A$ then the sequent $\vdash \pf{A}{<0,\emptyset >}$ is provable 
%\end{theorem}

\begin{theorem}[weak completeness]
If $\vdash_{\ltl} A$, then the sequent $\vdash \pf{A}{<0,\emptyset >}$ is provable in $2_{\ltl}$.
\end{theorem}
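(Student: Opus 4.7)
The plan is a routine induction on the length of a Hilbert-style derivation of $A$ in \ltl, using the position $<0,\emptyset>$ as the ``initial'' position---but more generally, because of the need to handle the generalisation rules, showing at each step that if $A$ is derivable in \ltl\ then $\vdash \pf{A}{s}$ is derivable in $2_{\ltl}$ for every position $s$. This reduction is harmless, since the conclusion of the theorem is the special case $s = <0,\emptyset>$.

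The base cases require showing that every axiom schema $A0$--$A8$ is derivable in $2_{\ltl}$ at an arbitrary position $s$. For $A0$ (temporal instances of classical tautologies) this is standard: the propositional rules of $2_{\ltl}$ are literally the propositional rules of a classical sequent calculus, all acting on p-formulas carrying the same position $s$, so any classical propositional tautology translates directly into a cut-free derivation. For the genuinely temporal axioms $A1$ and $A3$ (the K axioms for $\circ$ and $\Box$), $A2$ (auto-duality of $\circ$), $A4$ (T for $\Box$), $A5$ (4 for $\Box$), $A6$ and $A7$ (interaction $\Box \to \circ$ and $\Box \to \circ\Box$), and $A8$ (the induction axiom), explicit derivations are already given immediately preceding the theorem statement, each using the new notion of position and the temporal rules of Figure~\ref{fig:temprule}. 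One simply observes that every such derivation goes through uniformly for an arbitrary $s$ in place of $<0,\emptyset>$.

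For the inductive step, the three inference rules must be shown to preserve derivability at every position. Modus ponens is handled by a single application of the cut rule: from derivations of $\vdash \pf{A \to B}{s}$ and $\vdash \pf{A}{s}$ (supplied by the induction hypothesis), a $\to\vdash$ rule and a cut yield $\vdash \pf{B}{s}$. For $\circ$G, from a derivation of $\vdash \pf{A}{s}$ obtained by induction, one applies the Lift Proposition to lift to $\vdash \pf{A}{s \oplus 1}$ and then the $\vdash\circ$ rule to conclude $\vdash \pf{\circ A}{s}$. For $\Box$G, the induction hypothesis provides $\vdash \pf{A}{s}$ for every $s$, so in particular $\vdash \pf{A}{s \oplus x}$ for a fresh token $x\not\in s$ (which will then be missing from the empty context), and a single $\vdash\Box$ application yields $\vdash \pf{\Box A}{s}$.

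The only delicate point, and hence the main obstacle, is ensuring that the eigenposition constraint $x \not\in s, \Gamma, \Delta$ of the $\vdash\Box$ rule (and, tacitly, of IND when proving $A8$) can always be met when applying the induction hypothesis. This is precisely what the lifting proposition and the eigenposition renaming convention (Proposition~\ref{teo:eigenpos1}, adapted to the new notion of position) were put in place for: by renaming eigenpositions in the derivation supplied by the induction hypothesis, we may always choose $x$ fresh with respect to the finite set of tokens appearing in $s$ and in the derivation, so the side condition is satisfied and each rule application is legal.
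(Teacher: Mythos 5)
Your proposal is correct and follows essentially the same route as the paper: derive each axiom schema at a generic position $s$ (the explicit derivations of $A1$--$A8$ precede the theorem), close under \textbf{MP} via cut, and close under the generalisation rules $\circ\mathrm{G}$ and $\Box\mathrm{G}$ via the lift proposition together with freshness/renaming of eigenpositions. Strengthening the induction hypothesis to all positions $s$ is exactly what the paper's lift proposition and its corollary accomplish, so there is nothing to add.
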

\bigskip

\subsection{Semantics and soundness}\label{Sect:SemInduction}
For the sake of simplicity, we take a version of $2_{\ltl}$ with an explicit version of  axiom $A8$, instead of rule $\rm IND$.
Let $2^i_{\ltl}$ be the system where  rule $\rm IND$ is replaced with the following schema, at any position $s$:
\[
{\pf{A\land\Box( A\to\circ A)\to\Box A}{s}}\qquad {IndAx}
\]
It is  simple to see that $2{\ltl}$ and $2^i_{\ltl}$ derives the same sequents.

\def\ia{
\prova{\iba \quad \ic}{
\Gamma,\pf{A}{s} \vdash \pf{A}{s\oplus t}, \Delta	
}
{\mbox{\tiny\it cut}}
}

\def\iba{
	\prova{\ibbs \quad \ibc}{
		\Gamma,\pf{A}{s} \vdash \pf{\Box A}{s}, \Delta	
	}
	{\mbox{\tiny\it cut}}	
}

\def\ibc{
\LT{
	\pf{A}{s}, \pf{\Box(A \to \circ A)}{s} \vdash \pf{\Box A}{s}
}{
\vdash \pf{A\land\Box( A\to\circ A)\to\Box A}{s}
}
{
\mbox{}
}
}

\def\ibbs{   %simone
   \prova
   {
   \ibb
   }
   {\Gamma\vdash\pf{\Box(A \to \circ A)}{s},\Delta
   }
   {\vdash\Box}
}

\def\ibb{
	\prova
	{
	\ibba	
	}
{
\Gamma\vdash\pf{A\to \circ A}{s\oplus x}, \Delta	
}
{\vdash \to}	
}

\def\ibba{
	\prova
	{
\Gamma, \pf{A}{s\oplus x}\vdash \pf{A}{s\oplus x\oplus 1}, \Delta
	}
	{
		\Gamma, \pf{A}{s\oplus x} \vdash\pf{\circ A}{s\oplus x}, \Delta	
	}
	{\vdash \circ}	
}

\def\ic{
	\prova{\pf{A}{s\oplus t} \vdash \pf{A}{s\oplus t}}{
		\pf{\Box A}{s} \vdash \pf{A}{s\oplus t}
	}
	{\Box \vdash}	
}

\begin{proposition}	
$2_{\ltl}$ derives $\Gamma\vdash \Delta$ iff $2^i_{\ltl}$ derives $\Gamma\vdash \Delta$.
\end{proposition}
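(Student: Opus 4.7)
The plan is to prove the two implications separately by induction on the derivation, noting that $2_{\ltl}$ and $2^i_{\ltl}$ share all rules except for the treatment of induction: $2_{\ltl}$ has the rule $\mathrm{IND}$ while $2^i_{\ltl}$ has the axiom scheme $\mathrm{IndAx}$, i.e.\ $\vdash \pf{A\land \Box(A \to \circ A)\to \Box A}{s}$ as an additional initial sequent. So the induction on derivations is trivial in every case except when the differing rule/axiom is used, and the task reduces to two local simulations.

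First I would handle the $(\Leftarrow)$ direction: any use of $\mathrm{IndAx}$ as an initial sequent in a $2^i_{\ltl}$ derivation can be replaced by a subderivation in $2_{\ltl}$, since the paper has already exhibited an explicit $2_{\ltl}$ proof of $\vdash \pf{A\land \Box(A \to \circ A)\to \Box A}{s}$ using the rule $\mathrm{IND}$ (the last derivation displayed in the subsection on weak completeness). Splicing this derivation in place of each occurrence of the axiom inside a $2^i_{\ltl}$ proof yields a $2_{\ltl}$ proof with the same endsequent.

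For the $(\Rightarrow)$ direction, I would show that every application of $\mathrm{IND}$ in a $2_{\ltl}$ derivation can be simulated in $2^i_{\ltl}$ using $\mathrm{IndAx}$, cut, and the other (shared) rules. Concretely, from the premise $\Gamma,\pf{A}{s\oplus x}\vdash \pf{A}{s\oplus x \oplus 1},\Delta$ of $\mathrm{IND}$, one applies $\vdash\circ$ to obtain $\Gamma,\pf{A}{s\oplus x}\vdash \pf{\circ A}{s\oplus x},\Delta$, then $\vdash\to$ to get $\Gamma\vdash \pf{A\to \circ A}{s\oplus x},\Delta$, then $\vdash\Box$ to get $\Gamma\vdash \pf{\Box(A\to \circ A)}{s},\Delta$. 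An instance of $\mathrm{IndAx}$ gives $\vdash \pf{A\land \Box(A\to \circ A)\to \Box A}{s}$, from which elementary propositional rules produce $\pf{A}{s},\pf{\Box(A\to \circ A)}{s}\vdash \pf{\Box A}{s}$; a cut combines the two to yield $\Gamma,\pf{A}{s}\vdash \pf{\Box A}{s},\Delta$. A further cut with the easily derivable $\pf{\Box A}{s}\vdash \pf{A}{s\oplus t}$ (from the axiom $\pf{A}{s\oplus t}\vdash \pf{A}{s\oplus t}$ by $\Box\vdash$) yields the conclusion $\Gamma,\pf{A}{s}\vdash \pf{A}{s\oplus t},\Delta$ of the $\mathrm{IND}$ rule. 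This simulation is essentially the derivation already sketched in the excerpt just before the proposition.

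The main delicate point is verifying that the eigenvariable-style side condition of $\mathrm{IND}$ (namely $x\notin s,\Gamma,\Delta$) is exactly what is needed for the step $\vdash\Box$ in the simulation, and that appropriate renaming of eigenpositions (as guaranteed by Proposition~\ref{teo:eigenpos1} for the original systems, and extended here by the same argument to $2_{\ltl}$) ensures no accidental capture occurs when the simulation is spliced into a larger derivation. Apart from this bookkeeping, the argument is routine: both translations preserve the endsequent and the induction on derivation structure goes through uniformly in all other rule cases.
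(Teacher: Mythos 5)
Your proposal is correct and follows essentially the same route as the paper: one direction splices in the $2_{\ltl}$ derivation of $\pf{A\land\Box(A\to\circ A)\to\Box A}{s}$ already given in the weak-completeness section, and the other simulates $\rm IND$ via $\vdash\circ$, $\vdash\to$, $\vdash\Box$ and two cuts against the $IndAx$ instance and against $\pf{\Box A}{s}\vdash\pf{A}{s\oplus t}$, which is exactly the derivation the paper displays. Your added remark that the side condition $x\not\in s,\Gamma,\Delta$ of $\rm IND$ is precisely what licenses the $\vdash\Box$ step is a worthwhile detail the paper leaves implicit.
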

\begin{proof}
	The ``only if part'' has already been  proved above.
	As for the ``if part'', the following derivation shows how to derive the conclusion of the $\rm IND$ rule from its premise and an instance of $IndAx$.
	\[
\ia
	\]
	\end{proof}

We now define the semantics of system $2_{\ltl}$ relative to the frame
$\langle {\NN },+,0,1\rangle.$ 
Models based on  $\langle {\NN },+,0,1\rangle$  are of the form $${\NN }_{a,v}=\langle {\NN},+,0,1,a:T\to{\NN },v:{\NN }\to 2^{At}\rangle.$$
In ${\NN }_{a,v}$ one can assign a value $a(s)\in {\NN }$ to each
position $s=<n,S>$: $$a(s)=n+\sum_{x\in S}a(x).$$

\begin{definition} The model ${\NN }_{a,v}$ {\em satisfies} the position formula $\pf{A}{s}$ (notation: ${\NN }_{a,v}\models{\pf{A}{s}}$) iff  ${\NN }_v\forces_{a(s)}A.$
\end{definition}

The semantics of position formulas is thus reduced to the standard semantics of \ltl, for example:
\begin{itemize}
\item ${\NN }_{a,v}\models \pf{\circ A}{s}$ iff 
	${\NN }_v\forces_{a(s)}\circ A$ iff
	${\NN }_v\forces_{a(s)+1} A$;
\item ${\NN }_{a,v}\models \pf{\Box A}{s}$ iff 
	${\NN }_v\forces_{a(s)}\Box A$ iff 
	$\forall n\geq 0. {\NN }_v\forces_{a(s)+n} A$;
\item ${\NN }_{a,v}\models \pf{\Diamond A}{s}$ iff 
	${\NN }_v\forces_{a(s)}\Diamond A$ iff 
	$\exists n\geq 0. {\NN }_v\forces_{a(s)+n} A$.
\end{itemize}
We extend the definition to sequents: 
$$
\NN_{a,v} \models \Gamma\vdash\Delta \Leftrightarrow (
\forall \pf{A}{s}\in\Gamma.\NN_{a,v}\models \pf{A}{s}
\Rightarrow \exists \pf{B}{t}\in\Delta.\NN_{a,v}\models \pf{B}{t}
 )
$$
and finally
$$
\Gamma\models_{2_{\ltl}}\Delta \Leftrightarrow \forall a,v.  \NN_{a,v}, \models \Gamma\vdash\Delta.
$$

As usual, towards soundness we need a substitution lemma. 

\begin{lemma}\label{lemma:subLTL} 
	\begin{enumerate}
	\item Let $x\not\in s$, then 
		${\NN}_{a,v} \models \Box \pf{A}{s} \Leftrightarrow 
		\forall n\in \NN. {\NN}_{a[x/n],v} \models   \pf{A}{s\oplus x}.$
	\item Let $x\not\in s$, then ${\NN}_{a,v} \models  \pf{\Diamond A}{s} \Leftrightarrow 
		\exists n\in \NN. {\NN}_{a[x/n],v} \models   \pf{A}{s\oplus x}.$
	\item ${\NN}_{a,v} \models   \pf{A}{s\oplus t}
		\Leftrightarrow
		{\NN}_{a[x/a(t)],v} \models   \pf{A}{s\oplus x}$
	\end{enumerate}
\end{lemma}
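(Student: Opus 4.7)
The plan is to reduce each clause to the standard LTL semantics over $\NN_v$ by computing how the assignment $a$ (and its update $a[x/n]$) behaves on compound positions of the form $s\oplus x$ and $s\oplus t$. The central observation is the additive decomposition of $a$: if $s=\langle m,S\rangle$ then $a(s)=m+\sum_{y\in S}a(y)$, so $a$ is a homomorphism with respect to $\oplus$ in the sense that $a(s\oplus s')=a(s)+a(s')$ whenever the token sets of $s$ and $s'$ are disjoint.

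First I would prove the key arithmetic identity
\[
a[x/n](s\oplus x) \;=\; a(s)+n \qquad \text{whenever } x\notin s.
\]
This is the workhorse of the lemma. Writing $s=\langle m,S\rangle$ with $x\notin S$, we have $s\oplus x=\langle m,S\cup\{x\}\rangle$, and by definition
\[
a[x/n](s\oplus x) \;=\; m + a[x/n](x) + \sum_{y\in S}a[x/n](y) \;=\; m + n + \sum_{y\in S}a(y) \;=\; a(s)+n,
\]
where the second equality uses $x\notin S$ so that $a[x/n](y)=a(y)$ for each $y\in S$.

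With this identity in hand, clauses (1) and (2) reduce immediately to the standard Kripke semantics of $\Box$ and $\Diamond$ on $\NN_v$. For (1), $\NN_{a,v}\models \Box\pf{A}{s}$ unfolds to $\NN_v\forces_{a(s)}\Box A$, which by the $\ltl$ clause for $\Box$ is equivalent to $\forall n\in\NN.\,\NN_v\forces_{a(s)+n}A$; using the identity, this is in turn equivalent to $\forall n\in\NN.\,\NN_{a[x/n],v}\models\pf{A}{s\oplus x}$. Clause (2) is analogous, replacing $\forall n$ with $\exists n$ and invoking the semantic clause for $\Diamond$.

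Clause (3) does not need the $\Box/\Diamond$ clauses at all: it is a pure calculation on $a$. From $a(s\oplus t)=a(s)+a(t)$ (since $s$ and $t$ may share tokens but this presents no problem because we then apply the identity to the fresh token $x$) and $a[x/a(t)](s\oplus x)=a(s)+a(t)$ by the identity above applied with $n=a(t)$, we conclude $a(s\oplus t)=a[x/a(t)](s\oplus x)$, hence $\NN_v\forces_{a(s\oplus t)}A \Leftrightarrow \NN_v\forces_{a[x/a(t)](s\oplus x)}A$, which is exactly the claim. The only mild subtlety, and the step I would be most careful with, is to insist on token freshness of $x$ in (3) (implicit in the use of the lemma, by eigenposition renaming as in Proposition~\ref{teo:eigenpos1}), so that the identity applies; once freshness is assumed, the argument is a one-line computation.
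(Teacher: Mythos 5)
The paper states this lemma without proof, so there is nothing to compare against on that side; your argument is the standard one the authors clearly intend, and for clauses (1) and (2) it is complete and correct. The key identity $a[x/n](s\oplus x)=a(s)+n$ for $x\notin s$ is exactly the right workhorse, and reducing $\models^{}$ on position formulas to $\forces_{a(s)}$ on $\NN_v$ and then invoking the semantic clauses for $\Box$ and $\Diamond$ is precisely how the lemma is meant to be discharged.

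The one loose end is in clause (3). You write $a(s\oplus t)=a(s)+a(t)$ and wave away token sharing, but the parenthetical does not actually repair anything: if $s=\langle m,S\rangle$, $t=\langle m',T\rangle$ and $S\cap T\neq\emptyset$, then $a(s\oplus t)=m+m'+\sum_{y\in S\cup T}a(y)$ counts each shared token once, whereas $a(s)+a(t)$ counts it twice, so the two sides of (3) would evaluate $A$ at different time points whenever a shared token has nonzero value. Applying the identity to the fresh $x$ only computes the right-hand side $a[x/a(t)](s\oplus x)=a(s)+a(t)$; it does nothing for the left. The honest fix is to observe that the decomposition of a position as $s\oplus t$ is a choice made when the $\vdash\Diamond$ or $\Box\vdash$ rule is applied, so one may always take $T$ disjoint from $S$ (replace $t$ by $\langle m',T\setminus S\rangle$, which denotes the same position $s\oplus t$), and state clause (3) under that disjointness hypothesis --- which is the reading the lemma implicitly assumes, parallel to the $x\notin s$ hypotheses in clauses (1) and (2). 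With that proviso made explicit, your one-line computation goes through.
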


The proof of the soundness theorem proceeds as in the standard case, but for temporal induction.
\begin{theorem}[soundness]
If $\Gamma\vdash\Delta$ is derivable in $2^i_{\ltl}$, then $\Gamma\models_{2_{\ltl}}\Delta$.
\end{theorem}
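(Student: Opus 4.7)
The plan is to proceed by induction on the derivation of $\Gamma\vdash\Delta$ in $2^i_{\ltl}$, showing that for every assignment $a:T\to\NN$ and valuation $v:\NN\to 2^{At}$, the model $\NN_{a,v}$ satisfies $\Gamma\vdash\Delta$. The identity, structural, and propositional rules are handled exactly as in the soundness proof for the 2-systems of Section~\ref{subs:seman}. For the temporal rules involving $\Box$ and $\Diamond$ (both introduction and elimination on left and right), I would mirror the arguments used for $\vdash\Box$ and $\vdash\Diamond$ in the previous soundness theorem, but replacing the role of the previous substitution lemmas with the three parts of Lemma~\ref{lemma:subLTL}. The freshness side condition $x\not\in s,\Gamma,\Delta$ in $\vdash\Box$ and $\Diamond\vdash$ is what allows the quantification over $n\in\NN$ in the substituted assignment $a[x/n]$ to range freely without disturbing the satisfaction of the context.

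For the rules $\circ\vdash$ and $\vdash\circ$, the argument is almost immediate: by construction $a(s\oplus 1)=a(s)+1$, and by definition $\NN_{a,v}\models \pf{\circ A}{s}$ iff $\NN_v\forces_{a(s)+1} A$ iff $\NN_{a,v}\models \pf{A}{s\oplus 1}$, so semantic satisfaction of the premise and the conclusion coincide for every model. The rules for $\Box$ and $\Diamond$ require a little more work: for instance, for $\vdash\Box$ we use Lemma~\ref{lemma:subLTL}(1) to reduce $\NN_{a,v}\models \pf{\Box A}{s}$ to $\forall n\in\NN.\ \NN_{a[x/n],v}\models \pf{A}{s\oplus x}$, and then invoke the induction hypothesis at each $a[x/n]$, which is legitimate since $x$ does not appear in $\Gamma,\Delta$ and therefore does not affect the semantic value of any p-formula in the context. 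The cases for $\vdash\Diamond$ and $\Box\vdash$ are symmetric, using Lemma~\ref{lemma:subLTL}(3) to absorb the explicit position $t$ into a fresh token $x$ with $a(x)=a(t)$, while $\Diamond\vdash$ uses part (2).

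The main obstacle, and the only genuinely new content, is validating the axiom schema $IndAx$, namely showing that $\NN_{a,v}\models \pf{A\land\Box(A\to\circ A)\to\Box A}{s}$ holds for all $a,v$. Unfolding definitions, one must show that whenever $\NN_v\forces_{a(s)} A$ and $\NN_v\forces_{a(s)} \Box(A\to\circ A)$, one also has $\NN_v\forces_{a(s)} \Box A$, i.e., $\NN_v\forces_{a(s)+n} A$ for every $n\in\NN$. This is established by a meta-level induction on $n$: the base $n=0$ is the first hypothesis, and at the inductive step one instantiates the second hypothesis at $a(s)+n$ to obtain $\NN_v\forces_{a(s)+n} A\to\circ A$; combining with the induction hypothesis $\NN_v\forces_{a(s)+n} A$ yields $\NN_v\forces_{a(s)+n}\circ A$, which by the semantics of $\circ$ is precisely $\NN_v\forces_{a(s)+n+1} A$. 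This meta-level induction on $\NN$ is the essential ingredient; once it is in hand, the soundness of $IndAx$ is immediate, and the overall induction on derivations closes.
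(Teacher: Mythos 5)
Your proposal is correct and follows essentially the same route as the paper: the modal cases are discharged via Lemma~\ref{lemma:subLTL} together with the freshness condition on $x$, and the soundness of $IndAx$ is reduced to induction over $\NN$ at the meta-level (the paper phrases this as membership in the set $[A]^{a(s)}_v=\{n: \NN_v\models_{a(s)+n}A\}$ satisfying the induction schema, which is exactly your step-by-step induction on $n$). No gaps.
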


\begin{proof}[Proof sketch.]
	We examine here only the cases of $\vdash \Box$,  $\vdash\Diamond$, and $IndAx$.
	\begin{description}
		\item[$\vdash\Box$] \mbox{}\\
		$\forall a,v. \NN_{a,v}\models \Gamma\vdash\pf{A}{s\oplus x},\Delta$
		\\
		$\Leftrightarrow$\\
	$\forall a,v. \NN_{a,v}\models \Gamma,\neg\Delta \Rightarrow  \NN_{a,v}\models \pf{A}{s\oplus x}$
	\\
	$\Leftrightarrow$\\
	$\forall n, a, v. \NN_{a[x/n],v}\models \Gamma,\neg\Delta \Rightarrow  \NN_{a[x/n],v}\models \pf{A}{s\oplus x}$
	\\
	$\Leftrightarrow$ (since $x\in \Gamma,\Delta$)\\
	$\forall n, a, v. \NN_{a,v}\models \Gamma,\neg\Delta \Rightarrow  \NN_{a[x/n],v}\models \pf{A}{s\oplus x}$
	\\
	$\Leftrightarrow$\\
	$\forall n, a, v. \NN_{a,v}\models \Gamma,\neg\Delta \Rightarrow  \NN_{a,v}\models \pf{\Box A}{s}$
	\\
	$\Leftrightarrow$\\
	$\forall a,v. \NN_{a,v}\models \Gamma\vdash\pf{\Box A}{s},\Delta$

%%%%%%%%%%%%%%%%%%
		\item[$\vdash\Diamond$] \mbox{}\\
		$\forall a,v. \NN_{a,v}\models \Gamma\vdash\pf{A}{s\oplus t},\Delta$
		\\
			$\Leftrightarrow$\\
			$\forall a,v. \NN_{a,v}\models \Gamma,\neg\Delta \Rightarrow  \NN_{a,v}\models \pf{A}{s\oplus t}$
			\\
			$\Leftrightarrow$\\
			$\forall a,v. \NN_{a,v}\models \Gamma,\neg\Delta \Rightarrow  \NN_{a[x/a(t)],v}\models \pf{A}{s\oplus x}$
			\\
			$\Rightarrow$\\
			$\forall a,v. \NN_{a,v}\models \Gamma,\neg\Delta \Rightarrow \exists m \NN_{a[x/m],v}\models \pf{A}{s\oplus x}$
			\\
			$\Leftrightarrow$\\
			$\forall a,v. \NN_{a,v}\models \Gamma,\neg\Delta \Rightarrow \NN_{a,v}\models \pf{\Diamond A}{s}$
			\\
			$\Leftrightarrow$\\
				$\forall a,v. \NN_{a,v}\models \Gamma\vdash\pf{\Diamond A}{s},\Delta$
		%%%%%%%%%%%%%%correttezza ind%%%%%%%%%%%%%%
		\item[indAx]		We need to prove that
		\[
		\forall v, a. \NN_{a,v}\models \pf{A\land\Box( A\to\circ A)\to\Box A}{s}.
		\]
Let us consider the set $[A]^k_v=\{n: \NN_v\models_{k+n} A  \}$.
With simple calculations we have that
	\[
	\forall v, a . \NN_{a,v}\models \pf{A\land\Box( A\to\circ A)\to\Box A}{s}
	\]
iff
	\[
	\forall v, a(0\in [A]^{a(s)}_v \& (x\in [A]^{a(s)}_v\Rightarrow (x+1)\in [A]^{a(s)}_v)
	\Rightarrow \forall y(y\in [A]^{a(s)}_v ).
	\]
\end{description}
\end{proof}

\begin{remark}
When $IndAx$ is instantiated to a propositional symbol $p$ for the generic formula $A$, it is immediate to see that
\[	
\forall v, a. \NN_{a,v}\models \pf{p\land\Box( p\to\circ p)\to\Box p}{0}
\]
iff
\[
\forall x (0\in [p]^{a(s)}_v \& (x\in [p]^{a(s)}_v\Rightarrow (x+1)\in [p]^{a(s)}_v)
\Rightarrow \forall y(y\in [p]^{a(s)}_v )
\]
iff
\[
\forall v.
0\in v^{-1}\{p\} \& \forall x (x\in v^{-1}\{p\} \Rightarrow x+1\in v^{-1}\{p\})
\Rightarrow \forall x. x\in v^{-1}\{p\}
\]
iff, for the genericity of $v$
\[
\forall S \subseteq \NN.
0\in S \& \forall x (x\in S \Rightarrow x+1\in S)
\Rightarrow \forall x. x\in S.
\]
In other words, in order to establish the soundness for \ltl{}, we have to assume full (second order) induction on the natural numbers.
\end{remark}	

\subsection{On cut and induction}
It is well known that an induction rule is a big obstacle for a full cut elimination (namely, a cut elimination with a subformula principle), or at least for a cut elimination with cut-rank bounded by a fixed integer number, usually called \emph{partial} cut elimination (see, for instance, the discussion in~\cite{Girard:ptlc}, pp. 123--125.)
This phenomenon is well known for \textbf{PA}, where a partial cut elimination would lead to a consistency proof of \textbf{PA} inside \textbf{PA} itself, thus contradicting the second incompleteness theorem.

From a  combinatorial point of view, the problem in proving cut-elimination is that permutative cuts are blocked by the induction rule.
 Even if $2_{\ltl}$ is (apparently) a weak logical system, it exhibits the same phenomenon. Indeed, let us consider the following proof:
 
 \def\x{
 \prova{
 	\pf{\circ p}{x\oplus 1} \vdash \pf{\circ p}{x\oplus 1}
 }
{
\pf{\circ \circ p}{x} \vdash \pf{\circ p}{x\oplus 1}
}
{
	\circ\vdash
}		
 }

\def\y{
	\prova{\x \quad \pf{p}{x} \vdash \pf{p}{x}}
	{
		\pf{p}{x}, \pf{p\to \circ\circ  p}{x} \vdash \pf{\circ p}{x\oplus 1}
	}{\to \vdash}
}
 
 \def\f{
 	\prova{
 		\y	\quad \pf{p}{x\oplus 1}\vdash \pf{p}{x\oplus 1}
 	}
 	{
 		\pf{p}{x},\pf{p}{x\oplus 1}, \pf{p\to \circ\circ p}{x}
 		\vdash 
 		\pf{ p\land \circ p}{x\oplus 1}	
 	}
 	{\vdash \land}
 }
 
 \def\e{
 \prova{
 \f	
 }
{
\pf{p}{x},\pf{\circ p}{x}, \pf{\Box(p\to \circ\circ p)}{0}
\vdash 
\pf{ p\land \circ p}{x\oplus 1}	
}
{\Box\vdash}	
 }

 \def\d{
 	\provas{
 		\e
 	}
 {
 	\pf{p\land \circ p}{x}, \pf{\Box(p\to \circ\circ p)}{0}
 	\vdash 
 	\pf{ p\land \circ p}{x\oplus 1}
 }
{\land\vdash, \land\vdash, contr\vdash}
 }

 \def\c{
\prova{\pf{p}{z}\vdash \pf{p}{z}}{\pf{p\land \circ p}{z}\vdash \pf{p}{z}}{\land\vdash}
 }
 \def\b{
 	\prova{
 	\d	
 	}
 {
 	\pf{p \land \circ p}{0}, \pf{\Box(p\to \circ\circ p)}{0}\vdash \pf{ p\land \circ p}{z}
 }
{IND}
 }

 \def\a{
 	\prova{
 		\b \quad \c
 	}{
 	\pf{p \land \circ p}{0}, \pf{\Box(p\to \circ\circ p)}{0}\vdash \pf{ p}{z}
 }{
 cut
}
 }
 
 \[
 \provas{
 	\a
 }
{
 \pf{p \land \circ p}{0}, \pf{\Box(p\to \circ\circ p)}{0}\vdash \pf{\Box p}{0}
}
{
	\vdash \Box
}
 \]
Observe now that the cut elimination procedure used in Section~\ref{Sect:cut-elim} fails. The induction on the p-formula $\pf{p \land \circ p}{x}$ is not breakable in  simpler  cases, and the cut cannot be eliminated.  
By generalizing the above example, it is possible to exhibit a series of proofs with non-eliminable cuts with unbounded rank.

\subsubsection{On a syntactical consistency proof  for \ltl}

That \ltl{} is consistent is evident by semantical methods. The situation becomes complicated if we want to prove $2_{\ltl}$ consistent by purely syntactical methods.
One possibility is  to give a natural deduction formulation of \ltl{} as we did in~\cite{BarMas:apal}, and prove its consistency by means of a strong normalization theorem (non formalizable in \textbf{PA}).
Another possibility is to give an infinitary formulation, as in~\cite{BaMaAML2004}, where cut elimination and the subformula principle hold.

A third possibility, once that we have a sequent system like  $2_{\ltl}$, is to use ordinal analysis, and try to mimic for $2_{\ltl}$ Gentzen's consistency proof for \textbf{PA} (see, e.g., Takeuti's book~\cite{takeuti:pt}). We leave this  open for further research on the topic.

\section{Taming the past: only a sketch}\label{sect:past}
As a final step of our journey,  we show how the notion of position can be further generalised, to capture an extension of \ltl{} with operators for  past and  future, that we will call here $\pltl$ (see e.g.~\cite{DGL}). 
Despite its physical and philosophical interest, a logic with both future and past operators did not receive much interest in the context of linear temporal logics for computer science. Temporal logics are investigated especially for their role in the verification of computer systems. There, it makes no sense to assume an unbound past;
on the other hand, \ltl{} with past \emph{and} a beginning instant has been shown no more expressive than \ltl{} (see~\cite{GPSS80,Gab87} where it is proved that any  {\ltl}\ +past property is  equivalent, when  evaluated at the beginning of time, to a suitable \ltl{} formula.)

In this section, to show how our method based on positions may accommodate various notions of modalities, we give a brief outline of how the deductive system of \ltl{} can be extended to model both unlimited future and unlimited past, thus adding to the language the operators: $\PBox$ (always in the past), $\PDiamond$ (sometimes in the past),  and $\Prev$ (at the previous time point).

\subsection{The calculus $2_{\pltl}$}
 For the sake of this section we assume the following definition.

\begin{definition} Given a denumerable set of tokens $\{x_0,x_1,\ldots\}.$, the set of \textit{positions} is the set of all
   triple $<r,S_1,S_2>$ where
   
\begin{itemize}
    \item $r\in \ZZ$ (which will responsible for next/previous steps);
    \item $S_1$ is a finite set of tokens (responsible for future);
    \item $S_2$ is a finite set of tokens (responsible for past).
\end{itemize} 
\end{definition}
\noindent We will use $l$, $m$, $n$ for ranging over $\NN$, and $q$, $r$, $s$ for ranging over $\ZZ$. 
We will need the following notation, for $T$ finite set of tokens:

%Let $s=<r,S_1,S_2>$ and $t=<m,T_1,T_2>$ be positions, for sake of simplicity 

\begin{enumerate}
\item 
$<r,S_1,S_2> \oplus <m,T> = <r+m, S_1-T, S_2\cup (T-S_1)>$
\item 
$<r,S_1,S_2>\ominus <m,T> = <r-m,S_1\cup (T-S_2), S_2-T>$
\item $<r,S_1 ,S_2> \oplus x = <r,S_1,S_2>\oplus<0,\{x\}>$ 
\item $<r,S_1 ,S_2> \ominus x = <r,S_1,S_2>\ominus<0,\{x\}>$
\item $<r,S_1 ,S_2> \oplus 1 = <r,S_1,S_2>\oplus<1,\emptyset>$
\item $<r,S_1 ,S_2> \ominus 1 = <r,S_1,S_2>\ominus<1,\emptyset>$
\end{enumerate}

The full set of rules of System $2_{\pltl}$ are given in Figure~\ref{fig:ptemprule}.
\begin{figure}[htb!]

\subsubsection*{Identity rules, Structural rules, Propositional rules }
		Those of system $2_{\ltl}$, formulated with the new notion of position.
\subsubsection*{Temporal rules}
\begin{center} % regole per \circ e \Prev
$\urule %
  {\Gamma,\pf{A}{s\oplus 1}\vdash\Delta } %
  {\Gamma,\pf{\circ A}{s}\vdash\Delta } %
  {\quad \circ \vdash } $ %
\qquad\qquad
$\urule %
  {\Gamma\vdash\pf{A}{s\oplus 1},\Delta} %
  {\Gamma\vdash\pf{\circ A}{s},\Delta } %
  {\quad\vdash\circ} $ %
  \\[3ex]
$\urule %
  {\Gamma,\pf{A}{s\ominus 1}\vdash\Delta } %
  {\Gamma,\pf{\Prev A}{s}\vdash\Delta } %
  {\quad \Prev \vdash } $ %
\qquad\qquad
$\urule %
  {\Gamma\vdash\pf{A}{s\ominus 1},\Delta} %
  {\Gamma\vdash\pf{\Prev A}{s},\Delta } %
  {\quad\vdash\Prev} $ %
  
\end{center}

\begin{center} % regole per \Box e \PBox
$\urule %
  {\Gamma,\pf{A}{s \oplus <m,T>}\vdash\Delta } %
  {\Gamma,\pf{\Box A}{s}\vdash\Delta } %
  {\quad\Box\vdash } $ %
\qquad \qquad
$\urule %
  {\Gamma\vdash\pf{A}{s \oplus  x},\Delta} %
  {\Gamma\vdash\pf{\Box A}{s},\Delta } %
  {\quad\vdash\Box } $ %
  \\[3ex]
$\urule %
  {\Gamma,\pf{A}{s \ominus <m,T>}\vdash\Delta } %
  {\Gamma,\pf{\PBox A}{s}\vdash\Delta } %
  {\quad\PBox\vdash } $    %
\qquad \qquad
$\urule %
  {\Gamma\vdash\pf{A}{s \ominus  x},\Delta} %
  {\Gamma\vdash\pf{\PBox A}{s},\Delta } %
  {\quad\vdash\PBox } $ %
\end{center}

\begin{center} % regole per \Diamond e \PDiamond
$\urule %
  {\Gamma,\pf{A}{s \oplus  x}\vdash\Delta} %
  { \pf{\Gamma,\Diamond A}{s}\vdash\Delta} %
  {\quad\Diamond\vdash} $  %
\qquad \qquad
$\urule %
  {\Gamma\vdash\pf{A}{s \oplus <m,T>},\Delta } %
  {\Gamma\vdash\pf{\Diamond A}{s},\Delta} %
  {\quad\vdash \Diamond } $ %
  \\[3ex]
$\urule %
  {\Gamma,\pf{A}{s \ominus  x}\vdash\Delta} %
  { \pf{\Gamma,\PDiamond A}{s}\vdash\Delta} %
  {\quad\PDiamond\vdash} $ %
\qquad \qquad
$\urule %
  {\Gamma\vdash\pf{A}{s \ominus <m,T>},\Delta } %
  {\Gamma\vdash\pf{\PDiamond A}{s},\Delta} %
  {\quad\vdash \PDiamond } $ %
\end{center}

\begin{center} % regole di induzione
$
  \urule{
  	\Gamma, \pf{A}{s \oplus x} \vdash  \pf{A}{s \oplus x\oplus 1} , \Delta
  }
  {
  	\Gamma,\pf{A}{s} \vdash \pf{A}{s\oplus <m,T>}, \Delta
  }
  {\rm IND}
  \qquad
  \urule{
  	\Gamma, \pf{A}{s \ominus x} \vdash  \pf{A}{s \ominus x\ominus 1} , \Delta
  }
  {
  	\Gamma,\pf{A}{s} \vdash \pf{A}{s\ominus <m,T>}, \Delta
  }
  {\rm PIND}
  $

\end{center}

\textbf{Constraints}: in $\Diamond\vdash$, $\PDiamond\vdash$, $\vdash\Box$, $\vdash\PBox$, $\rm IND$, $\rm PIND$:  $x\not\in s,\Gamma,\Delta$

\caption{System $2_{\pltl}$}\label{fig:ptemprule}
\end{figure}

%\[
%\urule{
%\Gamma, \pf{A}{s \oplus x} \vdash  \pf{A}{s \oplus x\oplus 1} , \Delta
%}
%{
%\Gamma,\pf{A}{s} \vdash \pf{A}{s\oplus <m,T>}, \Delta
%}
%{ind}
%\qquad
%\urule{
%\Gamma, \pf{A}{s \ominus x} \vdash  \pf{A}{s \ominus x\ominus 1} , \Delta
%}
%{
%\Gamma,\pf{A}{s} \vdash \pf{A}{s\ominus <m,T>}, \Delta
%}
%{Pind}
%\]
%where $x\not\in s, \Gamma,\Delta$

\subsection{Soundness}

Given the frame $\ZZ$ of integer numbers,  a
map $a:{\bf \ZZ}\to 2^{At}$, and an  integer $q$, the relation of
satisfiability  by the 
model  ${\ZZ}_a=\langle{\ZZ}, a\rangle$ of a temporal formula $A$ at time $q$
(notation: ${\ZZ}_a\forces_q A$) is defined by induction on the complexity of
$A$ in the standard way. For the sake of completeness we recall the definition in
the case of modalities.

\begin{enumerate}
\item[] $A$ is $\Box B$:\quad   ${\ZZ}_a\forces_q A\ \Leftrightarrow\ {\ZZ}_a\forces_r  B \mbox{\ for all\ } r\ge q;$ 
\item[] $A$ is $\circ  B$:\quad  ${\ZZ}_a\forces_q A\ \Leftrightarrow\ {\ZZ}_a\forces_{q+1}  B.$ 
\item[] $A$ is $\PBox B$:\quad   ${\ZZ}_a\forces_q A\ \Leftrightarrow\ {\ZZ}_a\forces_r  B \mbox{\ for all\ } r\le q;$ 
\item[] $A$ is $\Prev  B$:\quad  ${\ZZ}_a\forces_q A\ \Leftrightarrow\ {\ZZ}_a\forces_{q-1}  B.$ 
\end{enumerate}

%{
%\color{red}
%The following is the standard soundness and completeness result for $\pltl$ (see e.g.\cite{}\fbox{occorre trovare citazione}).
%}

It is only routine to prove that.
$$\vdash_{\pltl}A\ \Rightarrow  \ 
{\ZZ}_a \forces_q A \mbox{\ for all\ }a:{\ZZ}\to 2^{At} \mbox{\ and\ } q\in\ZZ.
$$

We fix the frame structure 
$\langle {\ZZ},+,-,0,1,-1\rangle$. Models based on this structure are of the form 
$${\ZZ}_{a,v}=\langle {\ZZ},+,-,0,1-1,,a:T\to{\ZZ},v:{\ZZ}\to 2^{At}\rangle.$$
In ${\ZZ}_{a,v}$ one can assign a value $a(s)\in {\ZZ}$ to each
position $s=<p,S_1,S_2>$: 
$$a(s)=n+\sum_{x\in S_1}a(x) - \sum_{y\in S_2}a(y).$$

\begin{definition} The model ${\ZZ}_{a,v}$ {\em satisfies} the formula $\pf{A}{s}$ (notation:\linebreak
${\bf N}_{a,v}\models{\pf{A}{s}}$) if and only if ${\bf N}_v\forces_{a(s)}A.$
\end{definition}

As for $\ltl$, by this definition the semantics of the position formulas of $\pltl$ is reduced to the standard semantics of \ltl. For instance: 
	\begin{itemize}
	\item ${\ZZ}_{a,v}\models \pf{\Prev A}{s}$ iff 
		${\ZZ}_v\forces_{a(s)}\Prev A$ iff
		${\ZZ}_v\forces_{a(s)-1} A$;
	\item ${\ZZ}_{a,v}\models \pf{\PBox A}{s}$ iff 
		${\ZZ}_v\forces_{a(s)}\PBox A$ iff 
		$\forall n\geq 0. {\ZZ}_v\forces_{a(s)-n} A$;
	\item ${\ZZ}_{a,v}\models \pf{\PDiamond A}{s}$ iff 
		${\ZZ}_v\forces_{a(s)}\PDiamond A$ iff 
		$\exists n\geq 0. {\ZZ}_v\forces_{a(s)-n} A$.
	\end{itemize}
We extend the definition to sequents as:
$$
\ZZ_{a,v} \models \Gamma\vdash\Delta \Leftrightarrow (
\forall \pf{A}{s}\in\Gamma.\NN_{a,v}\models \pf{A}{s}
\Rightarrow \exists \pf{B}{t}\in\Delta.\NN_{a,v}\models \pf{B}{t}
 ).
$$
We finally define:
$$
\Gamma\models_{\pltl}\Delta \Leftrightarrow \forall a,v.  \ZZ_{a,v}, \models \Gamma\vdash\Delta.
$$
	
\begin{lemma} 
	\begin{enumerate}
		\item Let $x\not\in s$, then 
			${\ZZ}_{a,v} \models \PBox \pf{A}{s} \Leftrightarrow 
			\forall n\in \NN. {\NN}_{a[x/n],v} \models   \pf{A}{s\ominus x}$;
		\item Let $x\not\in s$, then ${\ZZ}_{a,v} \models  \pf{\PDiamond A}{s} \Leftrightarrow 
			\exists n\in \NN. {\ZZ}_{a[x/n],v} \models   \pf{A}{s\ominus x}$;
		\item ${\ZZ}_{a,v} \models   \pf{A}{s\ominus t}
			\Leftrightarrow
			{\ZZ}_{a[x/a(t)],v} \models   \pf{A}{s\ominus x}$.
		\end{enumerate}
\end{lemma}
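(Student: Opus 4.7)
The plan is to mirror the proof of Lemma~\ref{lemma:subLTL}: unfold the 2-sequent satisfaction relation on position-formulas into the standard satisfaction on the integer frame $\langle\ZZ,+,-,0,1,-1\rangle$, and then match each of the three equivalences by a direct arithmetic computation on the time-index $a(s)\in\ZZ$.

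The first step would be to isolate two key identities about the position evaluation $a(\cdot)$: (i) if the token $x$ does not occur in $s$, then $a[x/n](s\ominus x)=a(s)-n$ for every $n\in\NN$; and (ii) $a(s\ominus t)=a(s)-a(t)$ for arbitrary positions $s,t$. Both follow by writing $s=\langle r,S_1,S_2\rangle$, applying the definition $s\ominus\langle m,T\rangle=\langle r-m,\,S_1\cup(T-S_2),\,S_2-T\rangle$, and computing $a(\cdot)=r+\sum_{S_1}a-\sum_{S_2}a$. Under the freshness hypothesis of (i), $x$ is disjoint from both $S_1$ and $S_2$, so the set operations collapse and the identity is immediate; identity (ii) exploits the disjointness of $S_1$ and $S_2$ in any well-formed position together with a short telescoping computation.

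Given these identities, each part of the lemma reduces to a one-line calculation. For part~(1), $\ZZ_{a,v}\models\pf{\PBox A}{s}$ unfolds to $\ZZ_v\forces_{a(s)}\PBox A$, which by the semantic clause for $\PBox$ is $\forall r\le a(s).\ \ZZ_v\forces_r A$; reparametrising $r=a(s)-n$ with $n\in\NN$ and invoking~(i) this rewrites exactly as $\forall n\in\NN.\ \ZZ_{a[x/n],v}\models\pf{A}{s\ominus x}$. Part~(2) is the existential dual, obtained by replacing $\forall r\le a(s)$ with $\exists r\le a(s)$ throughout, matching the semantic clause for $\PDiamond$. Part~(3) is the most direct: by~(ii) the left-hand side unfolds to $\ZZ_v\forces_{a(s)-a(t)}A$, and by~(i) with $n=a(t)$ so does the right-hand side.

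The only mildly technical point is the bookkeeping for identity~(ii), since the definition of $\ominus$ shuffles tokens across the $S_1$/$S_2$ components rather than simply subtracting a sum. I expect this to be the main---but routine---obstacle: under the freshness condition that suffices for~(i) everything collapses to a one-line computation, while the general case needed for part~(3) unfolds by a short case analysis on $T\cap S_1$ and $T\cap S_2$ using the disjointness of $S_1$ and $S_2$ in a well-formed position. Once these two identities are in place, the three equivalences follow by the same mechanical translation used for Lemma~\ref{lemma:subLTL}.
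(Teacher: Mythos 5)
Your strategy is the right one, and in fact the paper offers no proof of this lemma at all (nor of its future-time analogue, Lemma~\ref{lemma:subLTL}); the intended argument is precisely the reduction you describe: unfold $\models$ on p-formulas into $\forces_q$ on the frame $\ZZ$ and verify two arithmetic identities about $a(\cdot)$. Two caveats, both traceable to the paper's definitions rather than to your overall plan, need to be made explicit. First, with the definition of $\ominus$ as literally printed, a fresh token lands in the \emph{positively} counted component: $s\ominus x=\langle r,\,S_1\cup\{x\},\,S_2\rangle$, so the computation yields $a[x/n](s\ominus x)=a(s)+n$, the opposite sign from your identity (i). The lemma is true only under the evidently intended reading in which $\ominus$ places fresh tokens in the past component $S_2$ (equivalently, the set components in the printed clauses for $\oplus$ and $\ominus$ are interchanged). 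Your identity (i) silently presupposes that correction; as written, the claim that it is ``immediate'' from the definition is false, so you should state which reading of $\ominus$ you are using. Second, identity (ii), $a(s\ominus t)=a(s)-a(t)$ for arbitrary $t=\langle m,T\rangle$, does not survive the case analysis you defer to the end: if a token $y\in T$ already lies in the past component of $s$, the set union/difference absorbs it, its contribution to $a(s\ominus t)$ is unchanged, yet $a(t)$ still counts $a(y)$ once, so the identity fails by $a(y)$. Hence part (3) holds only under an implicit disjointness (freshness) hypothesis on $T$ relative to $s$ --- which is how the $\PBox\vdash$ and $\vdash\PDiamond$ rules are meant to be instantiated --- and your proof should impose that hypothesis rather than assert the unrestricted identity. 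With those two repairs, the three equivalences do follow by exactly the mechanical translation you outline.
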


\begin{theorem}[soundness]
	If $\Gamma\vdash\Delta$ is derivable in $2_{\pltl}$, then $\Gamma\models_{2_{\pltl}}\Delta$.
\end{theorem}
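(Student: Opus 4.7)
The plan is to proceed by induction on the height of the derivation of $\Gamma\vdash\Delta$ in $2_{\pltl}$, closely following the soundness proof for $2^i_{\ltl}$. For the identity, structural, propositional, and future temporal rules (including $\rm IND$), the arguments are exactly those already recorded, because the future component of the extended position $s = <r,S_1,S_2>$ is affected by $\oplus$ in the same way that ordinary $\ltl$ positions were, and the semantic base ${\ZZ}_{a,v}$ restricts to the same interpretation of future shifts as ${\NN}_{a,v}$. So no new ideas are needed for these cases.

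The six genuinely new rules are those governing the past modalities, and they all fall out of the substitution lemma stated immediately before the theorem. For $\vdash\PBox$: assume the premise $\Gamma\vdash\pf{A}{s\ominus x},\Delta$ is semantically valid, with $x\not\in s,\Gamma,\Delta$. Given $a,v$ with ${\ZZ}_{a,v}\models\Gamma,\neg\Delta$, freshness of $x$ means that this also holds under every modification $a[x/n]$; the induction hypothesis yields ${\ZZ}_{a[x/n],v}\models\pf{A}{s\ominus x}$ for every $n\in\NN$, and clause (1) of the lemma gives ${\ZZ}_{a,v}\models\pf{\PBox A}{s}$. For $\vdash\PDiamond$: given the premise with principal formula $\pf{A}{s\ominus <m,T>}$, clause (3) rewrites it as $\pf{A}{s\ominus x}$ under $a[x/a(<m,T>)]$, and then clause (2) introduces $\pf{\PDiamond A}{s}$ existentially. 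The dual left rules $\PBox\vdash$ and $\PDiamond\vdash$ are handled by mirror-image arguments. The rules $\vdash\Prev$ and $\Prev\vdash$ are immediate from the identity $a(s\ominus 1) = a(s) - 1$ together with the semantic clause ${\ZZ}_{a,v}\models\pf{\Prev A}{s}\Leftrightarrow{\ZZ}_v\forces_{a(s)-1}A$.

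The main obstacle is the soundness of $\rm PIND$, entirely parallel to the $\rm IND$ case that required second-order induction over $\NN$. Assume the premise $\Gamma,\pf{A}{s\ominus x}\vdash\pf{A}{s\ominus x\ominus 1},\Delta$ is valid with $x$ fresh, and fix $a,v$ with ${\ZZ}_{a,v}\models\Gamma,\neg\Delta,\pf{A}{s}$. Varying $a$ at $x$ (which is admissible by freshness), the premise yields the downward implication ${\ZZ}_v\forces_{a(s)-n}A\Rightarrow{\ZZ}_v\forces_{a(s)-n-1}A$ for every $n\in\NN$. A standard arithmetic induction then gives ${\ZZ}_v\forces_{a(s)-k}A$ for all $k\ge 0$, and the desired conclusion ${\ZZ}_v\forces_{a(s\ominus <m,T>)}A$ follows once the shift $a(s) - a(s\ominus <m,T>) = m + \sum_{y\in T}a(y)$ is seen to be non-negative, as is the intended meaning of the past-shift operator. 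As the authors note for $\rm IND$, this arithmetic induction is the essential second-order step, and is not eliminable without weakening soundness.
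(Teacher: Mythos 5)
Your proposal is correct in substance and, where the past modalities and the substitution lemma are concerned, it matches what the paper intends; note, though, that the paper's own ``proof'' is a one-sentence recipe that routes the argument differently. The paper does \emph{not} verify $\rm IND$ and $\rm PIND$ directly inside the induction on derivations: it first defines an equivalent system $2^i_{\pltl}$ in which both induction rules are replaced by the corresponding axiom schemas (as was done with $IndAx$ for $2_{\ltl}$), proves that the two systems derive the same sequents, and then establishes soundness of the axioms semantically, reducing each to second-order induction over $\NN$ via the sets $[A]^k_v$. You instead treat $\rm PIND$ as a rule and run the arithmetic induction on its premise directly. Both routes bottom out in the same non-eliminable induction over $\NN$; the paper's detour buys a cleaner inductive step (every rule of $2^i_{\pltl}$ other than the axioms is context-preserving in the standard way), while your version avoids having to prove the equivalence $2_{\pltl}\dashv\vdash 2^i_{\pltl}$, which for $\rm PIND$ requires exhibiting a derivation of its conclusion from its premise using the past induction axiom and a cut. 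One caveat on your $\rm PIND$ case: the conclusion needs $a(s)-a(s\ominus\langle m,T\rangle)\geq 0$, and likewise your $\vdash\PDiamond$ case needs the witness $a(\langle m,T\rangle)$ to lie in $\NN$; this holds only if token values are non-negative, which is what the substitution lemma's quantification $\forall n\in\NN$ presupposes, even though the paper's definition of the model literally writes $a:T\to\ZZ$. You are right to flag this as ``the intended meaning,'' but it is an assumption the paper's stated definitions do not quite deliver, and it should be made explicit.
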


The proof of the theorem proceeds, \emph{mutatis mutandis}, as for $2_{\ltl}$: 
define an equivalent system $2^i_{\pltl}$ where the induction rules are substituted with the corresponding axioms, prove the equivalence with $2_{\pltl}$, and establish the soundness of $2^i_{\pltl}$.

\subsection{Examples of derivations}
%
%As for $2_{\ltl}$ , we show  
%that if $\pltl$ proves $A$, then $2_{\pltl}$ proves $\vdash A^{s}$ for a generic position $s$. 
%We give only the proofs for axioms $\Box\PDiamond A$, $\PBox\Diamond$, $E\circ\Prev$, and $\Prev\circ A$, all the others being analogues  to those for \ltl{}.
%
%\subsubsection*{Axioms $\PBox\Diamond$ and $\Box\PDiamond$}
%Observe, first, that $<0,\emptyset,\emptyset>=<0,\{x\},\emptyset>\oplus<0,\{x\}>=<0,\emptyset, \{x\}>\ominus<0,\{x\}>$.
\def\BPa{
	\prova{ % premessa
		\pf{A}{<0,\emptyset,\emptyset>}\vdash \pf{A}{<0,\emptyset,\emptyset>}
	}
	{ % conclusione
		\pf{A}{<0,\emptyset,\emptyset>}\vdash \pf{\Diamond A}{<0,\{x\},\emptyset>}
	}
	{ \vdash\Diamond } %nome 
}

\def\BPb{
	\prova{ % premessa
		\BPa
	}
	{ % conclusione
		\pf{A}{<0,\emptyset,\emptyset>}\vdash \pf{\PBox\Diamond A}{<0,\emptyset,\emptyset>}
	}
	{ \vdash\PBox } %nome 
}

\def\BPbb{
	\prova{ % premessa
		\BPb
	}
	{ % conclusione
		\vdash \pf{A\to \PBox\Diamond A}{<0,\emptyset,\emptyset>}
	}
	{ \vdash\to} %nome 
}

\def\PBaa{
	\prova{ % premessa
		\pf{A}{<0,\emptyset,\emptyset>}\vdash \pf{A}{<0,\emptyset,\emptyset>}
	}
	{ % conclusione
		\pf{A}{<0,\emptyset,\emptyset>}\vdash \pf{\PDiamond A}{<0,\emptyset,\{x\}>}
	}
	{ \vdash\PDiamond } %nome 
}

\def\PBab{
	\prova{ % premessa
		\PBaa
	}
	{ % conclusione
		\pf{A}{<0,\emptyset,\emptyset>}\vdash \pf{\Box\PDiamond A}{<0,\emptyset,\emptyset>}
	}
	{ \vdash\Box } %nome 
}

\def\BPaaa{
	\prova{ % premessa
		\PBab
	}
	{ % conclusione
		\vdash \pf{A\to \Box\PDiamond A}{<0,\emptyset,\emptyset>}
	}
	{ \vdash\to} %nome 
}

\def\PCa{
	\prova{ % premessa
		\pf{A}{<0,\emptyset,\emptyset>}\vdash \pf{A}{<0,\emptyset,\emptyset>}
	}
	{ % conclusione
		\pf{A}{<0,\emptyset,\emptyset>}\vdash \pf{\Prev A}{<1,\emptyset,\emptyset>}
	}
	{ \vdash\Prev } %nome 
}

\def\PCb{
	\prova{ % premessa
		\PCa
	}
	{ % conclusione
		\pf{A}{<0,\emptyset,\emptyset>}\vdash \pf{\circ\Prev A}{<0,\emptyset,\emptyset>}
	}
	{ \vdash\circ } %nome 
}

\def\PCbb{
	\prova{ % premessa
		\PCb
	}
	{ % conclusione
		\vdash \pf{A\to \circ\Prev A}{<0,\emptyset,\emptyset>}
	}
	{ \vdash\to} %nome 
}

\def\CPaa{
	\prova{ % premessa
		\pf{A}{<0,\emptyset,\emptyset>}\vdash \pf{A}{<0,\emptyset,\emptyset>}
	}
	{ % conclusione
		\pf{A}{<0,\emptyset,\emptyset>}\vdash \pf{\circ A}{<-1,\emptyset,\emptyset>}
	}
	{ \vdash\circ } %nome 
}

\def\CPab{
	\prova{ % premessa
		\CPaa
	}
	{ % conclusione
		\pf{A}{<0,\emptyset,\emptyset>}\vdash \pf{\Prev\circ A}{<0,\emptyset,\emptyset>}
	}
	{ \vdash\Prev } %nome 
}
\def\CPaba{
	\prova{ % premessa
		\CPab
	}
	{ % conclusione
		\vdash \pf{A\to \Prev\circ A}{<0,\emptyset,\emptyset>}
	}
	{ \vdash\to} %nome 
}

As an example let us show the derivations of the basic axioms of tense logic with past and  future 
(see e.g.~\cite{Burgess}).
\bigskip
 
$$
\BPbb \qquad \BPaaa
$$

\bigskip 
$$
\PCbb \qquad \CPaba
$$

\bigskip\noindent
It is an open question to characterise the exact set of axioms (and thus of models) for which $2_{\pltl}$ is complete. We leave this to further, ongoing research on a wider range of tense logics with operators for  past and future.

\section{Conclusions and Future Work}\label{sec:conclusions}
We have presented several 2-sequent systems for \emph{classical} modal logics, ranging from the basic {\KK} to the more elaborate \ltl{}. The \emph{leitmotiv} of this journey has been the notion of \emph{position}, which allows to fully expose the formal proof-theoretical analogy between modalities and first order quantification. This analogy has been first exploited in our previous papers, that only dealt with classical {\D}, and the intuitionistic, $\Box, \to, \wedge$--fragments (no negation) of {\D}, {\K4}, {\T}, and {\S4}.
We stress again that our approach is motivated by purely proof-theoretical issues, thus different from all other proposals where annotations on formulas (or sequents) are explicitly meant to bring into the syntax the various properties of the Kripke models. It is a feature of our approach that some of the formal combinatorics of such semantical annotations are indeed reconstructed starting from just proof-theoretical motivations. 

In a companion paper we will introduce natural deduction systems for the same logics we presented here. Natural deduction calculi for {\D} and {\K4} are particularly challenging. Indeed, referring to the semantics of Section~\ref{semantics-sequents}, one sees that the mapping from positions to nodes may be undefined on some positions (equivalently, the accessibility relation $\R$ of the Kripke models may be partial). This calls for several constraints on the modal rules, which may be treated by using an ``existence predicate,'' first introduced by Dana Scott in~\cite{Scott1979} to deal with intuitionistic logic with partial terms. In our case, such existence predicated would be applied to positions, thus reinforcing the formal analogy between terms and positions. 

Once we have natural deduction calculi, we may consider their constructive versions, and the lambda-calculi that emerge in that way, by explicitating the proof-terms. Differently from the calculi in~\cite{MM:ComInt:95} (where we did not have a general enough notion of position), positions will not be decorations of terms, but terms themselves, and as such they may be manipulated by other lambda-terms. In a typed version, this will call for dependent types.

A final, interesting topic is to investigate syntactical consistency proofs for \ltl{}, exploiting the relations between rule IND and numerical, first-order induction.

% \nocite{*}

%%%%%%%%%%%%%%%%%%%%%%%%%%%%%%%%%%%%%%%%%%%%
%%\nocite{bmv:ndncl,bierdpav:intmodlog,davpfen:jacm,Dos85,emerson:handbook,GdQ92,Girard:ptlc,girgiIII,Gir:LLL:95,GMM:PNGarCom:97,GMM:an-exp-ext-seq,Kripke63,marchign:phd,MM:ONFineStr:95,MM:ComInt:95,Mas:TwoSeqProof:92,prawideas,takeuti:pt,troelstra-metamath,tvdII,Wan94}
%%%%%%%%%%%%%%%%%%%%%%%%%%%%%%%%%%%%%%%%%%%%

\bigskip

\bibliographystyle{acm}
\bibliography{biblio} 
\end{document}
